\newtheorem{theorem}{Theorem}
\newtheorem{lemma}{Lemma}
\newtheorem{notation}{Notation}
\newtheorem{proposition}{Proposition}
\newtheorem{remark}{Remark}
\numberwithin{equation}{section}
\numberwithin{theorem}{section}
\numberwithin{lemma}{section}
\numberwithin{proposition}{section}
\numberwithin{corollary}{section}
\numberwithin{remark}{section}
\begin{document}
\title[Continuous-Time Quantum Markov Chains]{Continuous-Time Quantum Markov Chains And Discretizations Of $p$-Adic Schrödinger Equations: Comparisons And Simulations}

\author[Z\'{u}\~{n}iga-Galindo]{W. A. Z\'{u}\~{n}iga-Galindo}
\address{University of Texas Rio Grande Valley. School of Mathematical \&
	Statistical Sciences. One West University Blvd. Brownsville, TX 78520,
	United States}
\email{wilson.zunigagalindo@utrgv.edu}

\author[Chac\'{o}n-Cort\'{e}s]{L. F. Chac\'{o}n-Cort\'{e}s}
\address{Pontificia Universidad Javeriana, Departamento de Matem\'{a}ticas,
	Cra. 7 N. 40-62, Bogot\'{a} D.C., Colombia}
\email{leonardo.chacon@javeriana.edu.co}

\begin{abstract}%
		The continuous-time quantum walks (CTQWs) are a fundamental tool in the development of quantum algorithms.  Recently, it was shown that discretizations of p-adic Schrödinger equations give rise to continuous-time quantum Markov chains (CTQMCs); this type of Markov chain includes the CTQWs constructed using adjacency matrices of graphs as a particular case. In this paper, we study a large class of p-adic Schrödinger equations and the associated CTQMCs by comparing them with p-adic heat equations and the associated continuous-time Markov chains (CTMCs). The comparison is done by a mathematical study of the mentioned equations, which requires, for instance, solving the initial value problems attached to the mentioned equations, and through numerical simulations. We conducted multiple simulations, including numerical approximations of the limiting distribution. Our simulations show that the limiting distribution of quantum Markov chains is greater than the stationary probability of their classical counterparts, for a large class of CTQMCs.
\end{abstract}
	

	\maketitle

	\section{Introduction}
	Quantum computing is a computing paradigm that uses quantum physics effects
	like superposition, interference, and entanglement to achieve computational
	power beyond classical computing; see, e.g., the recent survey \cite{Qian et
		al}, and the references therein. The continuous-time quantum walks (CTQWs) are
	the quantum counterparts of the classical random walks. CTQWs are a fundamental
	tool for developing efficient quantum algorithms, simulating complex quantum
	systems, and potentially achieving quantum advantages in various computational
	tasks. CTQWs offer potential advantages over classical algorithms,
	particularly in graph-based problems like spatial search, \cite{Farhi-Gutman}%
	-\cite{Zimboras et al}. Currently, there is a large variety of quantum
	physical systems that can be used for implementing quantum walks and their
	applications. For instance, trapped-ion and trapped-atom systems, bulk-optics systems, and integrated photonics systems. These considerable advances in  physical implementations of quantum walks are allowing the implementation of specialized quantum walk computing systems for applications of practical
 interest, \cite{Qian et al}, \cite{Manouchehri et al}.

	In the 1930s, Bronstein showed that general relativity and
	quantum mechanics imply that the uncertainty $\Delta x$ of any length
	measurement satisfies $\Delta x\geq L_{\text{Planck}}:=\sqrt{\frac{\hbar
			G}{c^{3}}}$, where $L_{\text{Planck}}\approx10^{-33}$ $cm$\ is the Planck
	length. This implies that space-time is not an infinitely divisible continuum,
	\cite{Bronstein}. Mathematically speaking, space-time must be a completely
	disconnected topological space. The ultrametric spaces are naturally
	completely disconnected. The field of $p$-adic numbers $\mathbb{Q}_{p}$ is a
	paramount example of an ultrametric space, with a very rich mathematical
	structure. There are several possible interpretations of the Bronstein
	inequality. One of them drives the loop quantum gravity. Volovich gave another interpretation of Bronstein's inequality in the 80s. The
	inequality mentioned implies that real numbers cannot be used in models at the
	level of Planck's length, because the Archimedean axiom, which appears
	naturally if we use real numbers, implies that lengths can be measured with
	arbitrary precision. Volovich proposed using $p$-adic numbers in physical
	models at the Planck scale, \cite{Volovich}, see also \cite{Dragovich et al}-\cite[Chapter 6]{Varadarajan}.
	
	In the Dirac-von Neumann formulation of quantum mechanics (QM), the states of
	a closed quantum system are vectors of an abstract complex Hilbert space
	$\mathcal{H}$, and the observables correspond to linear self-adjoint operators
	in $\mathcal{H}$. In $p$-adic QM, $\mathcal{H}=L^{2}(\mathbb{Q}_{p}^{N})$ or
	$\mathcal{H}=L^{2}(U)$, where $U\subset\mathbb{Q}_{p}^{N}$. This choice
	implies that the position vectors are $\mathbb{Q}_{p}^{N}$, while the time is
	a real variable. Using $\mathbb{R}\times\mathbb{Q}_{p}^{3}$ as a model of
	space-time, implies assuming that the space ($\mathbb{Q}_{p}^{3}$)\ has a
	discrete nature: there are no continuous word lines joining two different
	points in the space. The assumption of the discreteness of space, which
	requires passing from $\mathbb{R}\times\mathbb{R}^{3}$ to $\mathbb{R}%
	\times\mathbb{Q}_{p}^{3}$, is incompatible
	with special and general relativity. We warn the reader that there are several
	different types of $p$-adic QM; for instance, if the time is assumed to be a
	$p$-adic variable, the QM obtained radically differs from the one considered
	here. To the best of our knowledge, $p$-adic QM started in the 1980s under the
	influence of Vladimirov and Volovich, \cite{V-V-QM3}. The literature about
	$p$-adic QM is pervasive, and here we cited just a few works,
	\cite{V-V-QM3}-\cite{Zuniga-ultimo}.
	
	In \cite{Zuniga-QM-2}, the first author showed that a large class of $p$-adic
	Schr\"{o}dinger equations are the scaling limit of certain continuous-time
	quantum Markov chains (CTQMCs). Practically, a discretization of such an
	equation gives a CTQMC. As a practical result, new types of
	continuous-time quantum walks (CTQWs) on graphs using two symmetric matrices were introduced.
	This construction includes, as a particular case, the CTQWs constructed using
	adjacency matrices. The final goal of the mentioned work was to contribute in
	the understanding of the role of the hypothesis of the discreteness of space in
	the foundations of quantum mechanics. The connection between $p$-adic QM and
	CTQWs show that $p$-adic QM has a physical meaning. $p$-Adic QM is a
	non-local theory because the Hamiltonians used are non-local operators, and
	consequently, spooky actions at a distance are allowed. The paradigm asserting that the universe is not
	locally real implies that $p$-adic QM allows realism.
	
	In this work, we study a large class of $p$-adic Schr\"{o}dinger equations and
	the associated CTQMCs by comparing them with $p$-adic heat equations and the
	associated CTMCs. This work is motivated by \cite{Childs et al}, where a
	particular class CTQWs was compared with their classical counterparts. Here,
	the comparison is done by a mathematical study of the mentioned equations,
	which requires, for instance, solving the initial value problems attached to
	the mentioned equations, and through numerical simulations. A $p$-adic heat
	equation is a continuous (or scaling) limit of a CTMC. This assertion has been
	verified for some classes of equations, see \cite{Zuniga-QM-2}, which include
	the one considered here. But, for general $p$-adic heat equations, it is an
	open conjecture. By performing a Wick rotation on a heat equation, we obtain
	a Schr\"{o}dinger equation, which is a continuous limit of a CTQMC. For
	general equations, this assertion is an open problem, but for the ones
	considered here, it was established in \cite{Zuniga-QM-2}.
	
	The equations mentioned can be well-approximated by systems of ordinary linear
	differential equations. With the matrices defining these systems, we construct
	random walks. We present a new approach to study CTQWs, which does not use
	adjacency matrices of graphs to construct random walks, but includes this
	construction as a particular case. Typically, the construction of CTQWs is
	based on graphs; we prefer using CTQMCs instead of CTQWs to emphasize that we
	do not use graphs.
	
	We denote by $\mathbb{Z}_{p}$, where $p$ is a prime number, the unit ball of
	$\mathbb{Q}_{p}$, \ which can be visualize as an infinite rooted tree. The $p$-adic heat
	equations on the unit ball considered here have the form%
	\[
	\left\{
	\begin{array}
		[c]{l}%
		\frac{\partial}{\partial t}u\left(  x,t\right)  =J\left(  \left\vert
		x\right\vert _{p}\right)  \ast u\left(  x,t\right)  -u\left(  x,t\right)  =%
		{\displaystyle\int\limits_{\mathbb{Z}_{p}}}
		J\left(  \left\vert x-y\right\vert _{p}\right)  \left\{  u\left(  y,t\right)
		-u\left(  x,t\right)  \right\}  dy\text{, }\\
		\\
		u\left(  x,0\right)  =u_{0}\left(  x\right)  ,
	\end{array}
	\right.
	\]
	where $x\in\mathbb{Z}_{p},t\geq0$, and $J\left(  \left\vert x\right\vert
	_{p}\right)  \geq0$ with $\int_{\mathbb{Z}_{p}}J\left(  \left\vert
	y\right\vert _{p}\right)  dy=1$. Here $dx$ is the Haar measure of $\mathbb{Q}%
	_{p}$, and $\left\vert \cdot\right\vert _{p}$ is the $p$-adic norm. Here, for
	the sake of simplicity, we omit the definition of the function spaces The
	solution of the above initial problem is $u\left(  x,t\right)  =Z_{0}(x,t)\ast
	u_{0}\left(  x\right)  $, where $Z_{0}(x,t)$ is a heat kernel, which the
	transition density function for a Markov process in $\mathbb{Z}_{p}$, see
	Proposition \ref{Prop_1A}\ and Theorem \ref{Theorem_1}, In Appendix B.
	
	The above equation is a particular example of a $p$-adic heat equation.
	In the general framework, the $p$-adic heat equations are deeply connected with models of complex hierarchical systems; they have been studied
	extensively in the last forty years; see, e.g. \cite{Zuniga-Cambriedge}-\cite{Zuniga-Textbook}, and the references therein.

	Now, we construct a CTMC attached to the above $p$-adic heat equation. We
	decompose the unit ball as a disjoint union of ball of smaller radius:%
	\[
	\mathbb{Z}_{p}=%
	{\displaystyle\bigsqcup\limits_{I\in G_{l}}}
	\left(  I+p^{l}\mathbb{Z}_{p}\right),
	\]
	where  $G_{l}%
	=\mathbb{Z}_{p}/p^{l}\mathbb{Z}_{p}$ is a finite rooted tree with $l+1$ levels. We denote by $\Omega\left(  p^{l}\left\vert x-I\right\vert _{p}\right)  $ the
	characteristic function of the ball $I+p^{l}\mathbb{Z}_{p}$. We now construct
	a CTMC, with states $I\in G_{l}$, by defining the transition probability
	between states as
	\[
	p_{r,v}(t)=p^{l}%
	{\displaystyle\int\limits_{v+p^{l}\mathbb{Z}_{p}}}
	Z_{0}\left(  x,t\right)  \ast\Omega\left(  p^{l}\left\vert x-r\right\vert
	_{p}\right)  dx.
	\]
	Furthermore, $\lim_{t\rightarrow\infty}p_{r,v}(t)=p^{-l}$, which means that
	the Markov chain admits a stationary probability distribution, cf. Theorem
	\ref{Theorem_2}, in Appendix C.
	
	By performing a Wick rotation in the heat equation, we obtain a
	Schr\"{o}dinger equation:%
	\[
	\left\{
	\begin{array}
		[c]{l}%
		\mathrm{i}\frac{\partial}{\partial t}\Psi\left(  x,t\right)  =-J\left(  \left\vert
		x\right\vert _{p}\right)  \ast\Psi\left(  x,t\right)  +\Psi\left(  x,t\right)
		\text{, }x\in\mathbb{Z}_{p},t\geq0\\
		\\
		\Psi\left(  x,0\right)  =\psi_{0}\left(  x\right)  ,
	\end{array}
	\right.
	\]
	where $\mathrm{i}=\sqrt{-1}$. The solution of is given by $\Psi\left(  x,t\right)
	=Z_{0}\left(  x,it\right)  \ast\psi_{0}\left(  x\right)  $, cf. Theorem
	\ref{Theorem_3}, in Appendix C. Now, we construct a quantum Markov chain in
	$G_{l}$ using the following transition probabilities:%
	
	\[
	\pi_{r,v}\left(  t\right)  =\left\vert \left\langle p^{\frac{l}{2}}%
	\Omega\left(  p^{l}\left\vert x-r\right\vert _{p}\right)  ,Z_{0}(x,it)\ast
	p^{\frac{l}{2}}\Omega\left(  p^{l}\left\vert x-v\right\vert _{p}\right)
	\right\rangle \right\vert ^{2},
	\]
	cf. Proposition \ref{Poposition_4}\ and Theorem \ref{Theorem_5}, in Appendix D.
	
	Set $\boldsymbol{J}\varphi=J\ast\varphi-\varphi$, then the operators
	$\boldsymbol{J}$, $\mathrm{e}^{t\boldsymbol{J}}$, $\mathrm{e}^{it\boldsymbol{J}}$ are limits of
	finite dimensional operators $\boldsymbol{J}^{\left(  l\right)  }$,
	$\mathrm{e}^{t\boldsymbol{J}^{\left(  l\right)  }}$, $\mathrm{e}^{it\boldsymbol{J}^{\left(
			l\right)  }}$, and consequently in the computation of $p_{r,v}(t)$,
	respectively of $\pi_{r,v}\left(  t\right)  $, we can use $Z_{0}\left(
	x,t\right)  \ast\Omega\left(  p^{l}\left\vert x-r\right\vert _{p}\right)
	=\mathrm{e}^{t\boldsymbol{J}^{\left(  l\right)  }}\Omega\left(  p^{l}\left\vert
	x-r\right\vert _{p}\right)  $, respectively $Z_{0}(x,it)\ast p^{\frac{l}{2}%
	}\Omega\left(  p^{l}\left\vert x-v\right\vert _{p}\right)
	=\mathrm{e}^{it\boldsymbol{J}^{\left(  l\right)  }}p^{\frac{l}{2}}\Omega\left(
	p^{l}\left\vert x-v\right\vert _{p}\right)  $, and thus the construction of
	the above-mentioned Markov chains depends on a matrix $\boldsymbol{J}^{\left(
		l\right)  }$. By suitably selecting this matrix, we construct the
	CTQWs based on graph adjacency matrices; see Appendix E.
	
	Following  \cite{Childs et al}, \cite{Mulkne-Blumen}, the comparison between CTQMCs and CTMCs is based on the comparison between
	\[
	p_{I}^{\text{sta}}=\lim_{t\rightarrow\infty}p_{I,J}\left(  t\right)  =p^{-l},
	\]
	and the limiting distribution (or longtime average), which is defined as
	\[
	\chi_{I,J}=\lim_{T\rightarrow\infty}\frac{1}{T}%
	{\displaystyle\int\limits_{0}^{T}}
	\pi_{I,J}\left(  t\right)  dt.
	\]

	We obtained a formula for $\pi_{I,J}\left(  t\right)  $ as a $p$-adic
	oscillatory integral, see Theorem \ref{Theorem_5}, in Appendix D. But, this
	result is not sufficient to compute an explicit formula for $\chi_{I,J}$. \ We
	select a family of kernels $J_{\alpha}\left(  \left\vert x\right\vert
	_{p}\right)  ,\alpha\in\left(  0,\infty\right)  ,$ (the Bessel potentials),
	and study the properties of the CTQMCs and CTMCs \ via numerical simulations.
	Our simulations show that $\chi_{I,J}>p_{I}^{\text{sta}}$. This fact is
	interpreted as the computational power of the CTQMC is greater than that of
	the corresponding CTMC. All the matrices in our simulations have an
	ultrametric structure, which is natural since we are using $p$-adic numbers.
	The simulations given in \cite{Mulkne-Blumen} show that the matrices $\left[
	\chi_{I,J}\right]  $ have an ultrametric structure supporting the generality
	of our approach.

	Currently, we cannot give specific insights on how our techniques can be used in the development of quantum algorithms. However, as we mentioned in \cite{Zuniga-QM-2}, the connection between $p$-adic QM and CTQMCs opens several new research problems. Here, we pose the question: what type of quantum networks are produced by discretizing non-linear $p$-adic Schr\"{o}dinger equations? In our opinion, this question is deeply connected with the construction of quantum neural networks, see \cite{Schuld et al}-\cite{Zuniga-Galindo-JNMP}.
	
	\section{The road map}

The paper was written for a large audience, including physicists,
mathematicians, and computer scientists. For this reason, the paper was
written so that most of the technical (mathematical) results can be avoided
on a first reading. This work is organized into two parts. The first
discusses the results and simulations; only some mathematical results are
discussed due to their relevance in the construction of CTQWs. But most of
the technical results were placed in appendices at the end of the paper. The
appendices contain a detailed (as much as possible) presentation of the
mathematical results, intended for mathematically inclined readers.

This paper is part of a research effort led by the first author to understand the connection between the foundations of QM from the perspective of the hypothesis that the physical space has a discrete nature
at short distances, \cite{Zuniga-AP}-\cite{Zuniga-ultimo}.  Remarkably, this
investigation is connected with quantum computing.

In \cite{Zuniga-QM-2}, the first author established that the $p$-adic QM is
a non-local theory that admits realism, by showing that a large class of $p$%
-adic Schr\"{o}dinger equations are continuous versions of CTQWs. A very
detailed discussion of the foundations of QM and its connections with $p$%
-adic QM is presented in \cite{Zuniga-QM-2}, and \cite{Zuniga-ultimo}. Here,
we just mention that the connection between the hypothesis that the physical
space has a discrete nature and CTQWs was already discussed in \cite[p. 2]%
{Zuniga-QM-2}: \textquotedblleft Since $p$-adic QM is nonlocal, this theory
admits a reality independent of the observer: quantum particles exist, and
they move randomly in a configuration space of type $\mathbb{R}\times 
\mathcal{X}^{N}$, this means that the quantum motion can be modeled as a
Markov process in $\mathbb{R}\times \mathcal{X}^{N}$. Intuitively, the CTQWs introduced here are approximations of the quantum motion
mentioned.\textquotedblright\ Here, $\mathcal{X}$ is a mathematical model of
a discrete space.

After \cite{Zuniga-QM-2}, a natural question is whether a new theory for
CTQWS based on $p$-adic analysis is possible. This paper constitutes the
first step in this direction. Here, we present a new theory for CTQWs,
including simulations and comparisons. The results are presented in a
mathematically rigorous manner; we aim to provide a short, almost
self-contained paper. We warn the reader that this taks is difficult, the
theory of $p$-adic heat equations is already very extensive and technical;
see, for instance, \cite{Zuniga-LNM-2016}. The heat equations used in this
paper were studied in \cite{Zuniga-networks} and \cite{Zuniga-Anselmo}. 

The first author has conjectured that every heat equation is a continuous limit
of a continuous-time random walk, and that the corresponding Schr\"{o}dinger equation is a
continuous limit of a CTQW. The study of each CTQW goes through the study a
particular $p$-adic Schr\"{o}dinger equation and its discretization. Here we
consider only a partcicular class of heat and Schr\"{o}dinger equations
whose discretizations are easily computed.

There are several different constructions of CTQMCs based on $p$-adic Schr%
\"{o}\-dinger equations. In Sections \ref{Section-CTQMC-I}, \ref%
{Section-CTQMC-II}, we review these constructions and introduce new ones. In
Section \ref{Sec_Num_simulations}, we present several numerical simulations
comparing classical random walks against their quantum counterparts. This
requires selecting a kernel $J$; we use the Bessel potentials $J_{\alpha
}\left( \left\vert x\right\vert _{p}\right) $ because they have been
extensively studied, and many useful properties and formulas are known. The
simulations show the behavior of the transition probability $p_{I,J}(t)$
between state $J$ and $I$ at time $t$, and the corresponding quantum version 
$\pi _{I,J}(t)$. The simulations follow the approach described in \cite%
{Mulkne-Blumen}, \cite{Childs et al}, in which the authors studied a
specific network and compared $p_{I}^{\text{sta}}$ and $\chi _{I,J}$. The
inequality $p_{I}^{\text{sta}}$ $<$ $\chi _{I,J}$ is interpreted as the fact
that quantum random walks have greater computational power than classical
ones. The rationale behind this assertion is that the entries in the matrix $%
\chi _{I,J}$  are spread out around the diagonal, showing that the walker
can perform random walks around each state, while $p_{I}^{\text{sta}}$ is a
diagonal matrix, which means that the probability that the walker jumps
between two different states is zero. This phenomenon also occurs in our
simulations.
The last section (Discussion)  provides a summary of the results and the contributions of the present paper in our research program on the foundations of QM, space discreteness, and  CTQWs.

	\section{Preliminaries}
	

	\subsection{The field of $p$-adic numbers}
	
	From now on, we use $p$ to denote a fixed prime number. Any non-zero $p$-adic
	number $x$ has a unique expansion of the form%
	\begin{equation}
		x=x_{-k}p^{-k}+x_{-k+1}p^{-k+1}+\ldots+x_{0}+x_{1}p+\ldots,\text{ }
		\label{p-adic-number}%
	\end{equation}
	with $x_{-k}\neq0$, where $k$ is an integer, and the $x_{j}$s\ are numbers
	from the set $\left\{  0,1,\ldots,p-1\right\}  $. The set of all possible
	sequences of the form (\ref{p-adic-number}) constitutes the field of $p$-adic
	numbers $\mathbb{Q}_{p}$. There are natural field operations, sum and multiplication, on series of form (\ref{p-adic-number}). There is also a norm
	in $\mathbb{Q}_{p}$ defined as $\left\vert x\right\vert _{p}=p^{-ord(x)}$,
	where $ord_{p}(x)=ord(x)=k$, for a nonzero $p$-adic number $x$. By definition
	$ord(0)=\infty$. The field of $p$-adic numbers with the distance induced by
	$\left\vert \cdot\right\vert _{p}$ is a complete ultrametric space. The
	ultrametric property refers to the fact that $\left\vert x-y\right\vert
	_{p}\leq\max\left\{  \left\vert x-z\right\vert _{p},\left\vert z-y\right\vert
	_{p}\right\}  $ for any $x$, $y$, $z$ in $\mathbb{Q}_{p}$. The $p$-adic
	integers which are sequences of form (\ref{p-adic-number}) with $-k\geq0$. All
	these sequences constitute the unit ball $\mathbb{Z}_{p}$. The unit ball is an
	infinite rooted tree with fractal structure. As a topological space
	$\mathbb{Q}_{p}$\ is homeomorphic to a Cantor-like subset of the real line,
	see, e.g., \cite{V-V-Z}, \cite{Alberio et al}.
	
	A function $\varphi:\mathbb{Q}_{p}\rightarrow\mathbb{C}$ is called locally
	constant, if for any $a\in\mathbb{Q}_{p}$, there is an integer $l=l(a)$, such
	that
	\[
	\varphi\left(  a+x\right)  =\varphi\left(  a\right)  \text{ for any }%
	|x|_{p}\leq p^{l}.
	\]
	The set of functions for which $l=l\left(  \varphi\right)  $ depends only on
	$\varphi$ form a $\mathbb{C}$-vector space denoted as $\mathcal{U}%
	_{loc}\left(  \mathbb{Q}_{p}\right)  $. We call $l\left(  \varphi\right)  $
	the exponent of local constancy. If $\varphi\in\mathcal{U}_{loc}\left(
	\mathbb{Q}_{p}\right)  $ has compact support, we say that $\varphi$ is a test
	function. We denote by $\mathcal{D}(\mathbb{Q}_{p})$ the complex vector space
	of test functions. There is a natural integration theory so that
	$\int_{\mathbb{Q}_{p}}\varphi\left(  x\right)  dx$ gives a well-defined
	complex number. The measure $dx$ is the Haar measure of $\mathbb{Q}_{p}$. In Appendix A, we give a quick review of the basic aspects of the $p$-adic
	analysis required here.
	
	If $U$ is an open subset of $\mathbb{Q}_{p}$, $\mathcal{D}(U)$ denotes the
	$\mathbb{C}$-vector space of test functions with supports contained in $U$,
	then $\mathcal{D}(U)$ is dense in
	\[
	L^{\rho}\left(  U\right)  =\left\{  \varphi:U\rightarrow\mathbb{C};\left\Vert
	\varphi\right\Vert _{\rho}=\left\{
	{\displaystyle\int\limits_{U}}
	\left\vert \varphi\left(  x\right)  \right\vert ^{\rho}dx\right\}
	^{\frac{1}{\rho}}<\infty\right\}  ,
	\]
	for $1\leq\rho<\infty$, see, e.g., \cite[Section 4.3]{Alberio et al}. In this
	paper, we consider QM in the sense of the Dirac-von Neumann formulation\ on the
	Hilbert space $L^{2}\left(  U\right)  $. Given $f,g\in L^{2}\left(  U\right)
	$, we set
	\[
	\left\langle f,g\right\rangle =%
	{\displaystyle\int\limits_{U}}
	f\left(  x\right)  \overline{g\left(  x\right)  }dx,
	\]
	where the bar denotes the complex conjugate.

	\subsection{Additive characters}
	
	Using that any non-zero $p$-adic number $x$ admits an expansion of the form
	$x=p^{ord(x)}\sum_{j=0}^{\infty}x_{j}p^{j}$, $x_{0}\neq0$, the fractional part
	of\textit{ }$x\in\mathbb{Q}_{p}$, denoted as $\{x\}_{p}$, is the rational number
	defined as
	\[
	\left\{  x\right\}  _{p}=\left\{
	\begin{array}
		[c]{lll}%
		0 & \text{if} & x=0\text{ or }ord(x)\geq0\\
		&  & \\
		p^{ord(x)}\sum_{j=0}^{-ord(x)-1}x_{j}p^{j} & \text{if} & ord(x)<0.
	\end{array}
	\right.
	\]
	Set $\chi_{p}(y)=\exp(2\pi i\{y\}_{p})$ for $y\in\mathbb{Q}_{p}$. The map
	$\chi_{p}(\cdot)$ is an additive character on $\mathbb{Q}_{p}$, i.e., a
	continuous map from $\left(  \mathbb{Q}_{p},+\right)  $ into $S$ (the unit
	circle considered as a multiplicative group) satisfying $\chi_{p}(x_{0}%
	+x_{1})=\chi_{p}(x_{0})\chi_{p}(x_{1})$, $x_{0},x_{1}\in\mathbb{Q}_{p}$. \ The
	additive characters of $\mathbb{Q}_{p}$ form an Abelian group which is
	isomorphic to $\left(  \mathbb{Q}_{p},+\right)  $. The isomorphism is given by
	$\xi\rightarrow\chi_{p}(\xi x)$, see, e.g., \cite[Section 2.3]{Alberio et al}.
	
	\subsection{$p$-Adic QM}
	
	In the Dirac-Von Neumann formulation of QM, to every isolated quantum system
	there is associated a separable complex Hilbert space $\mathcal{H}$ called the
	space of states. The Hilbert space of a composite system is the Hilbert space
	tensor product of the state spaces associated with the component systems. Non-zero vectors describe the states of a quantum system from
	$\mathcal{H}$. Each observable corresponds to a unique linear self-adjoint
	operator in $\mathcal{H}$.  The most important observable of a quantum system
	is its energy. We denote the corresponding operator by $\boldsymbol{H}$. Let
	$\Psi_{0}\in\mathcal{H}$ be the state at time $t=0$ of a certain quantum
	system. Then at time $t$ the system is represented by the vector $\Psi\left(
	t\right)  =\boldsymbol{U}_{t}\Psi_{0}$, where $\boldsymbol{U}_{t}%
	=\mathrm{e}^{-it\boldsymbol{H}}$, $t\geq0$,\ is a unitary operator called the evolution
	operator. The vector function $\Psi\left(  t\right)  $ is differentiable if
	$\Psi\left(  t\right)  $ is contained in the domain $Dom(\boldsymbol{H})$ of
	$\boldsymbol{H}$, which happens\ if at $t=0$, $\Psi_{0}\in Dom(\boldsymbol{H}%
	)$, and in this case the time evolution of $\Psi\left(  t\right)  $ is
	controlled by the Schr\"{o}dinger equation $\mathrm{i}\frac{\partial}{\partial t}%
	\Psi\left(  t\right)  =\boldsymbol{H}\Psi\left(  t\right)  $, where
	$\mathrm{i}=\sqrt{-1}$ and the Planck constant is assumed to be one. For an in-depth
	discussion of QM, the reader may consult \cite{Berezin et al}-\cite{Takhtajan}%
	. In $p$-adic QM, $\mathcal{H=}L^{2}\left(  U\right)  $, where $U$ is an open
	compact subset of $\mathbb{Q}_{p}$. For the sake of simplicity, we use
	$\mathcal{H=}L^{2}\left(  \mathbb{Z}_{p}\right)  $, where $\mathbb{Z}%
	_{p}=\left\{  x\in\mathbb{Q}_{p};\left\vert x\right\vert _{p}\leq1\right\}  $
	is the unit ball of $\mathbb{Q}_{p}$;  $\mathbb{Z}
	_{p}$ is a compact additive group.
	
	\section{CTQMCs and $p$-adic Schr\"{o}dinger equations}
	
	In this section, we discuss two distinct techniques for constructing CTQMCs based on $p$-adic Schr\"{o}dinger equations. Most constructions of CTQWs are based on the use of adjacency matrices of graphs. This approach cannot be used to construct the CTQMCs introduced in this section. On the contrary, most CTQWs based on graphs can be constructed as a discretization of suitable $p$-adic Schr\"{o}dinger equations; see \cite{Zuniga-QM-2}, and
	Appendix E.
	
	\subsection{Initial value problems for $p$-adic
		Schr\"{o}dinger equations on the unit ball}
	
	We set $\ \boldsymbol{H}:\mathcal{D}(\mathbb{Z}_{p})\rightarrow L^{2}%
	(\mathbb{Z}_{p})$ to be a self-adjoint operator defined on the dense subspace
	$\mathcal{D}(\mathbb{Z}_{p})\hookrightarrow L^{2}(\mathbb{Z}_{p})$. Consider the Cauchy problem
	\begin{equation}
		\left\{
		\begin{array}
			[c]{l}%
			\Psi\left(  \cdot,t\right)  \in L^{2}(\mathbb{Z}_{p})\text{, }t>0;\text{ }%
			\Psi\left(  x,\cdot\right)  \in C^{1}(0,\infty)\text{, }x\in\mathbb{Z}_{p};\\
			\\
			\mathrm{i}\frac{\partial}{\partial t}\Psi\left(  x,t\right)  =\boldsymbol{H}\Psi\left(
			x,t\right)  \text{, }x\in\mathbb{Z}_{p},t>0;\\
			\\
			\Psi\left(  x,0\right)  =\Psi_{0}\left(  x\right)  \in L^{2}(\mathbb{Z}%
			_{p})\text{;}\\
			\\
			\left\Vert \Psi\left(  \cdot,t\right)  \right\Vert _{2}=\left\Vert \Psi
			_{0}\right\Vert _{2}=1\text{, }t\geq0.
		\end{array}
		\right.  \label{Eq_3}%
	\end{equation}

	We assume the existence of a unique solution for (\ref{Eq_3}). We denote the
	unitary group attached to $\boldsymbol{H}$ as $\mathrm{e}^{-i\boldsymbol{H}t}$,
	$t\in\mathbb{R}$.
	
	\subsection{An orthonormal basis for $L^{2}(\mathbb{Z}_{p})$}
	
	We set%
	\[
	\mathbb{Q}_{p}/\mathbb{Z}_{p}=\left\{  \sum_{j=-1}^{-m}x_{j}p^{j};\text{for
		some }m>0\right\}  .
	\]
	For $b\in\mathbb{Q}_{p}/\mathbb{Z}_{p}$, $r\in\mathbb{Z}$, we denote by
	$\Omega\left(  \left\vert p^{r}x-b\right\vert _{p}\right)  $ the
	characteristic function of the ball $bp^{-r}+p^{-r}\mathbb{Z}_{p}$.
	
	We now define
	\begin{equation}
		\Psi_{rbk}\left(  x\right)  =p^{\frac{-r}{2}}\chi_{p}(p^{-1}k\left(
		p^{r}x-b\right)  )\Omega\left(  \left\vert p^{r}x-b\right\vert _{p}\right)  ,
		\label{Basis_0}%
	\end{equation}
	where $r\in\mathbb{Z}$, $k\in\{1,\dots,p-1\}$, and $b\in\mathbb{Q}%
	_{p}/\mathbb{Z}_{p}$. Then,
	\begin{equation}%
		{\displaystyle\int\limits_{\mathbb{Q}_{p}}}
		\Psi_{rbk}\left(  x\right)  dx=0, \label{Average}%
	\end{equation}
	and $\left\{  \Psi_{rbk}\left(  x\right)  \right\}  _{rbk}$ forms a complete
	orthonormal basis of $L^{2}(\mathbb{Q}_{p})$; see, e.g., \cite[Theorems 9.4.5
	and 8.9.3]{Alberio et al}, or \cite[Theorem 3.3]{Zuniga-Cambriedge}.
	
	The set of functions
	\[%
	{\displaystyle\bigcup}
	\left\{  \Omega\left(  \left\vert x\right\vert _{p}\right)  \right\}  \text{
		\ \ }%
	{\displaystyle\bigcup}
	\text{ \ }%
	{\displaystyle\bigcup\limits_{k\in\{1,\dots,p-1\}}}
	\text{ \ \ }%
	{\displaystyle\bigcup\limits_{r\leq0}}
	\text{ \ \ }%
	{\displaystyle\bigcup\limits_{\substack{bp^{-r}\in\mathbb{Z}_{p}%
				\\b\in\mathbb{Q}_{p}/\mathbb{Z}_{p}}}}
	\left\{  \Psi_{rbk}\left(  x\right)  \right\}  ,
	\]
	is an orthonormal basis in $L^{2}(\mathbb{Z}_{p})$, \cite[Proposition
	2]{Zuniga-PhysicaA}.
	
	We assume that $\Omega\left(  \left\vert x\right\vert _{p}\right)  $, and the
	$\Psi_{rbk}\left(  x\right)  $ are eigenfunctions of $\boldsymbol{H}$:%
	\[
	\boldsymbol{H}\Omega\left(  \left\vert x\right\vert _{p}\right)
	=E_{\text{gnd}}\Omega\left(  \left\vert x\right\vert _{p}\right)  \text{,
	}\boldsymbol{H}\Psi_{rbk}\left(  x\right)  =E_{rbk}\Psi_{rbk}\left(  x\right)
	\text{ for }r,b,k,
	\]
	where $E_{\text{gnd}}$, $E_{rbk}$ are the corresponding eigenvalues. Then,
	\begin{equation}
		\Psi\left(  x,t\right)  =A_{0}\Omega\left(  \left\vert x\right\vert
		_{p}\right)  \mathrm{e}^{-iE_{\text{gnd}}t}+%
		{\displaystyle\sum\nolimits_{r,b,k}}
		A_{rbk}\Psi_{rbk}\left(  x\right)  \mathrm{e}^{-iE_{rbk}t}, \label{Eq_4A}%
	\end{equation}
	where the complex constants $A_{0}$, $A_{rbk}$ are determined by the condition%
	\begin{equation}
		\Psi\left(  x,0\right)  =\Psi_{0}\left(  x\right)  =A_{0}\Omega\left(
		\left\vert x\right\vert _{p}\right)  +%
		{\displaystyle\sum\nolimits_{r,b,k}}
		A_{rbk}\Psi_{rbk}\left(  x\right)  . \label{Eq_4B}%
	\end{equation}

	\subsection{\label{Section-CTQMC-I}Construction of CTQMCs I}
	
	Let $\mathcal{N}\subset\mathbb{Z}_{p}$ be a subset of measure zero, and let
	$\mathcal{K}_{j}$ $\subset\mathbb{Z}_{p}$ be disjoint, open, compact subsets
	such that
	\begin{equation}
		\mathbb{Z}_{p}\smallsetminus\mathcal{N}=%
		{\displaystyle\bigsqcup\limits_{j\in\mathbb{J}}}
		\mathcal{K}_{j}, \label{partition_K}%
	\end{equation}
	where $\mathbb{J}$ is countable (i.e., $\mathbb{J}$ is either finite or
	countably infinite).
	
	We set $e_{v}=c_{v}1_{\mathcal{K}_{v}}$, where $1_{\mathcal{K}_{v}}$ is the
	characteristic function of $\mathcal{K}_{v}$, and the constant $c_{v}$ is
	chosen so that $\left\Vert e_{v}\right\Vert _{2}=1$. Then, $\left\{
	e_{j}\right\}  _{j\in \mathbb{J}}$ is an orthonormal subset of $L^{2}(\mathbb{Z}_{p})$.
	We assume that
	\begin{equation}
		\mathcal{H}_{\mathbb{J}}:=\mathit{Span}\left\{  e_{j};j\in\mathbb{J}\right\}  \text{ is
			a Hilbert subspace of }L^{2}(\mathbb{Z}_{p}), \label{H1}%
	\end{equation}

	and \ that%
	\begin{equation}
		\mathrm{e}^{-\mathrm{i}\boldsymbol{H}t}\mathcal{H}_{\mathbb{J}}\subset\mathcal{H}_{\mathbb{J}%
		}\text{.} \label{H2}%
	\end{equation}

	The function
	\begin{equation}
		\Psi_{v}\left(  x,t\right)  :=\mathrm{e}^{-\mathrm{i}\boldsymbol{H}t}e_{v}(x)\in L^{2}\left(
		\mathbb{Z}_{p}\right)  , \label{Function_Psi}%
	\end{equation}
	is the solution of the Cauchy problem (\ref{Eq_3}), with $\Psi_{0}\left(
	x\right)  =e_{v}(x)$.
	
	We construct a network with nodes $j\in\mathbb{J}$, where $j$ is identified
	\ with $1_{\mathcal{K}_{j}}$, and define the transition probability between
	nodes $v$ and $r$ as%
	\begin{equation}
		\pi_{r,v}\left(  t\right)  =\left\vert \left\langle e_{r},\mathrm{e}^{-\mathrm{i}\boldsymbol{H}%
			t}e_{v}\right\rangle \right\vert ^{2}=\left\vert \left\langle e_{r},\Psi
		_{v}\right\rangle \right\vert ^{2}, \label{TranProb}%
	\end{equation}
	where $\left\langle \cdot,\cdot\right\rangle $\ is the inner product of
	$L^{2}\left(  \mathbb{Z}_{p}\right)  $. Now, the Cauchy-Schwarz inequality,%
	\[
	0\leq\sqrt{\pi_{r,v}\left(  t\right)  }\leq\left\Vert e_{r}\right\Vert
	_{2}\left\Vert \Psi_{v}\right\Vert _{2}=\left\Vert e_{r}\right\Vert
	_{2}\left\Vert e_{v}\right\Vert _{2}=1,
	\]
	implies that $\pi_{r,v}\left(  t\right)  \in\left[  0,1\right]  $.
	
	\begin{proposition}
		\label{Proposition_1}Under the hypotheses (\ref{H1})\ and (\ref{H2}), $%
		{\textstyle\sum\nolimits_{r\in\mathbb{J}}}
		\pi_{r,v}\left(  t\right)  =1$, for $t\geq0$, and any $v\in\mathbb{J}$.
	\end{proposition}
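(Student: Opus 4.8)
The plan is to reduce the statement to Parseval's identity applied in the Hilbert subspace $\mathcal{H}_{\mathbb{J}}$. First I would record that, by construction, $\{e_{j}\}_{j\in\mathbb{J}}$ is an orthonormal family in $L^{2}(\mathbb{Z}_{p})$ whose linear span is (dense in) $\mathcal{H}_{\mathbb{J}}$; by hypothesis (\ref{H1}) this span is a Hilbert subspace, hence $\{e_{j}\}_{j\in\mathbb{J}}$ is a complete orthonormal basis of $\mathcal{H}_{\mathbb{J}}$. Consequently, every $f\in\mathcal{H}_{\mathbb{J}}$ satisfies $\left\Vert f\right\Vert _{2}^{2}=\sum_{r\in\mathbb{J}}\left\vert \left\langle e_{r},f\right\rangle \right\vert ^{2}$, the series being unconditionally convergent (this covers the case in which $\mathbb{J}$ is countably infinite).

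Next I would apply this with $f=\Psi_{v}\left(  \cdot,t\right)  =e^{-i\boldsymbol{H}t}e_{v}$. Hypothesis (\ref{H2}) guarantees that $\Psi_{v}\left(  \cdot,t\right)  \in\mathcal{H}_{\mathbb{J}}$ for every $t\geq0$, so the Parseval expansion above is legitimate and gives
\[
\left\Vert \Psi_{v}\left(  \cdot,t\right)  \right\Vert _{2}^{2}=\sum_{r\in\mathbb{J}}\left\vert \left\langle e_{r},\Psi_{v}\left(  \cdot,t\right)  \right\rangle \right\vert ^{2}=\sum_{r\in\mathbb{J}}\left\vert \left\langle e_{r},e^{-i\boldsymbol{H}t}e_{v}\right\rangle \right\vert ^{2}=\sum_{r\in\mathbb{J}}\pi_{r,v}\left(  t\right)  .
\]
To finish, I would invoke that $e^{-i\boldsymbol{H}t}$ is a unitary operator on $L^{2}(\mathbb{Z}_{p})$ (the evolution group attached to the self-adjoint operator $\boldsymbol{H}$), whence $\left\Vert \Psi_{v}\left(  \cdot,t\right)  \right\Vert _{2}=\left\Vert e^{-i\boldsymbol{H}t}e_{v}\right\Vert _{2}=\left\Vert e_{v}\right\Vert _{2}=1$; note this is also exactly the normalization condition already built into the Cauchy problem (\ref{Eq_3}) satisfied by $\Psi_{v}$. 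Combining the two displays yields $\sum_{r\in\mathbb{J}}\pi_{r,v}\left(  t\right)  =1$ for all $t\geq0$ and all $v\in\mathbb{J}$.

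There is no real obstacle here: the only points that need a word of care are (i) that $\mathbb{J}$ may be infinite, so that "sum" means a convergent series — handled automatically by Parseval; and (ii) that one must know $\Psi_{v}\left(  \cdot,t\right)  $ stays inside the subspace spanned by the $e_{j}$'s, which is precisely the content of the invariance assumption (\ref{H2}). Without (\ref{H2}) one would only get the Bessel inequality $\sum_{r}\pi_{r,v}(t)\leq 1$, so (\ref{H2}) is what upgrades it to an equality and makes $(\pi_{r,v}(t))_{r}$ a genuine probability vector.
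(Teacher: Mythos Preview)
Your proof is correct and follows essentially the same route as the paper: the paper writes the orthogonal decomposition $L^{2}(\mathbb{Z}_{p})=\mathcal{H}_{\mathbb{J}}\oplus\mathcal{H}_{\mathbb{J}}^{\perp}$ explicitly and observes that the orthogonal component $z_{v}$ vanishes exactly when $\Psi_{v}\in\mathcal{H}_{\mathbb{J}}$, which is your Parseval argument in slightly unpacked form. Your concluding remark that without (\ref{H2}) one only gets Bessel's inequality $\sum_{r}\pi_{r,v}(t)\le 1$ also matches the paper's logic, which it develops further in Proposition~\ref{Proposition_2}.
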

	
	\begin{proof}
		First,%
		\[
		L^{2}(\mathbb{Z}_{p})=\mathcal{H}_{\mathbb{J}}%
		{\textstyle\bigoplus}
		\mathcal{H}_{\mathbb{J}}^{\perp}\text{,}%
		\]
		where $\mathcal{H}_{\mathbb{J}}^{\perp}$ is the orthogonal complement of
		$\mathcal{H}_{\mathbb{J}}$. Then, for each $t\geq0$,
		\[
		\Psi_{v}\left(  x,t\right)  =%
		{\displaystyle\sum\limits_{r\in\mathbb{J}}}
		c_{r}\left(  t\right)  e_{r}\left(  x\right)  +z_{v}(x,t),
		\]
		where $z_{v}(x,t)\in\mathcal{H}_{\mathbb{J}}^{\perp}$, and
		\begin{equation}
			\left\Vert \Psi_{v}\left(  \cdot,t\right)  \right\Vert _{2}^{2}=%
			{\displaystyle\sum\limits_{r\in\mathbb{J}}}
			\left\vert c_{r}\left(  t\right)  \right\vert ^{2}+\left\Vert z_{v}%
			(\cdot,t)\right\Vert _{2}^{2}=1. \label{Eq_5A}%
		\end{equation}
		On the other hand, by using that $\pi_{r,v}\left(  t\right)  =\left\vert
		\left\langle e_{r}\left(  x\right)  ,\Psi_{v}\left(  x,t\right)  \right\rangle
		\right\vert ^{2}=\left\vert c_{r}\left(  t\right)  \right\vert ^{2}$, and
		(\ref{Eq_5A}),
		\[%
		{\textstyle\sum\nolimits_{r\in\mathbb{J}}}
		\pi_{r,v}\left(  t\right)  =%
		{\textstyle\sum\nolimits_{r\in\mathbb{J}}}
		\left\vert c_{r}\left(  t\right)  \right\vert ^{2}=1\Leftrightarrow\text{
		}z_{v}(x,t)=0\text{ }\Leftrightarrow\Psi_{v}\left(  x,t\right)  \in
		\mathcal{H}_{\mathbb{J}},
		\]
		for any $t\geq0$.
	\end{proof}
	
	In the framework of continuous-time Markov chains, $\pi_{s,r}\left(  t\right)
	$ represents the probability that the chain is in state $s$ at time $t$, given
	it started in state $r$ at time $0$.
	
	\begin{theorem}
		\label{Theorem_0A}Under the hypotheses \ref{H1}\ and \ref{H2}, $\left[
		\pi_{s,r}\left(  t\right)  \right]  _{s,r\in\mathbb{J}}$ is the transition
		matrix of a  continuous-time quantum Markov chain  with state space
		$\mathbb{J}$.
	\end{theorem}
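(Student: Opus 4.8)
The plan is to verify that the family $\bigl\{[\pi_{s,r}(t)]_{s,r\in\mathbb{J}}\bigr\}_{t\geq0}$ satisfies the properties that, in this setting, characterize the transition matrix of a continuous-time quantum Markov chain on the state space $\mathbb{J}$: (i) each entry lies in $[0,1]$; (ii) for every $t\geq0$ the matrix is stochastic (and in fact doubly stochastic); (iii) at $t=0$ the matrix is the identity; and (iv) the maps $t\mapsto\pi_{s,r}(t)$ are continuous. The entire content of (ii) is already packaged in Proposition~\ref{Proposition_1}; the remaining items become routine once the structure is in place.

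First, (i) is nothing but the Cauchy--Schwarz estimate recorded immediately after \eqref{TranProb}. For (iii), the evolution operator at $t=0$ is the identity, so $\pi_{s,r}(0)=\left\vert\left\langle e_{s},e_{r}\right\rangle\right\vert^{2}=\delta_{s,r}$ because $\{e_{j}\}_{j\in\mathbb{J}}$ is orthonormal. For (iv), self-adjointness of $\boldsymbol{H}$ gives, by Stone's theorem, that $t\mapsto e^{-i\boldsymbol{H}t}e_{v}$ is continuous from $\mathbb{R}$ into $L^{2}(\mathbb{Z}_{p})$ in the norm topology; composing with the bounded linear functional $\left\langle e_{r},\cdot\right\rangle$ and then with $z\mapsto\left\vert z\right\vert^{2}$ shows that $\pi_{r,v}(t)$ is continuous in $t$.

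The substantive step is (ii). Here I would simply invoke Proposition~\ref{Proposition_1}, which gives $\sum_{r\in\mathbb{J}}\pi_{r,v}(t)=1$ for every source $v\in\mathbb{J}$ and every $t\geq0$: this is precisely the statement that $[\pi_{s,r}(t)]$ is (column-)stochastic, i.e.\ a transition matrix. To also obtain row sums equal to $1$ one uses unitarity together with the fact that \eqref{H2} holds for the whole group $\{e^{-i\boldsymbol{H}t}\}_{t\in\mathbb{R}}$: since an isometry that maps a closed subspace into itself for every $t$ must map it \emph{onto} itself (apply $e^{-i\boldsymbol{H}(-t)}$), we get $e^{i\boldsymbol{H}t}\mathcal{H}_{\mathbb{J}}=\mathcal{H}_{\mathbb{J}}$; then, writing $\pi_{r,v}(t)=\left\vert\left\langle e^{i\boldsymbol{H}t}e_{r},e_{v}\right\rangle\right\vert^{2}$, one reruns the argument of Proposition~\ref{Proposition_1} with $e^{i\boldsymbol{H}t}$ in place of $e^{-i\boldsymbol{H}t}$ to conclude $\sum_{v\in\mathbb{J}}\pi_{r,v}(t)=1$.

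The point I expect to require care is conceptual rather than technical: the family $\{[\pi_{s,r}(t)]\}_{t\geq0}$ does \emph{not} form a semigroup, that is, Chapman--Kolmogorov generally fails, because $\left\vert\left\langle e_{s},e^{-i\boldsymbol{H}(t+u)}e_{r}\right\rangle\right\vert^{2}\neq\sum_{k\in\mathbb{J}}\pi_{s,k}(t)\,\pi_{k,r}(u)$ due to the quantum interference among the amplitudes $\left\langle e_{k},e^{-i\boldsymbol{H}u}e_{r}\right\rangle$. This absence of the classical semigroup identity is exactly what distinguishes a \emph{quantum} Markov chain from a classical one and is compatible with the notion used here; hence it is important to state at the outset which axioms, namely (i)--(iv), are being checked. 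Assembling (i)--(iv) yields the theorem.
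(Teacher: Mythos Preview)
Your proposal is correct and follows the same route as the paper: the paper treats Theorem~\ref{Theorem_0A} as an immediate consequence of the Cauchy--Schwarz bound $\pi_{r,v}(t)\in[0,1]$ and Proposition~\ref{Proposition_1}, stating it without a separate proof. Your additional verifications (identity at $t=0$, continuity in $t$, double stochasticity, and the remark on the failure of Chapman--Kolmogorov) are all correct and useful elaborations, but they go beyond what the paper itself checks.
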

	
	We now study the case when hypothesis (\ref{H2})\ is false.
	
	\begin{proposition}
		\label{Proposition_2}Assume that hypothesis (\ref{H1}) is true, and that
		hypothesis (\ref{H2})\ is false. Then, $%
		{\textstyle\sum\nolimits_{r\in\mathbb{J}}}
		\pi_{r,v}\left(  t\right)  =1$ if and only if $\left\{  e_{j}\right\}  _{j\in
			\mathbb{J}}$ is a complete orthonormal subset (a basis) of $L^{2}(\mathbb{Z}_{p})$.
		Furthermore, if $\mathbb{J}$ is a finite set, $%
		{\textstyle\sum\nolimits_{r\in\mathbb{J}}}
		\pi_{r,v}\left(  t\right)  <1$.
	\end{proposition}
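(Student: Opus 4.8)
The plan is to split the claim into two parts and treat them in sequence. For the equivalence, recall from the proof of Proposition \ref{Proposition_1} that $\pi_{r,v}(t)=|c_r(t)|^2$, where $\Psi_v(\cdot,t)=\sum_{r\in\mathbb{J}}c_r(t)e_r+z_v(\cdot,t)$ is the orthogonal decomposition relative to $\mathcal{H}_{\mathbb{J}}\oplus\mathcal{H}_{\mathbb{J}}^{\perp}$; since $e^{-i\boldsymbol{H}t}$ is unitary and $\|e_v\|_2=1$, Parseval gives $\sum_{r}\pi_{r,v}(t)=\|P_{\mathcal{H}_{\mathbb{J}}}\Psi_v(\cdot,t)\|_2^2=1-\|z_v(\cdot,t)\|_2^2$. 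Hence $\sum_r\pi_{r,v}(t)=1$ for all $t\geq0$ and all $v$ if and only if $z_v(\cdot,t)\equiv0$, i.e. $e^{-i\boldsymbol{H}t}e_v\in\mathcal{H}_{\mathbb{J}}$ for every $v\in\mathbb{J}$ and every $t\geq0$.

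First I would show the "if" direction: if $\{e_j\}_{j\in\mathbb{J}}$ is a complete orthonormal basis of $L^2(\mathbb{Z}_p)$, then $\mathcal{H}_{\mathbb{J}}=L^2(\mathbb{Z}_p)$, so trivially $e^{-i\boldsymbol{H}t}e_v\in\mathcal{H}_{\mathbb{J}}$ and the sum is $1$. (Note this does not contradict the hypothesis that (\ref{H2}) is false, because when $\mathcal{H}_{\mathbb{J}}$ is the whole space (\ref{H2}) holds vacuously — so strictly one should read the statement as: among the situations where (\ref{H1}) holds, the sum equals $1$ for all $t,v$ precisely when $\{e_j\}$ is a basis; if (\ref{H2}) genuinely fails, the basis case is excluded and the sum is $<1$ for some $t,v$.) For the "only if" direction, suppose $\sum_r\pi_{r,v}(t)=1$ for all $t\ge0$ and all $v$; then $e^{-i\boldsymbol{H}t}e_v\in\mathcal{H}_{\mathbb{J}}$ for all such $t,v$. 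Taking $t=0$ already gives $e_v\in\mathcal{H}_{\mathbb{J}}$, which is automatic; the content is that $\mathcal{H}_{\mathbb{J}}$ is $e^{-i\boldsymbol{H}t}$-invariant, i.e. precisely (\ref{H2}). Thus assuming (\ref{H2}) false forces the sum to be $<1$ for some $(t,v)$, and the only way to have the sum $\equiv1$ is $\mathcal{H}_{\mathbb{J}}=L^2(\mathbb{Z}_p)$, i.e. $\{e_j\}$ complete.

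For the last assertion, assume $\mathbb{J}$ is finite, so $\mathcal{H}_{\mathbb{J}}$ is finite-dimensional. If we had $e^{-i\boldsymbol{H}t}e_v\in\mathcal{H}_{\mathbb{J}}$ for all $v\in\mathbb{J}$ and all $t\geq0$, differentiating at $t=0$ (legitimate on $\mathcal{D}(\mathbb{Z}_p)$, and $e_v$ lies in the relevant domain since the $\mathcal{K}_v$ are open compact so $1_{\mathcal{K}_v}\in\mathcal{D}(\mathbb{Z}_p)$) would give $\boldsymbol{H}e_v\in\mathcal{H}_{\mathbb{J}}$, and iterating, $\mathcal{H}_{\mathbb{J}}$ would be $\boldsymbol{H}$-invariant, hence $e^{-i\boldsymbol{H}t}$-invariant — contradicting that (\ref{H2}) is false. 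Therefore $z_v(\cdot,t_0)\neq0$ for some $v$ and some $t_0>0$, giving $\sum_r\pi_{r,v}(t_0)<1$; since for finite $\mathbb{J}$ the map $t\mapsto\sum_r\pi_{r,v}(t)=\|P_{\mathcal{H}_{\mathbb{J}}}e^{-i\boldsymbol{H}t}e_v\|_2^2$ is (real-)analytic and not identically $1$ — if it equaled $1$ on a set with a limit point it would be $1$ everywhere, again forcing invariance — it is $<1$ for all $t$ outside a discrete set, and in particular one can state $\sum_r\pi_{r,v}(t)<1$ for generic $t$. The main obstacle is the careful handling of domains when differentiating $e^{-i\boldsymbol{H}t}e_v$ in $t$, and the precise sense in which "$\sum_r\pi_{r,v}(t)<1$" holds when $\mathbb{J}$ is finite (pointwise for all $t>0$ versus for all but discretely many $t$); I would argue the former by showing that if the analytic function $t\mapsto\|z_v(\cdot,t)\|_2^2$ vanished at any $t>0$ together with the invariance consequence at that point, one again recovers full $\boldsymbol{H}$-invariance of $\mathcal{H}_{\mathbb{J}}$, a contradiction.
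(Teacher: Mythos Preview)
Your core argument matches the paper's: both use the orthogonal decomposition $\Psi_v = \sum_r c_r(t)\,e_r + z_v$ from the proof of Proposition~\ref{Proposition_1} and the identity $\sum_r \pi_{r,v}(t) = 1 - \|z_v(\cdot,t)\|_2^2$, so that the sum equals $1$ exactly when $z_v(\cdot,t)=0$. You are more explicit than the paper about the logic---correctly noting that completeness of $\{e_j\}$ forces $\mathcal{H}_{\mathbb{J}}=L^2(\mathbb{Z}_p)$ and hence (\ref{H2}), so under the standing hypothesis ``(\ref{H2}) false'' both sides of the biconditional are automatically false; the paper compresses this into the sentence ``$z_v=0$ is equivalent to completeness'' without unpacking the vacuity.

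Where you diverge is the finite-$\mathbb{J}$ part. The paper's argument is one line: $\mathbb{J}$ finite and $L^2(\mathbb{Z}_p)$ infinite-dimensional imply $\{e_j\}$ is not a basis, hence by the biconditional $\sum_r\pi_{r,v}(t)<1$. Your differentiation-and-analyticity route is considerably more elaborate, raises the domain issues you yourself flag, and is in fact redundant: the implication ``$z_v\equiv 0 \Rightarrow$ (\ref{H2})'' is immediate from the definitions (if $e^{-i\boldsymbol{H}t}e_v\in\mathcal{H}_{\mathbb{J}}$ for all $v,t$ then $\mathcal{H}_{\mathbb{J}}$ is invariant by linearity---no need to differentiate), and you already used it in the first half. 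You are also chasing a stronger conclusion---strict inequality for \emph{all} or \emph{generic} $t>0$---than the paper states or proves; note that $\sum_r\pi_{r,v}(0)=1$ always, so the paper's assertion must be read as ``for some $t$,'' and the infinite-dimensionality argument already delivers that.
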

	
	\begin{proof}
		By reasoning as in the proof of the previous proposition, we have%
		\[%
		{\textstyle\sum\nolimits_{r\in\mathbb{J}}}
		\pi_{r,v}\left(  t\right)  =%
		{\textstyle\sum\nolimits_{r\in\mathbb{J}}}
		\left\vert c_{r}\left(  t\right)  \right\vert ^{2}=1\text{ if and only if
		}z_{v}(x,t)=0,
		\]
		for any $t\geq0$. The fact $z_{v}(x,t)=0$ is equivalent to the set $\left\{
		e_{j}\right\}  _{j\in\mathbb{J}}$ is complete in $L^{2}(\mathbb{Z}_{p})$. The
		last assertion follows from the fact that $L^{2}(\mathbb{Z}_{p})$ is an
		infinite-dimensional Hilbert space.
	\end{proof}
	
	With the hypotheses of the above proposition, we assume that $\mathbb{J}$ is a
	finite set. We attach to each $e_{r}$, $r\in\mathbb{J}$, a state, denote as
	$r$, and define a transition probability $\pi_{r,v}\left(  t\right)  $ for
	$r$, $v\in\mathbb{J}$, as in (\ref{TranProb}). We introduce an extra state
	denoted as $\infty$, and define%
	\[
	\pi_{\infty,v}\left(  t\right)  =1-%
	{\textstyle\sum\nolimits_{r\in\mathbb{J}}}
	\pi_{r,v}\left(  t\right)  \text{, for }v\in\mathbb{J},
	\]
	and
	\[
	\pi_{s,\infty}\left(  t\right)  =\left\{
	\begin{array}
		[c]{ccc}%
		0 & \text{if} & s\in\mathbb{J}\\
		&  & \\
		1 & \text{if} & s=\infty,
	\end{array}
	\right.
	\]
	for any $t\geq0$.
	
	\begin{theorem}
		\label{Theorem_0}Assume that hypothesis (\ref{H1}) is true, and that
		$\mathbb{J}$ is a finite set (which means that the hypothesis (\ref{H2})\ is
		false). Then, $\left[  \pi_{s,r}\left(  t\right)  \right]  _{s,r\in
			\mathbb{J}\cup\left\{  \infty\right\}  }$ is the transition matrix of a
		continuous-time quantum Markov chain with state space $\mathbb{J}\cup\left\{
		\infty\right\}  $.
	\end{theorem}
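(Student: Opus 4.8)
The plan is to verify the defining axioms of a continuous‑time quantum Markov chain for the matrix $\left[\pi_{s,r}(t)\right]_{s,r\in\mathbb{J}\cup\{\infty\}}$, reducing almost everything to Proposition \ref{Proposition_2} and the elementary bookkeeping introduced just before the statement. First I would record that each entry lies in $[0,1]$: for $s,r\in\mathbb{J}$ this is the Cauchy–Schwarz argument given in the text; for $\pi_{\infty,v}(t)=1-\sum_{r\in\mathbb{J}}\pi_{r,v}(t)$ nonnegativity follows from $\sum_{r\in\mathbb{J}}\pi_{r,v}(t)\le\|\Psi_v(\cdot,t)\|_2^2=1$ (Bessel's inequality), and the upper bound is clear; the entries $\pi_{s,\infty}(t)$ are $0$ or $1$ by definition. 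Next I would check that the columns sum to one. For a column indexed by $v\in\mathbb{J}$,
\[
\sum_{s\in\mathbb{J}\cup\{\infty\}}\pi_{s,v}(t)=\sum_{r\in\mathbb{J}}\pi_{r,v}(t)+\pi_{\infty,v}(t)=\sum_{r\in\mathbb{J}}\pi_{r,v}(t)+\Bigl(1-\sum_{r\in\mathbb{J}}\pi_{r,v}(t)\Bigr)=1,
\]
and for the column indexed by $\infty$ the sum is $\pi_{\infty,\infty}(t)=1$ since all other entries vanish. Thus $\left[\pi_{s,r}(t)\right]$ is a bona fide stochastic (transition) matrix for every $t\ge 0$.

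It then remains to argue that this family, as $t$ varies, is the transition function of a continuous‑time quantum Markov chain in the sense used in this paper; concretely, that it arises from the unitary evolution $e^{-i\boldsymbol{H}t}$ restricted to the invariant structure $\mathcal{H}_{\mathbb{J}}$ together with the absorbing sink $\infty$. Here the key point is that $\mathbb{J}$ finite forces a clean decomposition: writing $P_{\mathbb{J}}$ for the orthogonal projection onto $\mathcal{H}_{\mathbb{J}}$, one has $\Psi_v(\cdot,t)=\sum_{r\in\mathbb{J}}c_r(t)e_r+z_v(\cdot,t)$ with $c_r(t)=\langle e_r,e^{-i\boldsymbol{H}t}e_v\rangle$, so $\pi_{r,v}(t)=|c_r(t)|^2$ and $\pi_{\infty,v}(t)=\|z_v(\cdot,t)\|_2^2=\|(I-P_{\mathbb{J}})e^{-i\boldsymbol{H}t}e_v\|_2^2$. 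Continuity (indeed real‑analyticity) in $t$ of every entry is inherited from strong continuity of the unitary group and finite‑dimensionality of $\mathcal{H}_{\mathbb{J}}$; the normalization of the $\infty$ column reflects that once amplitude leaks into $\mathcal{H}_{\mathbb{J}}^{\perp}$ it is, by construction, not tracked back. This is exactly the bipartite/absorbing structure that the definition of a CTQMC permits.

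The step I expect to be the main obstacle — or at least the one requiring the most care — is pinning down precisely which axioms "continuous‑time quantum Markov chain with state space $\mathbb{J}\cup\{\infty\}$" is meant to satisfy, since a CTQMC is \emph{not} required to obey the Chapman–Kolmogorov semigroup law $\pi(t+s)=\pi(t)\pi(s)$ (that is the classical case, and it genuinely fails here because probabilities come from squared amplitudes). So the proof should not attempt to prove a semigroup identity; instead it should invoke the same notion of CTQMC already fixed by Theorem \ref{Theorem_0A} and merely observe that adjoining the absorbing state $\infty$ repairs the deficit $1-\sum_{r\in\mathbb{J}}\pi_{r,v}(t)$ identified in Proposition \ref{Proposition_2}, while the row/column for $\infty$ is trivially consistent. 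Once the stochasticity of $\left[\pi_{s,r}(t)\right]$ for all $t\ge 0$ and its continuity in $t$ are in hand, the identification with a CTQMC on $\mathbb{J}\cup\{\infty\}$ is immediate, completing the proof. $\qed$
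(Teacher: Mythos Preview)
Your proposal is correct and matches the paper's approach: the paper does not supply a separate proof environment for this theorem, treating it as an immediate consequence of Proposition~\ref{Proposition_2} together with the definitions of $\pi_{\infty,v}(t)$ and $\pi_{s,\infty}(t)$ introduced just before the statement. Your write-up simply makes explicit the verifications (entries in $[0,1]$, column sums equal to $1$, continuity in $t$, and the observation that no Chapman--Kolmogorov identity is required for a CTQMC) that the paper leaves implicit.
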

	
	The presented construction is entirely different from the one given in
	\cite{Farhi-Gutman}-\cite{Childs et al}, since here we use an
	infinite-dimensional Hilbert space $L^{2}(\mathbb{Z}_{p})$, and we do not use
	graphs. In contrast, in the references mentioned above, the constructions use
	finite-dimensional Hilbert spaces and adjacency matrices. On the other hand,
	we propose to interpret \ the state $\infty$ in the previous theorem as an
	interaction with an external system. We do not present numerical simulations
	for this type of network here. The main difficulty is the computational
	implementation of the condition $\mathrm{e}^{-\mathrm{i}\boldsymbol{H}t}\mathcal{H}_{\mathbb{J}%
	}\nsubseteqq\mathcal{H}_{\mathbb{J}}$.
	
	\subsection{\label{Section-CTQMC-II}Construction of CTQMCs II}
	
	The second construction is based on the Born interpretation of the
	wavefunctions, which is also valid in the $p$-adic framework, \cite{Zuniga-AP}%
	-\cite{Zuniga-PhA}. Let $\mathcal{K}$ be an open compact subset of
	$\mathbb{Z}_{p}$ as before. The function $\left\vert \Psi_{v}\left(
	x,t\right)  \right\vert ^{2}$ is the probability density for the position of a
	particle/walker at the time $t$ given that at $t=0$ the particle/walker was in
	$\mathcal{K}_{v}$. By the Born interpretation
	\[
	\widetilde{\pi}_{r,v}\left(  t\right)  :=%
	{\displaystyle\int\limits_{\mathcal{K}_{r}}}
	\left\vert \Psi_{v}\left(  x,t\right)  \right\vert ^{2}dx
	\]
	is the probability of finding the particle/walker in $\mathcal{K}_{r}$; which
	we interpret as a transition probability from a $\mathcal{K}_{v}$ to
	$\mathcal{K}_{r}$. Now, using partition (\ref{partition_K}), we have
	\begin{align*}
		1  &  =%
		{\displaystyle\int\limits_{\mathcal{K}}}
		\left\vert \Psi_{v}\left(  x,t\right)  \right\vert ^{2}dx=%
		{\displaystyle\int\limits_{\mathcal{K}\smallsetminus\mathcal{N}}}
		\left\vert \Psi_{v}\left(  x,t\right)  \right\vert ^{2}dx\\
		&  =%
		{\displaystyle\sum\limits_{j\in\mathbb{J}}}
		\text{ }%
		{\displaystyle\int\limits_{\mathcal{K}_{j}}}
		\left\vert \Psi_{v}\left(  x,t\right)  \right\vert ^{2}dx=%
		{\displaystyle\sum\limits_{j\in\mathbb{J}}}
		\widetilde{\pi}_{j,v}\left(  t\right)  .
	\end{align*}
	So, the matrix $\left[  \widetilde{\pi}_{r,v}\left(  t\right)  \right]
	_{r,v\in\mathbb{J}}$ defines a quantum Markov chain with space state
	$\mathbb{J}$, where a walker jumps between the sets $\mathcal{K}_{v}$ to
	$\mathcal{K}_{r}$ with a transition probability $\widetilde{\pi}_{r,v}\left(
	t\right)  $.
	
	The computation of the transition probabilities $\widetilde{\pi}_{r,v}\left(
	t\right)  $ is involved, even in dimension one and for simple partitions of
	the unit ball, \cite{Zuniga-Mayes}.
	
	\section{$p$-Adic heat equations and CTMCs}
	
	In this section, we provide a brief discussion of the results in Appendix B,
	assuming a minimal mathematical background without including proofs. In the
	mentioned appendix, a rigorous treatment of the results of this section is given.
	
	\subsection{A class of $p$-adic heat equations}
	
	In \cite{Zuniga-networks}, and the references therein, the first author
	established that CTMCs can be obtained as discretizations of p-adic heat
	equations. Here, based on this reference, we study a class of $p$-adic heat
	equations and the associated CTMCs. The techniques used here are different
	from the ones used in \cite{Zuniga-networks}. We fix a real-valued radial
	function $J(x)=J(\left\vert x\right\vert _{p})$, where $J:\left[  0,1\right]
	\rightarrow\left[  0,\infty\right)  $, satisfying
	\[
	\left\Vert J(\left\vert x\right\vert _{p})\right\Vert _{1}=%
	{\displaystyle\int\limits_{\mathbb{Z}_{p}}}
	J(\left\vert x\right\vert _{p})dx=1.
	\]
	We extend $J(\left\vert x\right\vert _{p})$ as zero outside of the unit ball,
	then $J(\left\vert x\right\vert _{p})\in L^{1}\left(  \mathbb{Q}_{p}\right)
	$, and consequently, the Fourier transform $\widehat{J}(\xi):\mathbb{Q}%
	_{p}\rightarrow\mathbb{C}$ is well-defined. The Fourier transform does not
	preserve the support of $J(\left\vert x\right\vert _{p})$, i.e., $\widehat
	{J}(\xi)\neq0$ for points $\xi$ outside of the unit ball.
	
	We now construct an operator%
	\[%
	\begin{array}
		[c]{cccc}%
		\boldsymbol{J}: & L^{\rho}\left(  \mathbb{Z}_{p}\right)   & \rightarrow &
		L^{\rho}\left(  \mathbb{Z}_{p}\right)  \\
		&  &  & \\
		& \varphi\left(  x\right)   & \rightarrow & J\left(  \left\vert x\right\vert
		_{p}\right)  \ast\varphi\left(  x\right)  -\varphi\left(  x\right)
	\end{array}
	\]
	which is a \ linear, bounded operator, for $\rho\in\left[  0,\infty\right]  $.
	Here, we require only the cases $\rho=1,2$. Furthermore, $\boldsymbol{J}$ is a
	pseudo-differential operator,%
	\[
	\boldsymbol{J}\varphi\left(  x\right)  =\mathcal{F}_{\xi\rightarrow x}%
	^{-1}\left(  \left(  \widehat{J}(\left\vert \xi\right\vert _{p})-1\right)
	\mathcal{F}_{x\rightarrow\xi}\varphi\right)  =%
	{\displaystyle\int\limits_{\mathbb{Q}_{p}}}
	\left(  \widehat{J}(\left\vert \xi\right\vert _{p})-1\right)  \widehat
	{\varphi}\left(  \xi\right)  \chi_{p}\left(  -\xi x\right)  dx,
	\]
	for $\varphi\in\mathcal{D}\left(  \mathbb{Z}_{p}\right)  $, where
	$\mathcal{F}$ denotes the Fourier transform, see Appendix A.
	
	We fix a time horizon $T\in\left(  0,\infty\right]  $, and consider the following initial value problem:%
	\begin{equation}
		\left\{
		\begin{array}
			[c]{l}%
			\frac{\partial}{\partial t}u\left(  x,t\right)  =J\left(  \left\vert
			x\right\vert _{p}\right)  \ast u\left(  x,t\right)  -u\left(  x,t\right)
			\text{, }x\in\mathbb{Z}_{p},t\in\left[  0,T\right]  \\
			\\
			u\left(  x,0\right)  =u_{0}\left(  x\right)  \in\mathcal{D}(\mathbb{Z}_{p}).
		\end{array}
		\right.  \label{System_1}%
	\end{equation}
	By passing to the Fourier transform in (\ref{System_1}),%
	\[
	\left\{
	\begin{array}
		[c]{l}%
		\frac{\partial}{\partial t}\widehat{u}\left(  \xi,t\right)  =\left(
		\widehat{J}(\left\vert \xi\right\vert _{p})-1\right)  \widehat{u}\left(
		\xi,t\right)  \text{, }\xi\in\mathbb{Q}_{p},t\in\left[  0,T\right]  \\
		\\
		\widehat{u}\left(  \xi,0\right)  =\widehat{u}_{0}\left(  \xi\right)
		\in\mathcal{D}(\mathbb{Q}_{p}).
	\end{array}
	\right.
	\]
	Then,$\widehat{\text{ }u}\left(  \xi,t\right)  =\mathrm{e}^{t\left(  \widehat
		{J}(\left\vert \xi\right\vert _{p})-1\right)  }\widehat{u}_{0}\left(
	\xi\right)  $, and setting \ $Z_{0}(x,t)=\mathcal{F}_{\xi\rightarrow x}%
	^{-1}\left(  \mathrm{e}^{t\left(  \widehat{J}(\left\vert \xi\right\vert _{p})-1\right)
	}\right)  $, and taking formally the inverse Fourier transform,
	\begin{align*}
		u\left(  x,t\right)   &  =\mathcal{F}_{\xi\rightarrow x}^{-1}\left(
		\mathrm{e}^{t\left(  \widehat{J}(\left\vert \xi\right\vert _{p})-1\right)  }\widehat
		{u}_{0}\left(  \xi\right)  \right)  =\mathcal{F}_{\xi\rightarrow x}%
		^{-1}\left(  \mathrm{e}^{t\left(  \widehat{J}(\left\vert \xi\right\vert _{p})-1\right)
		}\right)  \ast u_{0}\left(  x\right)  \\
		&  =Z_{0}(x,t)\ast u_{0}\left(  x\right)  .
	\end{align*}
	In Appendix B, using \cite[Theorem 3.1]{Zuniga-networks}, we show that
	$u\left(  x,t\right)  =Z_{0}\left(  x,t\right)  \ast u_{0}\left(  x\right)  $
	is the unique solution of (\ref{System_1}), in a suitable function space, and
	that
	\[
	p(t,x,B)=%
	\begin{cases}
		\int_{B}Z_{0}(x-y,t)dy=1_{B}\left(  x\right)  \ast Z_{0}(x,t) & \text{for
		}t>0,\quad x\in\mathbb{Z}_{p},\\
		1_{B}(x) & \text{for }t=0,
	\end{cases}
	\]
	is the transition function of a Markov process. Furthermore,
	\[%
	{\displaystyle\int\limits_{\mathbb{Z}_{p}}}
	Z_{0}(x,t)dx=1\text{, for }t>0,
	\]
	see Theorem \ref{Theorem_1}, in Appendix B.
	
	\subsection{CTMCs}
	
	In this section, we discuss the results given in Appendix C. We take
	$\mathcal{K}_{r}$, $r\in\mathbb{J}$ as in (\ref{partition_K}), \ where
	$\mathbb{J}$ is a countable set, possibly infinite. The value%
	\[
	p_{r,v}(t):=%
	{\displaystyle\int\limits_{\mathcal{K}_{v}}}
	p(t,x,\mathcal{K}_{r})\frac{dx}{\mu_{\text{Haar}}\left(  \mathcal{K}%
		_{v}\right)  },
	\]
	with $\mu_{\text{Haar}}\left(  \mathcal{K}_{v}\right)  =\int_{\mathcal{K}_{v}%
	}dx$, gives the transition probability, per unit of time, that a
	particle/walker starting at some position in $\mathcal{K}_{v}$ will be found
	in the set $\mathcal{K}_{r}$ at the time $t$. \ Furthermore, $\sum
	_{r\in\mathbb{J}}p_{r,v}(t)=1$, for $t>0$, cf. Lemma \ref{Lemma_2} in Appendix C.
	
	Now, $\left[  p_{r,v}(t)\right]  _{r,v\in\mathbb{J}}$ is the transition
	matrix of a CTMC. Furthermore, under a suitable hypothesis on the Fourier
	transform of kernel $J(\left\vert x\right\vert _{p})$, the CTMC admits a
	stationary distribution $\left[  p_{r}^{\text{sta}}\right]  _{r\in\mathbb{J}%
	}=\left[  \int_{\mathcal{K}_{r}}dx\right]  _{r\in\mathbb{J}}$; see Appendix C,
	Theorem \ref{Theorem_2}. In Lemma \ref{Lemma_6A}, in Appendix E, we compute
	the rate matrix for Markov chains with finite number of states. The existence
	of a stationary distribution for this type of  CTMC follows from well-known
	results. Here, we work with CTMCs with countable states (finite or infinite)
	which are obtained as discretizations of $p$-adic heat equations; see
	\cite{Zuniga-networks}, and the references therein.

	\section{ $p$-adic Schr\"{o}dinger equations and CTQMCs}
	
	\subsection{A class of $p$-adic Schr\"{o}dinger equations}
	
	By performing a Wick rotation $t\rightarrow \mathrm{i}t$ in (\ref{System_1}), and
	taking $\Psi\left(  x,t\right)  =u\left(  x,it\right)  $, we obtain the
	following Schr\"{o}dinger equations in the unit ball:%
	\begin{equation}
		\left\{
		\begin{array}
			[c]{l}%
			\mathrm{i}\frac{\partial}{\partial t}\Psi\left(  x,t\right)  =-J\left(  \left\vert
			x\right\vert _{p}\right)  \ast\Psi\left(  x,t\right)  +\Psi\left(  x,t\right)
			\text{, }x\in\mathbb{Z}_{p},t\geq0\\
			\\
			\Psi\left(  x,0\right)  =\psi_{0}\left(  x\right)  \in\mathcal{D}%
			(\mathbb{Z}_{p}).
		\end{array}
		\right.  \label{System_2}%
	\end{equation}
	By using the Fourier transform, we obtain that the solution of (\ref{System_2}%
	) is given by%
	\begin{equation}
		\Psi\left(  x,t\right)  =Z_{0}(x,\mathrm{i}t)\ast\psi_{0}\left(  x\right)  \text{ in
		}\mathcal{D}^{\prime}(\mathbb{Q}_{p})\text{,}\label{Solution}%
	\end{equation}
	cf. \ Appendix D. This result is useful for calculations; however, it is
	necessary to show that if $\psi_{0}(x)\in L^{2}\left(  \mathbb{Z}_{p}\right)
	$, with $\left\Vert \psi_{0}\right\Vert _{2}=1$, then $\left\Vert \Psi
	(\cdot,t)\right\Vert _{2}=1$ for any $t>0$. This fact is established in
	Theorem \ref{Theorem_3}, in Appendix D.
	
	\subsection{CTQMCs}
	
	We use Theorem \ref{Theorem_0A} to construct a family of CTQMCs: for $l\geq1$,
	set $\mathbb{J}=G_{l}$ and
	\begin{equation}
		\pi_{r,v}\left(  t\right)  =\left\vert \left\langle p^{\frac{l}{2}}%
		\Omega\left(  p^{l}\left\vert x-r\right\vert _{p}\right)  ,Z_{0}(x,it)\ast
		p^{\frac{l}{2}}\Omega\left(  p^{l}\left\vert x-v\right\vert _{p}\right)
		\right\rangle \right\vert ^{2}, \label{CTQMC}%
	\end{equation}
	for $r,v\in G_{l}$. Then, $\left[  \pi_{r,v}\left(  t\right)  \right]
	_{r,v\in G_{l}}$ is the transition matrix of a CTQMC. Furthermore, $\pi
	_{r,v}\left(  t\right)  $ is an oscillatory integral:%
	\[
	\pi_{r,v}\left(  t\right)  =p^{-2l}\left\vert \text{ }%
	{\displaystyle\int\limits_{\left\vert z\right\vert _{p}\leq p^{l}}}
	\chi_{p}\left(  z\left(  v-r\right)  \right)  e^{-it\widehat{J}\left(
		\left\vert z\right\vert _{p}\right)  }dz\right\vert ^{2};
	\]

	see Proposition \ref{Poposition_4}, Theorem \ref{Theorem_5}, in Appendix D.

	\section{Discretizations of $p$-adic evolutionary
		differential equations}
	
	The numerical simulations required here use spatial discretizations of the
	$p$-adic heat and Schr\"{o}dinger equations. An in-depth discussion is given in
	Appendix D.
	
	For $l\geq1$, the $p$-adic integers of the form%
	\begin{equation}
		I=I_{0}+I_{1}p+\ldots+I_{l-1}p^{l-1}, \label{reprsentatives}%
	\end{equation}
	with $I_{j}\in\left\{  0,\ldots,p-1\right\}  $, form a set of representatives
	of the elements of the quotient \ group $G_{l}=\mathbb{Z}_{p}/p^{l}%
	\mathbb{Z}_{p}$. From a geometric perspective, $G_{l}$ corresponds to a finite
	rooted tree, with $l+1$ levels; the vertices at the top level correspond with
	the $p$-adic integers of the form (\ref{reprsentatives}). We denote by
	$\Omega\left(  p^{l}\left\vert x-I\right\vert _{p}\right)  $ the
	characteristic function of the ball $I+p^{l}\mathbb{Z}_{p}$. We denote by
	$\mathcal{D}_{l}(\mathbb{Z}_{p})$, the $\mathbb{C}$-vector space spanned by
	the linearly independent set $\left\{  \Omega\left(  p^{l}\left\vert
	x-I\right\vert _{p}\right)  \right\}  _{I\in G_{l}}$. The space of test
	functions $\mathcal{D}(\mathbb{Z}_{p})$\ satisfies
	\[
	\mathcal{D}(\mathbb{Z}_{p})=%
	{\displaystyle\bigcup\limits_{l=0}^{\infty}}
	\mathcal{D}_{l}(\mathbb{Z}_{p})\text{, with }\mathcal{D}_{0}(\mathbb{Z}%
	_{p})=\mathbb{C}\text{, and }\mathcal{D}_{l}(\mathbb{Z}_{p})\subset
	\mathcal{D}_{l+1}(\mathbb{Z}_{p})\text{.}%
	\]
	Furthermore, since the Haar measure of $\mathbb{Z}_{p}$ is one, we have%
	\[
	L^{1}(\mathbb{Z}_{p})\supset L^{\rho}(\mathbb{Z}_{p})\supset\mathcal{C}%
	(\mathbb{Z}_{p})\supset\mathcal{D}(\mathbb{Z}_{p}),\text{ for  }  \rho\in\left(
	1,\infty\right]  \text{,}%
	\]
	and $\mathcal{D}(\mathbb{Z}_{p})$ is dense in $L^{1}(\mathbb{Z}_{p})$, which
	implies that a function $f$ $\in$ $L^{1}(\mathbb{Z}_{p})$ can be approximated
	by a function $\varphi\in$ $\mathcal{D}_{l}(\mathbb{Z}_{p})$, with $l=l(f)$.
	This fact implies the existence of a linear bounded operator
	\[
	\boldsymbol{P}_{l}:L^{1}(\mathbb{Z}_{p})\rightarrow\mathcal{D}_{l}%
	(\mathbb{Z}_{p})
	\]
	such that $\boldsymbol{P}_{l}f\in\mathcal{D}_{l}(\mathbb{Z}_{p})$ is an
	approximation of $f$ $\in$ $L^{1}(\mathbb{Z}_{p})$, see Appendix E for details.
	
	If $\boldsymbol{G}:L^{1}(\mathbb{Z}_{p})\rightarrow L^{1}(\mathbb{Z}_{p})$ is
	a linear operator, a discretization $\boldsymbol{G}^{\left(  l\right)
	}:\mathcal{D}_{l}(\mathbb{Z}_{p})\rightarrow\mathcal{D}_{l}(\mathbb{Z}_{p})$
	is constructed as $\boldsymbol{G}^{\left(  l\right)  }=\boldsymbol{P}%
	_{l}\boldsymbol{G}$. Since $\mathcal{D}_{l}(\mathbb{Z}_{p})$ is a
	finite-dimensional vector space, $\boldsymbol{G}^{\left(  l\right)  }$\ is
	just a matrix.
	
	On the other hand, the space $\mathcal{D}_{l}(\mathbb{Z}_{p})$ is invariant
	under the operator $\boldsymbol{J}f=J\ast f- f$, cf. Lemma
	\ref{Lemma_6A} in Appendix E; in this way, we construct spatial approximations
	of the $p$-adic heat and Schr\"{o}dinger equations (\ref{System_1}%
	)-(\ref{System_2}); see equations (\ref{Discrte-Heat-Eq}%
	)-(\ref{Discrete-Schr-Eq}) in Appendix E. We compute explicitly the matrix
	$\boldsymbol{J}^{\left(  l\right)  }$, see Lemma \ref{Lemma_6A} in Appendix E.
	
	We identify $\mathcal{D}_{l}(\mathbb{Z}_{p})$ with the Hilbert space $%
\mathbb{C}^{p^{l}}$. This identification maps 
\[
\left\{ p^{\frac{l}{2}}\Omega \left( p^{l}\left\vert x-r\right\vert
_{p}\right) \right\} _{r\in G_{l}}
\]
into the standard orthonormal basis $\ \left\{ e_{r}\right\} $ $_{r\in G_{l}}
$ of $\mathbb{C}^{p^{l}}$. We show that the transition probabilities of the
CTQMC, $\left[ \pi _{r,v}\left( t\right) \right] _{r,v\in G_{l}}$, given in (%
\ref{CTQMC}) satisfy $\pi _{I,J}\left( t\right) =\left\vert \left\langle
e_{I}\right\vert \mathrm{e}^{it\boldsymbol{J}^{\left( l\right) }}\left\vert
e_{J}\right\rangle _{l}\right\vert ^{2}$, for $J,I\in G_{l}$. We use this
formula in the numerical simulations given in the next section.

	\label{Sec_Num_simulations}\section{Numerical simulations}
	We want to compare (via numerical simulation) the CTQMCs against a classical
	CTMCs. We take $r\in\mathbb{J}=G_{l}$, and $\mathcal{K}_{r}=B_{-l}%
	(r)=r+p^{l}\mathbb{Z}_{p}$, and identify $\left\{  p^{\frac{l}{2}}%
	\Omega\left(  p^{l}\left\vert x-r\right\vert _{p}\right)  \right\}  _{r\in
		G_{l}}$ with the standard orthonormal basis $\left\{  e_{r}\right\}  _{r\in
		G_{l}}$ of $\mathbb{R}^{p^{l}}$. Then 
\begin{eqnarray*}
p_{r,v}(t) &=&p^{l}\int\limits_{B_{-l}(v)}Z_{0}\left( x,t\right) \ast
\Omega \left( p^{l}\left\vert x-r\right\vert _{p}\right) dx \\
&=&\left\langle \Omega \left( p^{l}\left\vert x-v\right\vert _{p}\right)
,Z_{0}\left( x,t\right) \ast \Omega \left( p^{l}\left\vert x-r\right\vert
_{p}\right) \right\rangle ,
\end{eqnarray*}
for $r,v\in G_{l}$. Now, $Z_{0}\left(  x,t\right)  \ast\Omega\left(
	p^{l}\left\vert x-r\right\vert _{p}\right)  $ is the solution of the Cauchy
	problem for the heat equation with initial datum $\Omega\left(  p^{l}%
	\left\vert x-r\right\vert _{p}\right)  $, then by Remark \ref{Nota_final},
	\[
	Z_{0}\left(  x,t\right)  \ast\Omega\left(  p^{l}\left\vert x-r\right\vert
	_{p}\right)  =\mathrm{e}^{-t\boldsymbol{J}}\Omega\left(  p^{l}\left\vert x-r\right\vert
	_{p}\right)  =\mathrm{e}^{t\boldsymbol{J}^{\left(  l\right)  }}e_{r},
	\]
	and%
	\begin{equation}
		p_{r,v}(t)=\left\langle e_{v},\mathrm{e}^{t\boldsymbol{J}^{\left(  l\right)  }}%
		e_{r}\right\rangle _{l}.\label{Prob_Transition_II}%
	\end{equation}

	\subsection{Heat and Schr\"{o}dinger equation attached to Bessel potentials}
	
	Set $\Gamma\left(  \alpha\right)  =\frac{1-p^{\alpha-1}}{1-p^{-\alpha}}$,
	$\alpha\in\mathbb{R}\smallsetminus\left\{  0\right\}  $. We set for
	$\alpha\neq1$,%
	\[
	J_{\alpha}\left(  x\right)  =J_{\alpha}\left(  \left\vert
	x\right\vert _{p}\right)  =\frac{1}{\Gamma\left(  \alpha\right)  }\left\{
	\left\vert x\right\vert _{p}^{\alpha-1}-p^{\alpha-1}\right\}  \Omega\left(
	\left\vert x\right\vert _{p}\right)  ,
	\]
	and
	\[
	J_{1}\left(  x\right)  =J_{1}\left(  \left\vert
	x\right\vert _{p}\right)  =\left(  1-p^{-1}\right)  \log_{p}\left(  \frac
	{p}{\left\vert x\right\vert _{p}}\right)  \Omega\left(  \left\vert
	x\right\vert _{p}\right).
	\]
	Notice that for $\alpha\in\left(  0,1\right]  $, $J_{\alpha}\left(  \left\vert
	x\right\vert _{p}\right)  $ is not defined at the origin.
	
	Then $\left\Vert J_{\alpha}\right\Vert _{1}=1$, $\ J_{\alpha}\left(  |\xi
	|_{p}\right)  \geq0$, for $x\neq0$, and 
$\widehat{J}_{\alpha }\left( |\xi |_{p}\right) =\left( \max \left\{
1,\left\vert \xi \right\vert _{p}\right\} \right) ^{-\alpha }$, 
see \cite[Lemma 5.2]{Taibleson}.

	For $\alpha\in\left(  0,\infty\right)  $, we define the operator
	\[
	\boldsymbol{J}_{\alpha}\varphi\left(  x\right)  =J_{\alpha}\left(  \left\vert
	x\right\vert _{p}\right)  \ast\varphi\left(  x\right)  -\varphi\left(
	x\right)  \text{, for }\varphi\in\mathcal{C}(\mathbb{Z}_{p}),
	\]
	the heat equation
	\[
	\frac{\partial}{\partial t}u\left(  x,t\right)  =J_{\alpha}\left(  \left\vert
	x\right\vert _{p}\right)  \ast u\left(  x,t\right)  -u\left(  x,t\right)
	\text{, }x\in\mathbb{Z}_{p},t\geq0,
	\]
	and the Schr\"{o}dinger equation%
	\[
	\mathrm{i}\frac{\partial}{\partial t}\Psi\left(  x,t\right)  =-\left\{  J_{\alpha
	}\left(  \left\vert x\right\vert _{p}\right)  \ast\Psi\left(  x,t\right)
	-\Psi\left(  x,t\right)  \right\}  \text{, }x\in\mathbb{Z}_{p},t\geq0.
	\]

	\subsection{Numerical simulation 1}
	
	In this section, for $\alpha\in\left(  0,\infty\right)  \smallsetminus\left\{
	1\right\}  $, we provide numerical simulations for the CTMCs and CTQMCs
	attached to the heat and Schr\"{o}dinger equations attached to Bessel potentials.

	\textbf{Matrix of operator} $\boldsymbol{J}_{\alpha}$%
	
	\begin{equation}
		\boldsymbol{J}^{\left(  l\right)  }\left(  \alpha\right)  =\left[
		J_{I,K}^{\left(  l\right)  }\left(  \alpha\right)  \right]  _{I,K\in G_{l}%
		},\label{Matrix_J_1}%
	\end{equation}%
	\begin{equation}
		J_{I,K}^{\left(  l\right)  }\left(  \alpha\right)  =\left\{
		\begin{array}
			[c]{lll}%
			p^{-l}\frac{1-p^{-\alpha}}{1-p^{\alpha-1}}\left\{  \left\vert I-K\right\vert
			_{p}^{\alpha-1}-p^{\alpha-1}\right\}   & \text{if} & I\neq K\\
			&  & \\
			-%
			{\displaystyle\sum\limits_{\substack{I\in G_{l}\\I\neq0}}}
			p^{-l}\frac{1-p^{-\alpha}}{1-p^{\alpha-1}}\left\{  \left\vert I\right\vert
			_{p}^{\alpha-1}-p^{\alpha-1}\right\}   & \text{if} & I=K
		\end{array}
		\right.  \label{Matrix_J_2}%
	\end{equation}
	
	\begin{figure}
		\centering
		\includegraphics[width=0.75\linewidth]{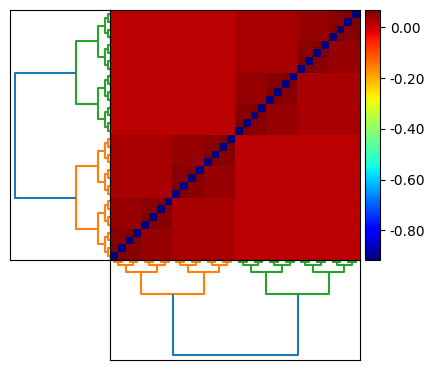}
		\caption{Numerical Simulation 1. Matrix $\boldsymbol{J}^{\left(  l\right)  }\left(  \alpha\right)$. The parameters are $p=2$, $l=5$, $\alpha=1.2$. The  CTMC has $32$ states organized in a finite tree $G_{5}$. The figure illustrates the ultrametric nature of matrix $\boldsymbol{J}^{\left(  l\right)  }\left(  \alpha\right)$.}
		\label{Figure 1}
	\end{figure}
	
	\textbf{CTMC}%
	
	\[
	p_{I,J}(t)=\left\langle e_{J},\mathrm{e}^{t\boldsymbol{J}^{\left(  l\right)
		}\left(  \alpha\right)  }e_{I}\right\rangle _{l}%
	\]

	\begin{figure}
		\centering
		\includegraphics[width=1\linewidth]{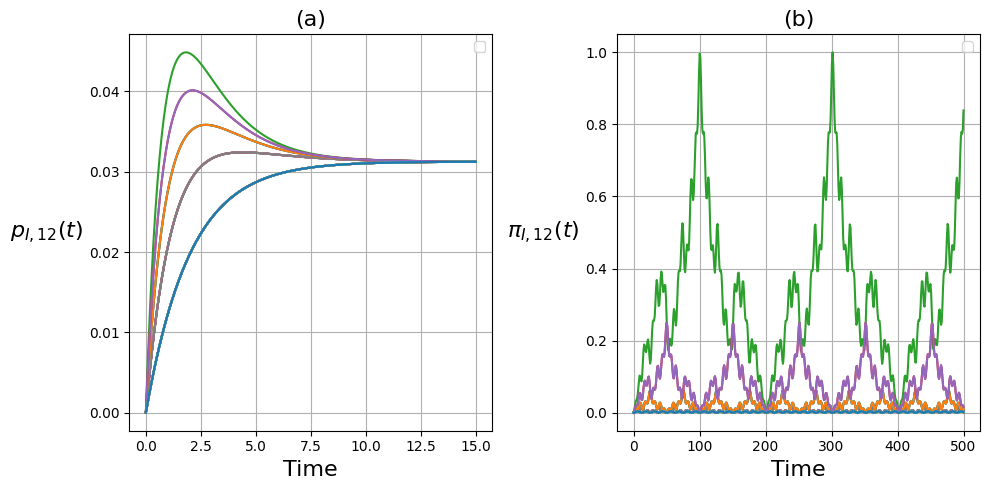}
		\caption{Numerical Simulation 1.  $p_{I,12}(t)$ versus $\pi_{I,12}\left(  t\right)  $. The state $I$ runs through $5$ states. The parameters are $p=2$, $l=5$, $\alpha=1.2$. The figure (a)
			shows that $\lim_{t\rightarrow\infty}p_{I,12}(t)\approx2^{-5}$, while figure (b) illustrates that
			$\lim_{t\rightarrow\infty}\pi_{I,12}\left(  t\right)  $\ does not exist.
		}
		\label{Figure 2}
	\end{figure}

	\textbf{CTQMC}%
	\[
	\pi_{I,J}\left(  t\right)  =\left\vert \left\langle e_{I}\right\vert
	\mathrm{e}^{\mathrm{i}t\boldsymbol{J}^{\left(  l\right)  }\left(  \alpha\right)
	}\left\vert e_{J}\right\rangle _{l}\right\vert ^{2}\text{, for }J,I\in G_{l},
	\]
	for $\alpha\in\left(  0,\infty\right)  \smallsetminus\left\{  1\right\}  $.
	Here, it is relevant to mention, that $\boldsymbol{J}^{\left(  l\right)
	}\left(  \alpha\right)  $ is a symmetric matrix, and thus $\pi_{I,J}\left(
	t\right)  $ is well-defined for $\alpha\in\left(  0,\infty\right)
	\smallsetminus\left\{  1\right\}  $.
	
	\begin{figure}
		\centering
		\includegraphics[width=1\linewidth]{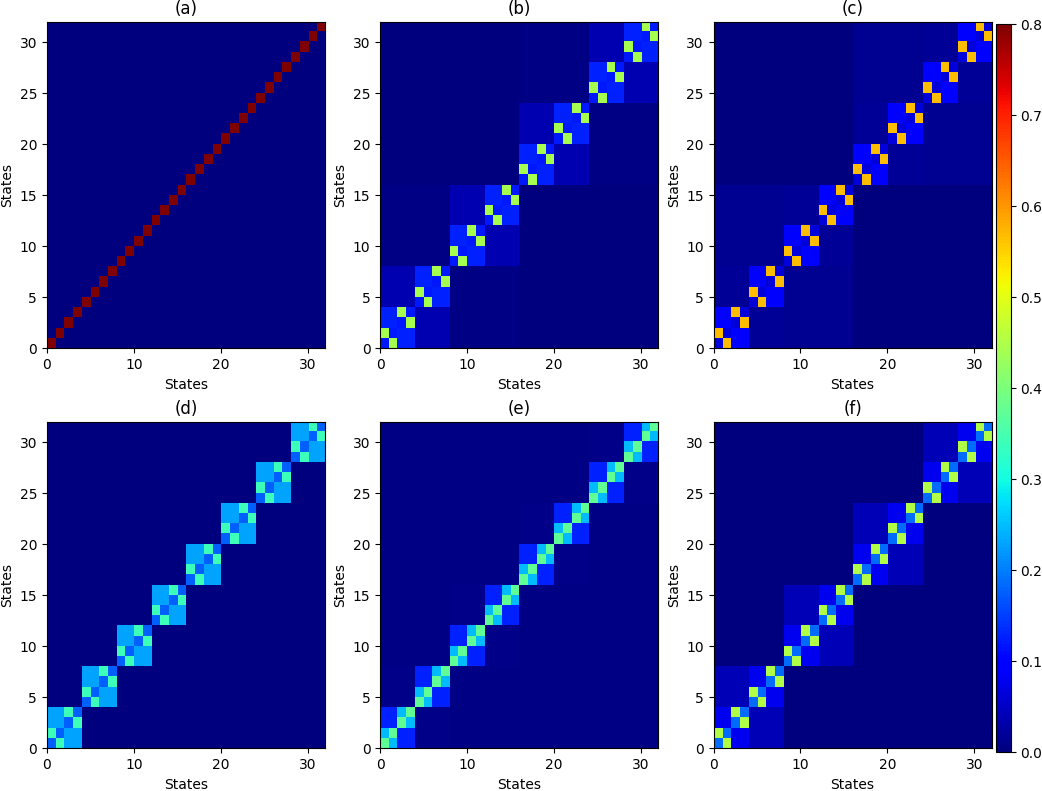}
		\caption{Numerical Simulation 1. $\pi_{I,J}\left(  t\right)  $ for six different times: $t=0$,$1$, $200$,
			$500$, $1000$, $4000$, $10000$.  The states $I$, $J$ run through $32$ states.
			The parameters are $p=2$, $l=5$, $\alpha=1.2$. 
		}
		\label{Figure 3}
	\end{figure}

	\begin{figure}
		\centering
		\includegraphics[width=1\linewidth]{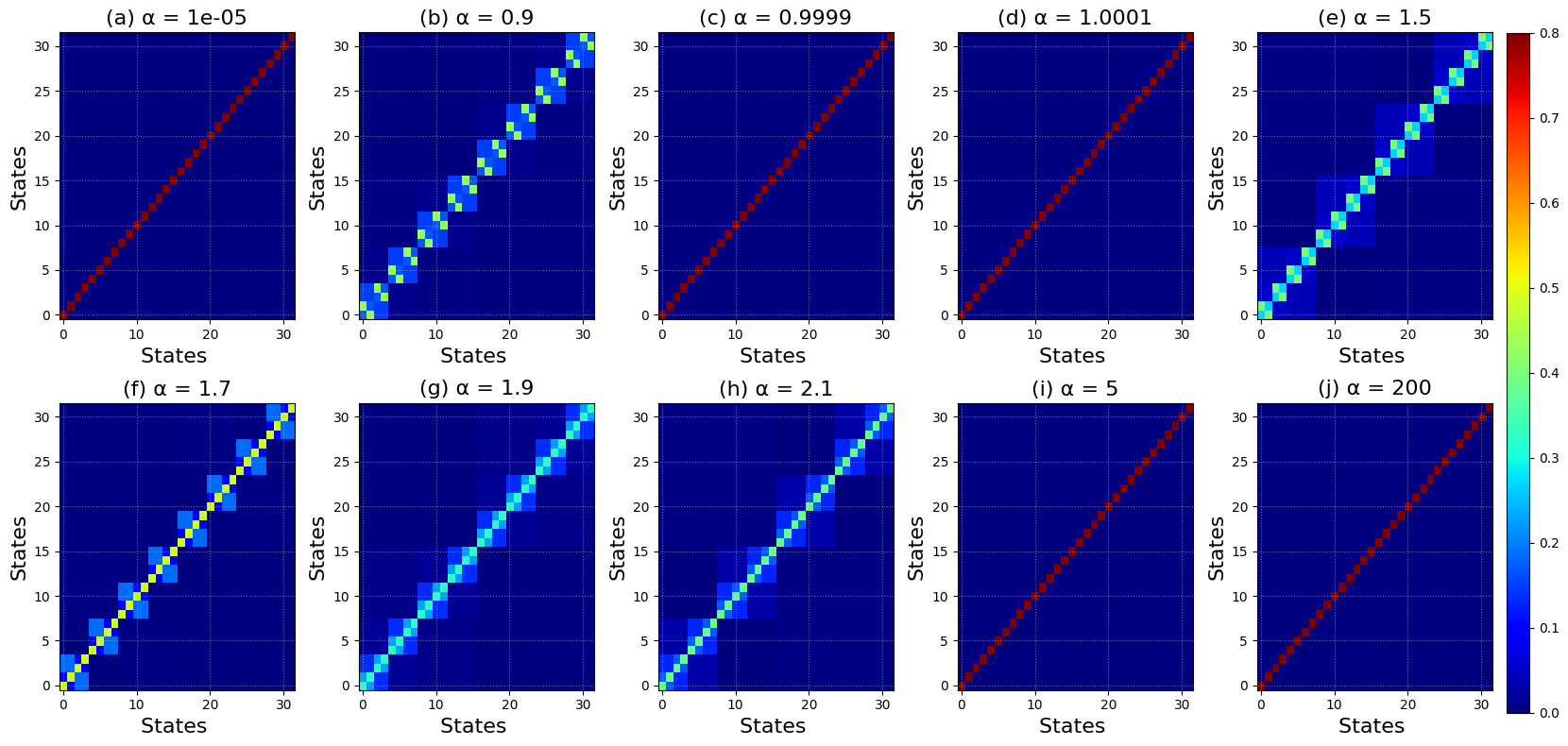}
		\caption{Numerical Simulation 1. $\pi_{I,J}\left(  200\right)  $ for six different values of the parameter
			\ $\alpha$.  The other parameters are $p=2$, $l=5$. The states $I$, $J$ run
			through $32$ states. The values $\alpha=0$, $1$ correspond to a pole,
			respectively a zero, of $\Gamma\left(  \alpha\right)  $. When the value of
			$\alpha$ is near to $0$, $1$, or $\alpha\geq5$, the transitions of the CTQMC
			occur around the diagonal; the walker only performs very short walks around each
			state. While for other values of $\alpha$, the walker performs long walks around any state.}
		\label{Figure 4}
	\end{figure}
	
	\begin{figure}
		\centering
		\includegraphics[width=0.75\linewidth]{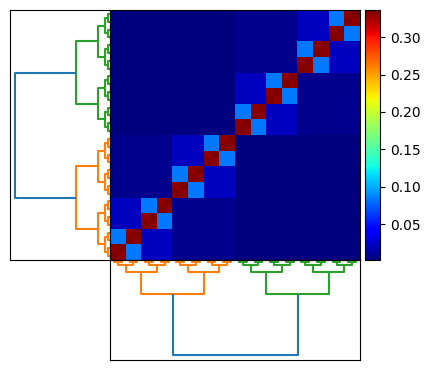}
		\caption{Numerical Simulation 1. The limiting  distribution (or longtime average) is defined as
			\[
			\chi_{I,J}=\lim_{T\rightarrow\infty}\frac{1}{T}%
			{\displaystyle\int\limits_{0}^{T}}
			\pi_{I,J}\left(  t\right)  dt.
			\]
			A numerical approximation of this  limiting distribution, $p=2$, $l=5$,
			$\alpha=1.2$ is shown in the figure. The states $I$, $J$ run through $32$
			states. In the literature, this limiting distribution is compared with
			\[
			p_{I}^{\text{sta}}=\lim_{t\rightarrow\infty}p_{I,J}\left(  t\right)  =p^{-l},
			\]
			cf. Theorem \ref{Theorem_2}, in Appendix C. The simulation shows that $\chi_{I,J}>p_{I}^{\text{sta}}=2^{-5}$. This fact is interpreted as the computational power of the CTQMC is greater than that of the corresponding CTMC. The computation of the average used the time interval $\left[  0,10000\right]$.}
		\label{Figure 5}
	\end{figure}
	
	\subsection{Numerical simulation 2}
	In this section, for $\alpha=1$, we provide numerical simulations for the CTMCs and CTQMCs attached to the heat and Schr\"{o}dinger equations attached to Bessel potentials.

	\textbf{Matrix of operator} ${\boldsymbol{J}}_{1}$%
	
	\[
	{\boldsymbol{J}}^{\left(  l\right)  }\left(  1\right)  =\left[
	J_{I,K}^{\left(  l\right)  }\left(  1\right)  \right]  _{I,K\in G_{l}},\text{
	}%
	\]

	\[
	J_{I,K}^{\left(  l\right)  }\left(  1\right)  =\left\{
	\begin{array}
		[c]{lll}%
		p^{-l}\left(  1-p^{-1}\right)  \log_{p}\left(  \frac{p}{\left\vert
			I-K\right\vert _{p}}\right)   & \text{if} & I\neq K\\
		&  & \\
		-%
		{\displaystyle\sum\limits_{\substack{I\in G_{l}\\I\neq0}}}
		p^{-l}\left(  1-p^{-1}\right)  \log_{p}\left(  \frac{p}{\left\vert
			I\right\vert _{p}}\right)   & \text{if} & I=K.
	\end{array}
	\right.
	\]

	\textbf{CTMC}%
	
	\[
	p_{r,v}(t)=\left\langle e_{v},\mathrm{e}^{t\boldsymbol{J}^{\left(  l\right)
		}\left(  1\right)  }e_{r}\right\rangle _{l}.
	\]

	\textbf{CTQMC}%
	\[
	\pi_{I,J}\left(  t\right)  =\left\vert \left\langle e_{I}\right\vert
	\mathrm{e}^{\mathrm{i}t\boldsymbol{J}^{\left(  l\right)  }\left(  1\right)  }\left\vert
	e_{J}\right\rangle _{l}\right\vert ^{2}\text{, for }J,I\in G_{l}.
	\]
	
	\begin{figure}
		\centering
		\includegraphics[width=1\linewidth]{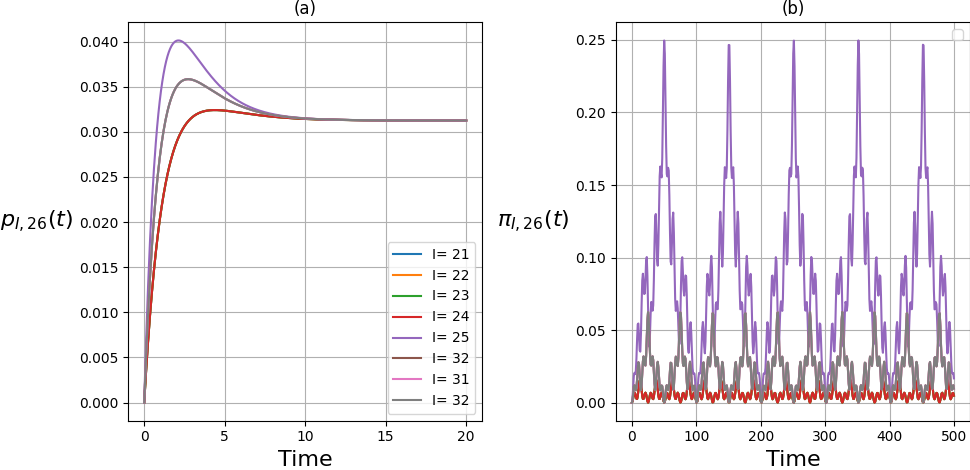}
		\caption{Numerical Simulation 2. $p_{I,26}(t)$ \ versus $\pi_{I,26}(t)$ for $8$ different states. The parameters are $p=2$, $l=5$, $\alpha=1$. The figure (a) clearly illustrates that $\lim_{t\rightarrow\infty}p_{I,26}(t)\approx2^{-5}$.
		}
		\label{Figure 6}
	\end{figure}
	
	\begin{figure}
		\centering
		\includegraphics[width=1\linewidth]{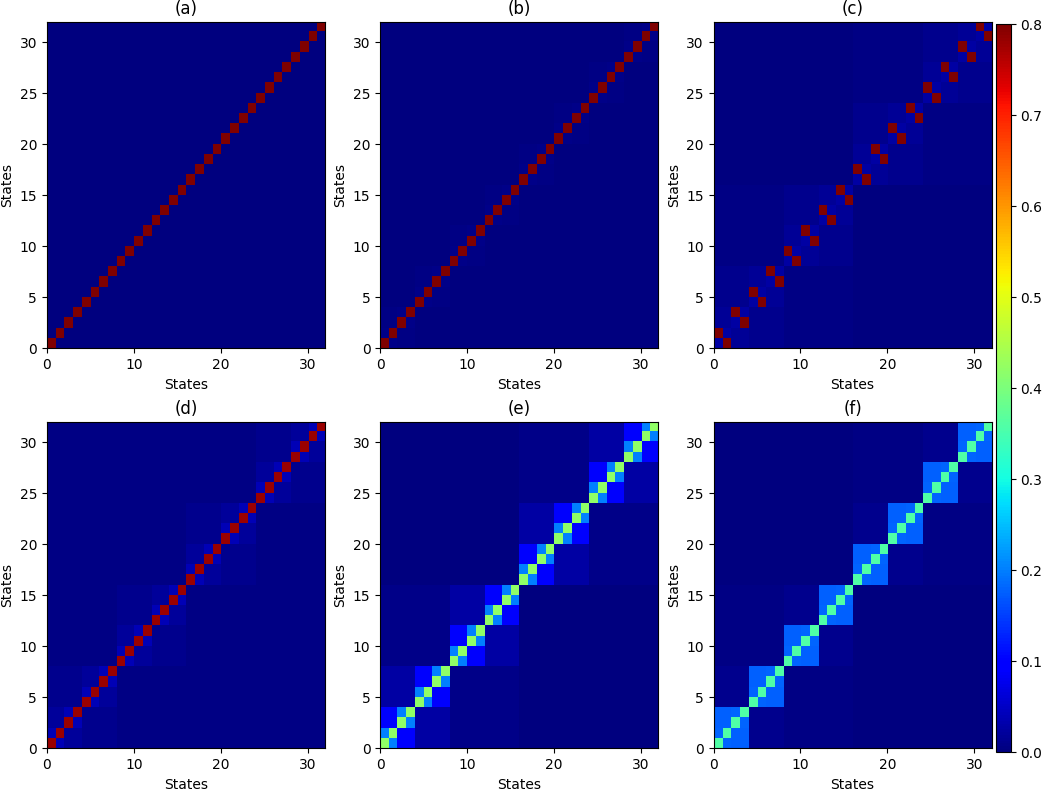}
		\caption{Numerical Simulation 2. $\pi_{I,J}\left(  t\right)  $ for six different times: $t=0$,$1$, $200$,
			$500$, $1000$, $4000$, $10000$.  The states $I$, $J$ run through $32$ states.
			The parameters are $p=2$, $l=5$, $\alpha=1$. Notice that the figures 12,15, 21 in \cite{Mulkne-Blumen} also show an ultrametric nature. }
		\label{Figure 7}
	\end{figure}

\begin{figure}
		\centering
		\includegraphics[width=0.75\linewidth]{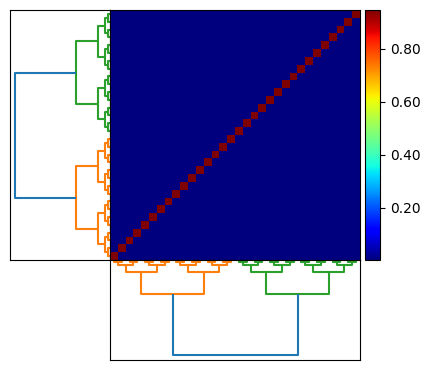}
		\caption{Numerical Simulation 2. A numerical approximation of the limiting distribution $\chi_{I, J}$. The computation of the average used the time interval $\left[  0,10000\right]$. The parameters are $p=2$, $l=5$, $\alpha=1$.}
		\label{Figure 8}
\end{figure}
	
\begin{figure}
		\centering
		\includegraphics[width=1\linewidth]{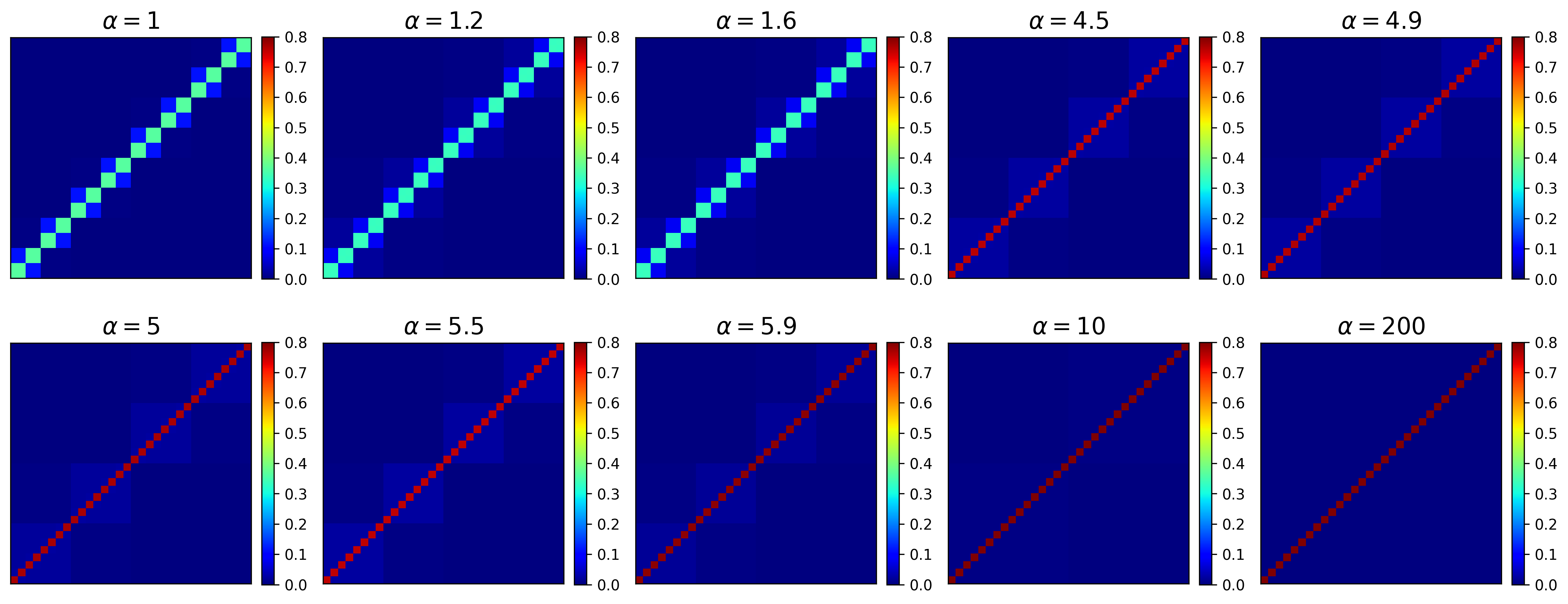}
		\caption{Limiting distribution $\chi=$ $\left[  \chi_{I,J}\right]  $ for several values of $\alpha$. The parameters are $p=2$, $l=5$. The computation of the average
			used the time interval $\left[  0,10000\right]  $.  This figure illustrates the strong influence of $\alpha$ on the limiting distribution. For $ \alpha\geq 4.5$, the distribution $\chi$ is concentrated around the diagonal, which means that only very short walks around each state are allowed.}
		\label{Figure 9}
\end{figure}
	
\begin{figure}
		\centering
		\includegraphics[width=1\linewidth]{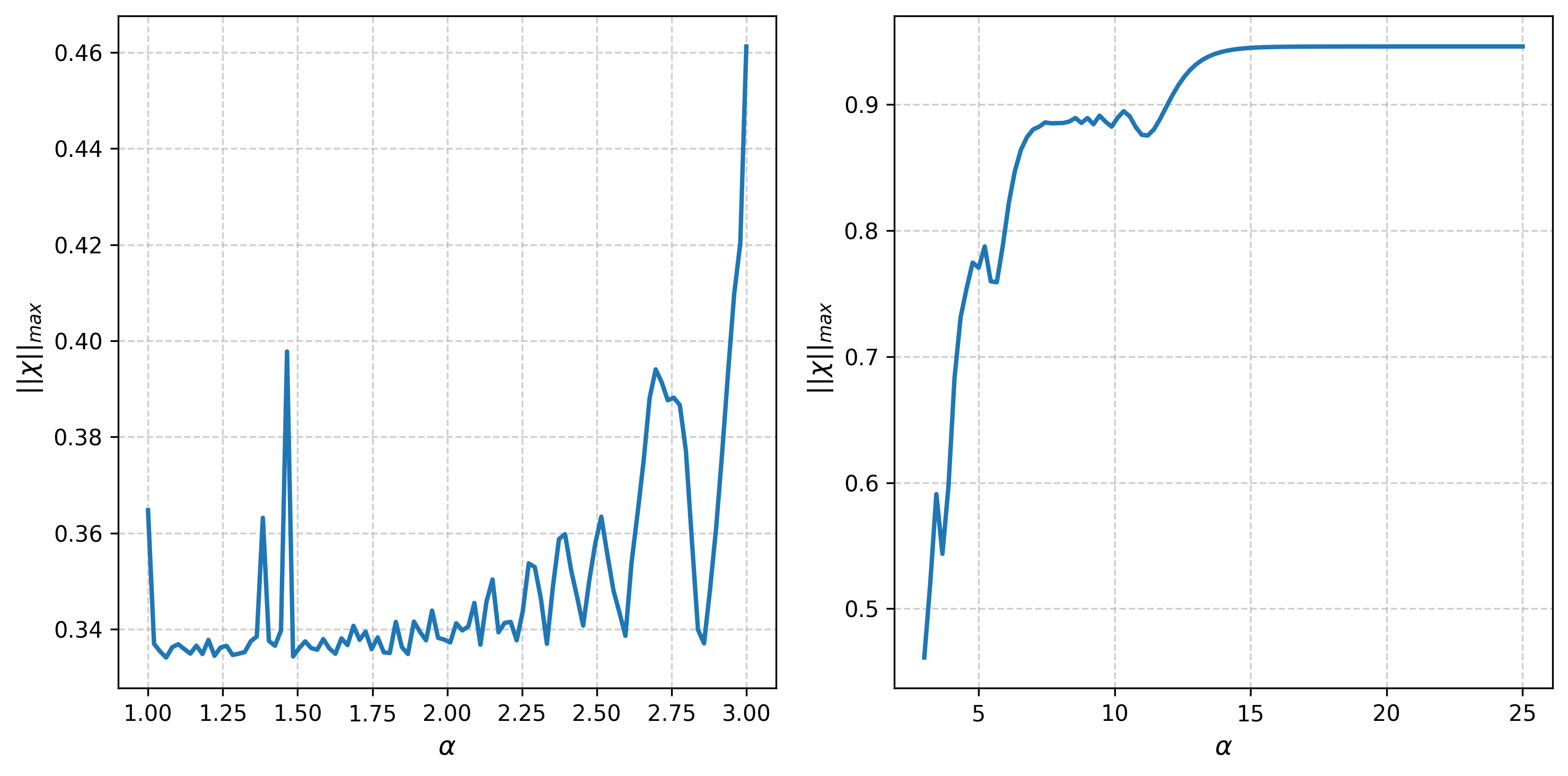}
		\caption{For  $\chi=$ $\left[  \chi_{I,J}\right]  $, we set $\left\Vert \chi\right\Vert
			_{\text{max}}=\max_{I,J}\left\{  \chi_{I,J}\right\}  $. The figure shows
			$\left\Vert \chi\right\Vert _{\text{max}}$ for several values of $\alpha$. The other
			parameters are $p=2$, $l=5$. The computation of the average used the time
			interval $\left[  0,10000\right]  $. This figure shows an alternative way to visualize the information displayed in Figure \ref{Figure 9}.
		}
		\label{Figure 10}
\end{figure}

	\section{Appendix A: Basic facts on $p$-adic analysis}
	
	In this section, we fix the notation and collect some basic results on
	$p$-adic analysis that we will use throughout the article. For a detailed
	exposition on $p$-adic analysis, the reader may consult \cite{V-V-Z},
	\cite{Zuniga-Textbook}, \cite{Alberio et al}, \cite{Taibleson}.
	
	\subsection{The field of $p$-adic numbers}
	
	The field of $p$-adic numbers $\mathbb{Q}_{p}$ is defined as the completion of
	the field of rational numbers $\mathbb{Q}$ with respect to the $p$-adic norm
	$|\cdot|_{p}$, which is defined as
	\[
	|x|_{p}=%
	\begin{cases}
		0 & \text{if }x=0\\
		p^{-\gamma} & \text{if }x=p^{\gamma}\dfrac{a}{b},
	\end{cases}
	\]
	where $a$ and $b$ are integers coprime with $p$. The integer $\gamma
	=ord_{p}(x):=ord(x)$, with $ord(0):=+\infty$, is called the\textit{\ }$p$-adic
	order of $x$.
	
	The metric space $\left(  \mathbb{Q}_{p},|\cdot|_{p}\right)  $ is a complete
	ultrametric space. As a topological space $\mathbb{Q}_{p}$\ is homeomorphic to
	a Cantor-like subset of the real line, see, e.g., \cite{V-V-Z}, \cite{Alberio et al}.
	
	Any $p$-adic number $x\neq0$ has a unique expansion of the form
	\[
	x=p^{ord(x)}\sum_{j=0}^{\infty}x_{j}p^{j},
	\]
	where $x_{j}\in\{0,1,2,\dots,p-1\}$ and $x_{0}\neq0$. By using this expansion,
	we define \textit{the fractional part }$\{x\}_{p}$\textit{ of }$x\in
	\mathbb{Q}_{p}$ as the rational number
	\[
	\{x\}_{p}=%
	\begin{cases}
		0 & \text{if }x=0\text{ or }ord(x)\geq0\\
		p^{ord(x)}\sum_{j=0}^{-ord(x)-1}x_{j}p^{j} & \text{if }ord(x)<0.
	\end{cases}
	\]
	In addition, any $x\in\mathbb{Q}_{p}\smallsetminus\left\{  0\right\}  $ can be
	represented uniquely as $x=p^{ord(x)}v$, where $\left\vert v\right\vert
	_{p}=1$.
	
	\subsection{Topology of $\mathbb{Q}_{p}$}
	
	For $r\in\mathbb{Z}$, denote by $B_{r}(a)=\{x\in\mathbb{Q}_{p};\left\vert
	x-a\right\vert _{p}\leq p^{r}\}$ the ball of radius $p^{r}$ with center at
	$a\in\mathbb{Q}_{p}$, and take $B_{r}(0):=B_{r}$. Notice that $B_{0}%
	=\mathbb{Z}_{p}$ is the ring of\textit{ }$p$-adic integers. We also denote by
	$S_{r}(a)=\{x\in\mathbb{Q}_{p};\left\vert x-a\right\vert _{p}=p^{r}\}$ the
	sphere of radius\textit{ }$p^{r}$ with center at $a\in\mathbb{Q}_{p}$, and
	take $S_{r}(0):=S_{r}$. We notice that $S_{0}=\mathbb{Z}_{p}^{\times}$ (the
	group of units of $\mathbb{Z}_{p}$). The balls and spheres are both open and
	closed subsets in $\mathbb{Q}_{p}$. In addition, two balls in $\mathbb{Q}_{p}$
	are either disjoint or one is contained in the other.
	
	As a topological space $\left(  \mathbb{Q}_{p},\left\vert \cdot\right\vert
	_{p}\right)  $ is totally disconnected, i.e., the only connected \ subsets of
	$\mathbb{Q}_{p}$ are the empty set and the points. A subset of $\mathbb{Q}%
	_{p}$ is compact if and only if it is closed and bounded in $\mathbb{Q}_{p}$,
	see, e.g., \cite[Section 1.3]{V-V-Z}, or \cite[Section 1.8]{Alberio et al}. The balls
	and spheres are compact subsets. Thus $\left(  \mathbb{Q}_{p},\left\vert
	\cdot\right\vert _{p}\right)  $ is a locally compact topological space.
	
	\subsection{The Haar measure}
	
	Since $(\mathbb{Q}_{p},+)$ is a locally compact topological group, there
	exists a Haar measure $dx$, which is invariant under translations, i.e.,
	$d(x+a)=dx$, \cite{Halmos}. If we normalize this measure by the condition
	$\int_{\mathbb{Z}_{p}}dx=1$, then $dx$ is unique.
	
	\begin{notation}
		We will use $\Omega\left(  p^{-r}\left\vert x-a\right\vert _{p}\right)  $ to
		denote the characteristic function of the ball $B_{r}(a)=a+p^{-r}%
		\mathbb{Z}_{p}$, where
		\[
		\mathbb{Z}_{p}=\left\{  x\in\mathbb{Q}_{p};\left\vert x\right\vert _{p}%
		\leq1\right\}
		\]
		is the $1$-dimensional unit ball. For more general sets, we will use the
		notation $1_{A}$ for the characteristic function of set $A$.
	\end{notation}
	
	\subsection{The Bruhat-Schwartz space}
	
	A complex-valued function $\varphi$ defined on $\mathbb{Q}_{p}$ is
	\textit{called locally constant} if for any $x\in\mathbb{Q}_{p}$ there exist
	an integer $l(x)\in\mathbb{Z}$ such that%
	\begin{equation}
		\varphi(x+x^{\prime})=\varphi(x)\text{ for any }x^{\prime}\in B_{l(x)}.
		\label{local_constancy}%
	\end{equation}
	A function $\varphi:\mathbb{Q}_{p}\rightarrow\mathbb{C}$ is called a
	Bruhat-Schwartz function\textit{ }(or a test function) if it is locally
	constant with compact support. Any test function can be represented as a
	linear combination, with complex coefficients, of characteristic functions of
	balls. The $\mathbb{C}$-vector space of Bruhat-Schwartz functions is denoted
	by $\mathcal{D}(\mathbb{Q}_{p})$. For $\varphi\in\mathcal{D}(\mathbb{Q}_{p})$,
	the largest number $l=l(\varphi)$ satisfying (\ref{local_constancy}) is called the exponent of local constancy (or the parameter of constancy) of $\varphi$.
	
	\subsection{$L^{\rho}$ spaces}
	
	Given $\rho\in\lbrack1,\infty)$, we denote by $L^{\rho}\left(\mathbb{Q}_{p}\right)  $ the $\mathbb{C}$-vector space of all the complex valued
	functions $g$ satisfying
	\[
	\left\Vert g\right\Vert _{\rho}=\left(  \text{ }%
	{\displaystyle\int\limits_{\mathbb{Q}_{p}}}
	\left\vert g\left(  x\right)  \right\vert ^{\rho}dx\right)  ^{\frac{1}{\rho}%
	}<\infty,
	\]
	where $dx$ is the normalized Haar measure on $\left(  \mathbb{Q}_{p},+\right)
	$.
	
	If $U$ is an open subset of $\mathbb{Q}_{p}$, $\mathcal{D}(U)$ denotes the
	$\mathbb{C}$-vector space of test functions with supports contained in $U$,
	then $\mathcal{D}(U)$ is dense in
	\[
	L^{\rho}\left(  U\right)  =\left\{  \varphi:U\rightarrow\mathbb{C};\left\Vert
	\varphi\right\Vert _{\rho}=\left\{
	{\displaystyle\int\limits_{U}}
	\left\vert \varphi\left(  x\right)  \right\vert ^{\rho}dx\right\}
	^{\frac{1}{\rho}}<\infty\right\}  ,
	\]
	for $1\leq\rho<\infty$, see, e.g., \cite[Section 4.3]{Alberio et al}.
	
	\subsection{The Fourier transform}
	
	Set $\chi_{p}(y)=\exp(2\pi i\{y\}_{p})$ for $y\in\mathbb{Q}_{p}$. The map
	$\chi_{p}(\cdot)$ is an additive character on $\mathbb{Q}_{p}$, i.e., a
	continuous map from $\left(  \mathbb{Q}_{p},+\right)  $ into $S$ (the unit
	circle considered as a multiplicative group) satisfying $\chi_{p}(x_{0}%
	+x_{1})=\chi_{p}(x_{0})\chi_{p}(x_{1})$, $x_{0},x_{1}\in\mathbb{Q}_{p}$.\ The
	additive characters of $\mathbb{Q}_{p}$ form an Abelian group which is
	isomorphic to $\left(  \mathbb{Q}_{p},+\right)  $. The isomorphism is given by
	$\kappa\rightarrow\chi_{p}(\kappa x)$, see, e.g., \cite[Section 2.3]{Alberio et al}.
	
	The Fourier transform of $\varphi\in\mathcal{D}(\mathbb{Q}_{p})$ is defined
	as
	\[
	\mathcal{F}\varphi(\xi)=%
	{\displaystyle\int\limits_{\mathbb{Q}_{p}}}
	\chi_{p}(\xi x)\varphi(x)dx\quad\text{for }\xi\in\mathbb{Q}_{p}.
	\]
	The Fourier transform is a linear isomorphism from $\mathcal{D}(\mathbb{Q}%
	_{p})$ onto itself satisfying
	\begin{equation}
		(\mathcal{F}(\mathcal{F}\varphi))(\xi)=\varphi(-\xi), \label{Eq_FFT}%
	\end{equation}
	see, e.g., \cite[Section 4.8]{Alberio et al}. We will also use the notation
	$\mathcal{F}_{x\rightarrow\kappa}\varphi$ and $\widehat{\varphi}$\ for the
	Fourier transform of $\varphi$.
	
	The Fourier transform extends to $L^{2}$. If $f\in L^{2}\left(  \mathbb{Q}%
	_{p}\right)  $, its Fourier transform is defined as
	\[
	(\mathcal{F}f)(\xi)=\lim_{k\rightarrow\infty}%
	{\displaystyle\int\limits_{\left\vert x\right\vert _{p}\leq p^{k}}}
	\chi_{p}(\xi x)f(x)dx,\quad\text{for }\xi\in%
	\mathbb{Q}
	_{p},
	\]
	where the limit is taken in $L^{2}\left(  \mathbb{Q}_{p}\right)  $. We recall
	that the Fourier transform is unitary on $L^{2}\left(  \mathbb{Q}_{p}\right)
	,$ i.e. $||f||_{2}=||\mathcal{F}f||_{2}$ for $f\in L^{2}$ and that
	(\ref{Eq_FFT}) is also valid in $L^{2}$, see, e.g., \cite[Chapter III, Section
	2]{Taibleson}.
	
	\subsection{Distributions}
	
	The $\mathbb{C}$-vector space $\mathcal{D}^{\prime}\left(  \mathbb{Q}%
	_{p}\right)  $ of all continuous linear functionals on $\mathcal{D}%
	(\mathbb{Q}_{p})$ is called the Bruhat-Schwartz space of distributions. Every
	linear functional on $\mathcal{D}(\mathbb{Q}_{p})$ is continuous, i.e.,
	$\mathcal{D}^{\prime}\left(  \mathbb{Q}_{p}\right)  $\ agrees with the
	algebraic dual of $\mathcal{D}(\mathbb{Q}_{p})$, see, e.g., \cite[Chapter 1,
	VI.3, Lemma]{V-V-Z}.
	
	We endow $\mathcal{D}^{\prime}\left(  \mathbb{Q}_{p}\right)  $ with the weak
	topology, i.e. a sequence $\left\{  T_{j}\right\}  _{j\in\mathbb{N}}$ in
	$\mathcal{D}^{\prime}\left(  \mathbb{Q}_{p}\right)  $ converges to $T$ if
	$\lim_{j\rightarrow\infty}T_{j}\left(  \varphi\right)  =T\left(
	\varphi\right)  $ for any $\varphi\in\mathcal{D}(\mathbb{Q}_{p})$. The map
	\[%
	\begin{array}
		[c]{lll}%
		\mathcal{D}^{\prime}\left(  \mathbb{Q}_{p}\right)  \times\mathcal{D}%
		(\mathbb{Q}_{p}) & \rightarrow & \mathbb{C}\\
		\left(  T,\varphi\right)  & \rightarrow & T\left(  \varphi\right)
	\end{array}
	\]
	is a bilinear form which is continuous in $T$ and $\varphi$ separately. We
	call this map the pairing between $\mathcal{D}^{\prime}\left(  \mathbb{Q}%
	_{p}\right)  $ and $\mathcal{D}(\mathbb{Q}_{p})$. From now on we will use
	$\left(  T,\varphi\right)  $ instead of $T\left(  \varphi\right)  $.
	
	\subsection{The Fourier transform of a distribution}
	
	The Fourier transform $\mathcal{F}\left[  T\right]  $ of a distribution
	$T\in\mathcal{D}^{\prime}\left(  \mathbb{Q}_{p}\right)  $ is defined by%
	\[
	\left(  \mathcal{F}\left[  T\right]  ,\varphi\right)  =\left(  T,\mathcal{F}%
	\left[  \varphi\right]  \right)  \text{ for all }\varphi\in\mathcal{D}\left(
	\mathbb{Q}_{p}\right)  \text{.}%
	\]
	The Fourier transform $T\rightarrow\mathcal{F}\left[  T\right]  $ is a linear
	and continuous isomorphism from $\mathcal{D}^{\prime}\left(  \mathbb{Q}%
	_{p}\right)  $\ onto $\mathcal{D}^{\prime}\left(  \mathbb{Q}_{p}\right)  $.
	Furthermore, $T=\mathcal{F}\left[  \mathcal{F}\left[  T\right]  \left(
	-\xi\right)  \right]  $.
	
	Let $T\in\mathcal{D}^{\prime}\left(  \mathbb{Q}_{p}\right)  $ be a
	distribution. Then supp $T\subset B_{L}$ if and only if $\mathcal{F}\left[
	T\right]  $ is a locally constant function, and the exponent of local
	constancy of $\mathcal{F}\left[  T\right]  $ is $\geq-L$. In addition%
	
	\[
	\mathcal{F}\left[  T\right]  \left(  \xi\right)  =\left(  T\left(  y\right)
	,\Omega\left(  p^{-L}\left\vert y\right\vert _{p}\right)  \chi_{p}\left(  \xi
	y\right)  \right)  ,
	\]
	see, e.g., \cite[Section 4.9]{Alberio et al}.
	
	\subsection{The direct product of distributions}
	
	Given $F,G\in\mathcal{D}^{\prime}\left(  \mathbb{Q}_{p}\right)  $, their
	\textit{direct product }$F\times G$ is defined by the formula%
	\[
	\left(  F\left(  x\right)  \times G\left(  y\right)  ,\varphi\left(
	x,y\right)  \right)  =\left(  F\left(  x\right)  ,\left(  G\left(  y\right)
	,\varphi\left(  x,y\right)  \right)  \right)  \text{ for }\varphi\left(
	x,y\right)  \in\mathcal{D}\left(  \mathbb{Q}_{p}^{2}\right)  .
	\]
	The direct product is commutative: $F\times G=G\times F$. In addition the
	direct product is continuous with respect to the joint factors.
	
	\subsection{The convolution of distributions}
	
	Given $F,G\in\mathcal{D}^{\prime}\left(  \mathbb{Q}_{p}\right)  $, their
	convolution $F\ast G$ is defined by%
	\[
	\left(  F\ast G,\varphi\right)  =\lim_{k\rightarrow\infty}\left(  F\left(
	y\right)  \times G\left(  x\right)  ,\Omega\left(  p^{-k}\left\vert
	y\right\vert _{p}\right)  \varphi\left(  x+y\right)  \right)
	\]
	if the limit exists for all $\varphi\in\mathcal{D}\left(  \mathbb{Q}%
	_{p}\right)  $. We recall that if $F\ast G$ exists, then $G\ast F$ exists and
	$F\ast G=G\ast F$, see, e.g., \cite[Section 7.1]{V-V-Z}. If $F,G\in
	\mathcal{D}^{\prime}\left(  \mathbb{Q}_{p}\right)  $ and supp$G\subset B_{L}$,
	then the convolution $F\ast G$ exists, and it is given by the formula%
	\[
	\left(  F\ast G,\varphi\right)  =\left(  F\left(  y\right)  \times G\left(
	x\right)  ,\Omega\left(  p^{-L}\left\vert y\right\vert _{p}\right)
	\varphi\left(  x+y\right)  \right)  \text{ for }\varphi\in\mathcal{D}\left(
	\mathbb{Q}_{p}\right)  .
	\]
	In the case in which $G=\psi\in\mathcal{D}\left(  \mathbb{Q}_{p}\right)  $,
	$F\ast\psi$ is a locally constant function given by
	\[
	\left(  F\ast\psi\right)  \left(  y\right)  =\left(  F\left(  x\right)
	,\psi\left(  y-x\right)  \right)  ,
	\]
	see, e.g., \cite[Section 7.1]{V-V-Z}.
	
	\subsection{The multiplication of distributions}
	
	Set $\delta_{k}\left(  x\right)  :=p^{k}\Omega\left(  p^{k}\left\vert
	x\right\vert _{p}\right)  $ for $k\in\mathbb{N}$. Given $F,G\in\mathcal{D}%
	^{\prime}\left(  \mathbb{Q}_{p}\right)  $, their product $F\cdot G$ is defined
	by%
	\[
	\left(  F\cdot G,\varphi\right)  =\lim_{k\rightarrow\infty}\left(  G,\left(
	F\ast\delta_{k}\right)  \varphi\right)
	\]
	if the limit exists for all $\varphi\in\mathcal{D}\left(  \mathbb{Q}%
	_{p}\right)  $. If the product $F\cdot G$ exists then the product $G\cdot F$
	exists and they are equal.
	
	We recall that \ the existence of the product $F\cdot G$ is equivalent \ to
	the existence of $\mathcal{F}\left[  F\right]  \ast\mathcal{F}\left[
	G\right]  $. In addition, $\mathcal{F}\left[  F\cdot G\right]  =\mathcal{F}%
	\left[  F\right]  \ast\mathcal{F}\left[  G\right]  $ and $\mathcal{F}\left[
	F\ast G\right]  =\mathcal{F}\left[  F\right]  \cdot\mathcal{F}\left[
	G\right]  $, see, e.g., \cite[Section 7.5]{V-V-Z}.

	\section{Appendix B: $p$-adic heat equations on the unit ball and Markov
		processes}
	\subsection{Preliminary results}
	
	For fix a real-valued function $J:\left[  0,1\right]  \rightarrow\left[
	0,\infty\right)  $, we define%
	\[%
	\begin{array}
		[c]{cccc}%
		J: & \mathbb{Z}_{p} & \rightarrow & \left[  0,\infty\right)  \\
		& x & \rightarrow & J(\left\vert x\right\vert _{p}).
	\end{array}
	\]
	We assume that
	\[
	\left\Vert J\right\Vert _{L^{1}\left(  \mathbb{Z}_{p}\right)  }:=\left\Vert
	J(\left\vert x\right\vert _{p})\right\Vert _{1}=%
	{\displaystyle\int\limits_{\mathbb{Z}_{p}}}
	J(\left\vert x\right\vert _{p})dx=1.
	\]

	\begin{remark}
		(i) By extending $J(\left\vert x\right\vert _{p})$ as zero outside of the unit
		ball, we have $J(\left\vert x\right\vert _{p})\in L^{1}\left(  \mathbb{Q}%
		_{p}\right)  $, and then its Fourier transform $\widehat{J}(\xi):\mathbb{Q}%
		_{p}\rightarrow\mathbb{C}$ is well-defined. The Fourier transform does not
		preserve the support of $J(\left\vert x\right\vert _{p})$, i.e. $\widehat
		{J}(\xi)\neq0$ for points $\xi$ outside of the unit ball.
		
		(ii) Notice that $\widehat{J}(0)=\left\Vert J(\left\vert x\right\vert
		_{p})\right\Vert _{1}=1$.
	\end{remark}
	
	\begin{lemma}
		\label{Lemma_1A}With the above notation, the following assertions hold:
		
		\noindent(i) the Fourier transform $\widehat{J}(\xi)$ of \ $J(\left\vert
		x\right\vert _{p})$ is a real-valued function given by
		\[
		\widehat{J}(\xi)=%
		{\displaystyle\sum\limits_{j=0}^{\infty}}
		J(p^{-j})\left\{  p^{-j}\Omega\left(  p^{-j}\left\vert \xi\right\vert
		_{p}\right)  -p^{-j-1}\Omega\left(  p^{-j-1}\left\vert \xi\right\vert
		_{p}\right)  \right\}  \text{, }\xi\in\mathbb{Q}_{p};
		\]

		\noindent(ii) for $\xi\in\mathbb{Q}_{p}\smallsetminus\left\{  0\right\}  $,
		$\widehat{J}(\xi)=\widehat{J}(u\xi)$, for any $u\in\mathbb{Z}_{p}^{\times}$,
		this means that $\widehat{J}(\xi)$ is a radial function $\widehat{J}%
		(\xi)=\widehat{J}(\left\vert \xi\right\vert _{p})$ for $\xi\in\mathbb{Q}%
		_{p}\smallsetminus\left\{  0\right\}  $;
		
		\noindent(iii) for $\xi\in\mathbb{Q}_{p}$, $\widehat{J}(\xi)=\widehat{J}%
		(\xi+\xi_{0})$ for any $\xi\in\mathbb{Z}_{p}$. In particular $\widehat
		{J}(0)=\widehat{J}(\xi_{0})=\left\Vert J(\left\vert x\right\vert
		_{p})\right\Vert _{1}$, for $\xi_{0}\in\mathbb{Z}_{p}$. Then $\widehat{J}%
		(\xi)=\widehat{J}(\left\vert \xi\right\vert _{p})$ for $\xi\in\mathbb{Q}_{p}$.
	\end{lemma}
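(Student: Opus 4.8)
The plan is to compute $\widehat{J}$ directly from the definition of the Fourier transform, using that $J(|x|_p)$ is constant on each sphere $S_{-j}=\{x\in\mathbb{Q}_p:|x|_p=p^{-j}\}$, $j\ge 0$, and is supported on $\mathbb{Z}_p=\bigsqcup_{j\ge 0}S_{-j}$ up to the null set $\{0\}$ (so the value $J(0)$ plays no role).

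First I would write $J(|x|_p)=\sum_{j=0}^{\infty}J(p^{-j})\,1_{S_{-j}}(x)$ for every $x\neq 0$, hence a.e.; since the partial sums are dominated in absolute value by $J(|x|_p)\in L^{1}(\mathbb{Z}_p)$, dominated convergence lets me interchange the series with the integral, giving $\widehat{J}(\xi)=\sum_{j\ge 0}J(p^{-j})\int_{S_{-j}}\chi_p(\xi x)\,dx$. Next I would evaluate $\int_{S_{-j}}\chi_p(\xi x)\,dx$ by writing $S_{-j}=B_{-j}\smallsetminus B_{-j-1}$ and using the elementary identity $\int_{B_{-m}}\chi_p(\xi x)\,dx=p^{-m}\Omega(p^{-m}|\xi|_p)$, itself obtained from $\int_{\mathbb{Z}_p}\chi_p(\eta y)\,dy=\Omega(|\eta|_p)$ by the change of variables $x=p^{m}y$. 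This yields $\int_{S_{-j}}\chi_p(\xi x)\,dx=p^{-j}\Omega(p^{-j}|\xi|_p)-p^{-j-1}\Omega(p^{-j-1}|\xi|_p)$, and substituting into the series gives the formula in (i). I would then record that the series converges absolutely: for $j$ large both $\Omega$-factors equal $1$, so the tail reduces to $(1-p^{-1})\sum_j J(p^{-j})p^{-j}$, which is $\|J\|_1=1$ since $\mu(S_{-j})=p^{-j}(1-p^{-1})$, while only finitely many transitional terms remain. Real-valuedness is then visible term by term, each $J(p^{-j})\ge 0$ and each $\Omega$-factor being $0$ or $1$; it also follows from $J(|{-x}|_p)=J(|x|_p)$ via $\overline{\widehat{J}(\xi)}=\widehat{J}(\xi)$ after the substitution $x\mapsto -x$.

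For (ii), the formula in (i) already displays $\widehat{J}(\xi)$ as a function of $|\xi|_p$ alone on $\mathbb{Q}_p\smallsetminus\{0\}$, so $\widehat{J}(u\xi)=\widehat{J}(\xi)$ for $u\in\mathbb{Z}_p^{\times}$; alternatively substitute $x\mapsto u^{-1}x$ in the integral. For (iii), since $\mathrm{supp}\,J(|\cdot|_p)\subset B_0=\mathbb{Z}_p$, the general principle recalled in Appendix A (a distribution has support in $B_L$ iff its Fourier transform is locally constant with exponent of local constancy $\ge -L$) gives, with $L=0$, that $\widehat{J}$ is constant on cosets of $\mathbb{Z}_p$, i.e.\ $\widehat{J}(\xi+\xi_0)=\widehat{J}(\xi)$ for $\xi_0\in\mathbb{Z}_p$; a direct check works too, since $\chi_p(\xi_0 x)=1$ when $\xi_0,x\in\mathbb{Z}_p$. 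In particular $\widehat{J}(\xi)=\widehat{J}(0)=\|J\|_1$ for all $\xi\in\mathbb{Z}_p$, so $\widehat{J}$ is the constant $1$ on the unit ball and hence trivially radial there; together with (ii) this gives $\widehat{J}(\xi)=\widehat{J}(|\xi|_p)$ on all of $\mathbb{Q}_p$.

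I do not expect a genuine obstacle here. The only steps needing a little care are the justification of the term-by-term integration (handled by dominated convergence against $J(|x|_p)\in L^1$) and the bookkeeping of which $\Omega$-factors vanish for small $j$ when $|\xi|_p>1$, which is what makes the expression in (i) a genuinely infinite — but convergent — series rather than a finite sum.
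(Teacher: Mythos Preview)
Your proof is correct and follows essentially the same route as the paper: decompose $\mathbb{Z}_p\smallsetminus\{0\}$ into the spheres $p^{j}\mathbb{Z}_p^{\times}$, reduce the sphere integral to a difference of ball integrals, and apply the standard identity $\int_{p^{j}\mathbb{Z}_p}\chi_p(\xi x)\,dx=p^{-j}\Omega(p^{-j}|\xi|_p)$; parts (ii) and (iii) are handled by the same change of variables and the same observation that $\chi_p(\xi_0 x)=1$ for $\xi_0,x\in\mathbb{Z}_p$. You add some care the paper omits (dominated convergence to justify the term-by-term integration, a remark on absolute convergence of the series, and an explicit argument for real-valuedness), but the underlying strategy is identical.
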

	
	\begin{proof}
		(i) By using the partition,
		\[
		\mathbb{Z}_{p}\smallsetminus\left\{  0\right\}  =%
		{\displaystyle\bigsqcup\limits_{j=0}^{\infty}}
		p^{j}\mathbb{Z}_{p}^{\times},
		\]
		and the fact that $\mu_{\text{Haar}}\left(  \left\{  0\right\}  \right)  =0$,
		\begin{align}
			\widehat{J}(\xi)  &  =%
			{\displaystyle\int\limits_{\mathbb{Z}_{p}}}
			J(\left\vert x\right\vert _{p})\chi_{p}\left(  \xi x\right)  dx=%
			{\displaystyle\int\limits_{\mathbb{Z}_{p}\smallsetminus\left\{  0\right\}  }}
			J(\left\vert x\right\vert _{p})\chi_{p}\left(  \xi x\right)  dx=%
			{\displaystyle\sum\limits_{j=0}^{\infty}}
			\text{ }%
			{\displaystyle\int\limits_{p^{j}\mathbb{Z}_{p}^{\times}}}
			J(\left\vert x\right\vert _{p})\chi_{p}\left(  \xi x\right)  dx\nonumber\\
			&  =%
			{\displaystyle\sum\limits_{j=0}^{\infty}}
			J(p^{-j})%
			{\displaystyle\int\limits_{p^{j}\mathbb{Z}_{p}^{\times}}}
			\chi_{p}\left(  \xi x\right)  dx=%
			{\displaystyle\sum\limits_{j=0}^{\infty}}
			J(p^{-j})\left\{  \text{ }%
			{\displaystyle\int\limits_{p^{j}\mathbb{Z}_{p}}}
			\chi_{p}\left(  \xi x\right)  dx-%
			{\displaystyle\int\limits_{p^{j+1}\mathbb{Z}_{p}}}
			\chi_{p}\left(  \xi x\right)  dx\right\}  . \label{Eq_J_FT}%
		\end{align}
		Now, the announced formula follows from (\ref{Eq_J_FT}), by the following
		formula:
		\[%
		{\displaystyle\int\limits_{p^{j}\mathbb{Z}_{p}}}
		\chi_{p}\left(  \xi x\right)  dx=p^{-j}%
		{\displaystyle\int\limits_{\mathbb{Z}_{p}}}
		\chi_{p}\left(  p^{j}\xi y\right)  dy=p^{-j}\Omega\left(  p^{-j}\left\vert
		\xi\right\vert _{p}\right)  \text{, where }x=p^{j}y,dx=p^{-j}dy.
		\]
		(ii) By changing variables as $y=ux$, $dy=dx$, and using that $\left\vert
		u^{-1}y\right\vert _{p}$ $=\left\vert y\right\vert _{p}$, for any
		$u\in\mathbb{Z}_{p}^{\times}$,
		\[
		\widehat{J}(u\xi)=%
		{\displaystyle\int\limits_{\mathbb{Z}_{p}}}
		J(\left\vert x\right\vert _{p})\chi_{p}\left(  \xi ux\right)  dx=%
		{\displaystyle\int\limits_{\mathbb{Z}_{p}}}
		J(\left\vert y\right\vert _{p})\chi_{p}\left(  \xi y\right)  dy=\widehat
		{J}(\xi).
		\]
		(iii) If $\xi_{0}\in\mathbb{Z}_{p}$, then $\chi_{p}\left(  \xi_{0}x\right)
		=1$ for any $x\in\mathbb{Z}_{p}$, consequently, $\widehat{J}(\xi)=\widehat
		{J}(\xi+\xi_{0})$ for any $\xi\in\mathbb{Z}_{p}$. Finally, for $\xi_{0}%
		\in\mathbb{Z}_{p}$, \
		\[
		\widehat{J}(\xi_{0})=%
		{\displaystyle\int\limits_{\mathbb{Z}_{p}}}
		J(\left\vert x\right\vert _{p})dx=\left\Vert J(\left\vert x\right\vert
		_{p})\right\Vert _{1}=\widehat{J}(0).
		\]
		
	\end{proof}
	
	\begin{remark}
		As a corollary, the function $F(\xi)=\exp\left(  -t\left(  1-\widehat
		{J}(\left\vert \xi\right\vert _{p})\right)  \right)  $, for $t>0$, is a
		locally constant function in $\mathbb{Q}_{p}$, satisfying $F(\xi)=F(\xi
		+\xi_{0})$, for $\xi\in\mathbb{Q}_{p}$, $\xi_{0}\in\mathbb{Z}_{p}$. Then,
		$F(\xi)$ defines a distribution from $\mathcal{D}^{\prime}\left(
		\mathbb{Q}_{p}\right)  $, whose Fourier transform is a distribution supported
		in the unit ball,%
		\[
		Z_{0}\left(  x,t\right)  :=\left\{
		\begin{array}
			[c]{ll}%
			\mathcal{F}_{\xi\rightarrow x}^{-1}\left(  \mathrm{e}^{-t\left(  1-\widehat
				{J}(\left\vert \xi\right\vert _{p})\right)  }\right)  \in\mathcal{D}^{\prime
			}\left(  \mathbb{Z}_{p}\right)  , & \text{if }t>0.\\
			& \\
			\delta\left(  x\right)   & \text{if }t=0,
		\end{array}
		\right.
		\]
		where $\delta\left(  x\right)  $ is the Dirac distribution, see, e.g.,
		\cite[Section VII.3]{V-V-Z}, \cite[Theorem 4.9.3]{Alberio et al},
		\cite[Theorem 1.161]{Zuniga-Textbook}. Notice that $\left\vert \widehat
		{J}(\left\vert \xi\right\vert _{p})\right\vert \leq1$, for any $\xi
		\in\mathbb{Q}_{p}$.
	\end{remark}
	
	\begin{lemma}
		\label{Lemma_2A}With the above notation, the following formula holds:%
		\begin{align}
			Z_{0}\left(  x,t\right)   &  =\Omega\left(  \left\vert x\right\vert
			_{p}\right)  +%
			{\displaystyle\sum\limits_{j=1}^{\infty}}
			\mathrm{e}^{-t\left(  1-\widehat{J}\left(  p^{j}\right)  \right)  }\left\{  p^{j}%
			\Omega\left(  p^{j}\left\vert x\right\vert _{p}\right)  -p^{\left(
				j-1\right)  }\Omega\left(  p^{\left(  j-1\right)  }\left\vert x\right\vert
			_{p}\right)  \right\}  \label{Z_0(x,t)}\\
			&  =\Omega\left(  \left\vert x\right\vert _{p}\right)  -\mathrm{e}^{-t\left(
				1-\widehat{J}\left(  p\right)  \right)  }1_{\mathbb{Z}_{p}^{\times}}\left(
			x\right)  +\Omega\left(  p\left\vert x\right\vert _{p}\right)
			{\displaystyle\sum\limits_{j=1}^{\infty}}
			p^{j}\mathrm{e}^{-t\left(  1-\widehat{J}\left(  p^{j}\right)  \right)  },\nonumber
		\end{align}
		for $x\in\mathbb{Z}_{p}\smallsetminus\left\{  0\right\}  $, and $t>0$.
	\end{lemma}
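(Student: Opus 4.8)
The plan is to compute $Z_{0}(x,t)=\mathcal{F}_{\xi\rightarrow x}^{-1}\bigl(e^{-t(1-\widehat{J}(\left\vert \xi\right\vert _{p}))}\bigr)$ directly, by approximating the symbol $F(\xi):=e^{-t(1-\widehat{J}(\left\vert \xi\right\vert _{p}))}$ by test functions and passing to the limit in $\mathcal{D}^{\prime}(\mathbb{Q}_{p})$. By Lemma \ref{Lemma_1A}, $\widehat{J}$ is radial and locally constant with $\widehat{J}(\xi)=\widehat{J}(0)=1$ for $\left\vert \xi\right\vert _{p}\leq 1$; hence $F$ is radial and locally constant, $F\equiv 1$ on $\mathbb{Z}_{p}$, and $F(\xi)=e^{-t(1-\widehat{J}(p^{j}))}$ on the sphere $S_{j}=\{\left\vert \xi\right\vert _{p}=p^{j}\}$, $j\geq 1$ (recall $\left\vert \widehat{J}\right\vert \leq 1$, so $F$ is bounded and defines a regular distribution, as noted before the statement). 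Using $\mathbb{Q}_{p}=\mathbb{Z}_{p}\sqcup\bigsqcup_{j\geq 1}S_{j}$, I would set, for $N\geq 1$,
\[
F_{N}(\xi):=F(\xi)\,\Omega\bigl(p^{-N}\left\vert \xi\right\vert _{p}\bigr)=\Omega\bigl(\left\vert \xi\right\vert _{p}\bigr)+\sum_{j=1}^{N}e^{-t(1-\widehat{J}(p^{j}))}\,1_{S_{j}}(\xi)\in\mathcal{D}(\mathbb{Q}_{p}),
\]
and observe that $F_{N}\rightarrow F$ in $\mathcal{D}^{\prime}(\mathbb{Q}_{p})$: any $\varphi\in\mathcal{D}(\mathbb{Q}_{p})$ has support in some $B_{M}$, whence $(F_{N},\varphi)=(F,\varphi)$ once $N\geq M$. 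Since the Fourier transform is a continuous isomorphism of $\mathcal{D}^{\prime}(\mathbb{Q}_{p})$, it follows that $\mathcal{F}^{-1}(F_{N})\rightarrow Z_{0}(\cdot,t)$ in $\mathcal{D}^{\prime}(\mathbb{Q}_{p})$.

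Next I would compute $\mathcal{F}^{-1}(F_{N})$ explicitly. Writing $1_{S_{j}}=\Omega(p^{-j}\left\vert \cdot\right\vert _{p})-\Omega(p^{-(j-1)}\left\vert \cdot\right\vert _{p})$ and using the elementary identity $\mathcal{F}_{\xi\rightarrow x}^{-1}\bigl(\Omega(p^{-k}\left\vert \xi\right\vert _{p})\bigr)=p^{k}\Omega(p^{k}\left\vert x\right\vert _{p})$, valid for all $k\in\mathbb{Z}$ (a one-line change of variables in the defining integral, everything being radial; the case $k=0$ reads $\mathcal{F}^{-1}(\Omega(\left\vert \xi\right\vert _{p}))=\Omega(\left\vert x\right\vert _{p})$), I obtain the test function
\[
\mathcal{F}^{-1}(F_{N})(x)=\Omega\bigl(\left\vert x\right\vert _{p}\bigr)+\sum_{j=1}^{N}e^{-t(1-\widehat{J}(p^{j}))}\Bigl(p^{j}\Omega\bigl(p^{j}\left\vert x\right\vert _{p}\bigr)-p^{j-1}\Omega\bigl(p^{j-1}\left\vert x\right\vert _{p}\bigr)\Bigr).
\]
For fixed $x\in\mathbb{Z}_{p}\smallsetminus\{0\}$ with $\left\vert x\right\vert _{p}=p^{-m}$, the $j$-th summand vanishes for $j\geq m+2$; hence the right side is independent of $N$ once $N\geq m+1$. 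This simultaneously shows that the series in the statement is locally finite on $\mathbb{Z}_{p}\smallsetminus\{0\}$ and converges there to a locally constant function $G(x,t)$, and that $\mathcal{F}^{-1}(F_{N})(x)=G(x,t)$ for $N\geq m+1$. To conclude $Z_{0}(\cdot,t)=G(\cdot,t)$ on $\mathbb{Z}_{p}\smallsetminus\{0\}$, I would pair with an arbitrary $\varphi\in\mathcal{D}(\mathbb{Z}_{p}\smallsetminus\{0\})$: its compact support meets only finitely many spheres $\left\vert x\right\vert _{p}=p^{-m}$, say $0\leq m\leq M_{0}$, so $(\mathcal{F}^{-1}(F_{N}),\varphi)=(G,\varphi)$ for all $N\geq M_{0}+1$, while also $(\mathcal{F}^{-1}(F_{N}),\varphi)\rightarrow(Z_{0},\varphi)$; therefore $(Z_{0},\varphi)=(G,\varphi)$. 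This establishes the first displayed formula; the second is then obtained from it by regrouping the locally finite sum, separating the part supported on $\mathbb{Z}_{p}^{\times}$ (which collapses to $-e^{-t(1-\widehat{J}(p))}1_{\mathbb{Z}_{p}^{\times}}(x)$) and collecting the coefficients of the nested indicators $\Omega(p^{j}\left\vert x\right\vert _{p})$.

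The heart of the argument is the last step: upgrading the weak-$\ast$ convergence $\mathcal{F}^{-1}(F_{N})\rightarrow Z_{0}(\cdot,t)$ to a genuine pointwise (equivalently, distributional) identity on $\mathbb{Z}_{p}\smallsetminus\{0\}$. This works precisely because a test function supported away from the origin lives on only finitely many spheres, so $\mathcal{F}^{-1}(F_{N})$ is eventually constant when tested against it; without that finiteness one would only get convergence of averages. Everything else is routine bookkeeping with characteristic functions of $p$-adic balls and spheres, together with the explicit form of $\widehat{J}$ from Lemma \ref{Lemma_1A}.
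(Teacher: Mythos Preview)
Your proof is correct and follows essentially the same route as the paper: decompose $\mathbb{Q}_{p}$ into $\mathbb{Z}_{p}$ and the spheres $S_{j}$, truncate the symbol to obtain test functions converging to $F$ in $\mathcal{D}^{\prime}(\mathbb{Q}_{p})$, invoke continuity of the Fourier transform on $\mathcal{D}^{\prime}$, and compute $\mathcal{F}^{-1}$ termwise via $\mathcal{F}^{-1}\bigl(\Omega(p^{-k}\left\vert \xi\right\vert _{p})\bigr)=p^{k}\Omega(p^{k}\left\vert x\right\vert _{p})$. You are slightly more explicit than the paper in justifying the passage from the distributional limit to the pointwise identity on $\mathbb{Z}_{p}\smallsetminus\{0\}$ (via local finiteness of the series and testing against $\varphi$ supported away from the origin), which the paper leaves to a remark.
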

	
	\begin{remark}
		We take $x\neq0$, due to the fact, that for $\ x=0$, the convergence of the
		series%
		\[
		\left(  1-p^{-1}\right)
		{\displaystyle\sum\limits_{j=1}^{\infty}}
		p^{j}\mathrm{e}^{-t\left(  1-\widehat{J}\left(  p^{j}\right)  \right)  }%
		\]
		is unknown,\ i.e., we require an extra hypothesis on $\widehat{J}$ to
		determine it. Also, for $x\neq0$, the series in (\ref{Z_0(x,t)}) is a finite
		sum due to the fact that $p^{j}\Omega\left(  p^{j}\left\vert x\right\vert
		_{p}\right)  -p^{\left(  j-1\right)  }\Omega\left(  p^{\left(  j-1\right)
		}\left\vert x\right\vert _{p}\right)  =0$ for $j\geq2+ord(x)$.
	\end{remark}
	
	\begin{proof}
		By using that $\Omega\left(  p^{-j}\left\vert x\right\vert _{p}\right)
		-\Omega\left(  p^{-\left(  j-1\right)  }\left\vert x\right\vert _{p}\right)  $
		is the characteristic function of the set $S_{j}=p^{-j}\mathbb{Z}%
		_{p}\smallsetminus p^{-\left(  j-1\right)  }\mathbb{Z}_{p}$, we have
		\begin{align*}
			\mathrm{e}^{-t\left(  1-\widehat{J}(\left\vert \xi\right\vert _{p})\right)  } &
			=\Omega\left(  \left\vert \xi\right\vert _{p}\right)  +%
			{\displaystyle\sum\limits_{j=1}^{\infty}}
			\mathrm{e}^{-t\left(  1-\widehat{J}(p^{j})\right)  }1_{S_{j}}\left(  \xi\right)  \\
			&  =\Omega\left(  \left\vert \xi\right\vert _{p}\right)  +%
			{\displaystyle\sum\limits_{j=1}^{\infty}}
			\mathrm{e}^{-t\left(  1-\widehat{J}(p^{j})\right)  }\left\{  \Omega\left(
			p^{-j}\left\vert x\right\vert _{p}\right)  -\Omega\left(  p^{-\left(
				j-1\right)  }\left\vert x\right\vert _{p}\right)  \right\}  .
		\end{align*}

		Now,
		\[
		\Omega\left(  \left\vert \xi\right\vert _{p}\right)  +%
		{\displaystyle\sum\limits_{j=1}^{m}}
		\mathrm{e}^{-t\left(  1-\widehat{J}(p^{j})\right)  }\left\{  \Omega\left(
		p^{-j}\left\vert x\right\vert _{p}\right)  -\Omega\left(  p^{-\left(
			j-1\right)  }\left\vert x\right\vert _{p}\right)  \right\}  \rightarrow
		\mathrm{e}^{-t\left(  1-\widehat{J}(\left\vert \xi\right\vert _{p})\right)  }%
		\]
		as $m\rightarrow\infty$ in $\mathcal{D}^{\prime}\left(  \mathbb{Q}_{p}\right)
		$, and since the Fourier transform is a topological and algebraic isomorphism
		on $\mathcal{D}^{\prime}\left(  \mathbb{Q}_{p}\right)  $,%
		\begin{multline*}
			\mathcal{F}_{\xi\rightarrow x}^{-1}\left(  \Omega\left(  \left\vert
			\xi\right\vert _{p}\right)  \right)  +%
			{\displaystyle\sum\limits_{j=1}^{m}}
			\mathrm{e}^{-t\left(  1-\widehat{J}(p^{j})\right)  }\mathcal{F}_{\xi\rightarrow x}%
			^{-1}\left\{  \Omega\left(  p^{-j}\left\vert x\right\vert _{p}\right)
			-\Omega\left(  p^{-\left(  j-1\right)  }\left\vert x\right\vert _{p}\right)
			\right\}  \\
			\rightarrow\mathcal{F}_{\xi\rightarrow x}^{-1}\left(  \mathrm{e}^{-t\left(
				1-\widehat{J}(\left\vert \xi\right\vert _{p})\right)  }\right)
		\end{multline*}
		as $m\rightarrow\infty$ \ in $\mathcal{D}^{\prime}\left(  \mathbb{Q}%
		_{p}\right)  $. Now, the announced formula follows from
		\begin{align*}
			&  \mathcal{F}_{\xi\rightarrow x}^{-1}\left(  \Omega\left(  \left\vert
			\xi\right\vert _{p}\right)  \right)  +%
			{\displaystyle\sum\limits_{j=1}^{m}}
			\mathrm{e}^{-t\left(  1-\widehat{J}(p^{j})\right)  }\mathcal{F}_{\xi\rightarrow x}%
			^{-1}\left\{  \Omega\left(  p^{-j}\left\vert x\right\vert _{p}\right)
			-\Omega\left(  p^{-\left(  j-1\right)  }\left\vert x\right\vert _{p}\right)
			\right\}  \\
			&  =\Omega\left(  \left\vert
			x\right\vert _{p}\right)  +%
			{\displaystyle\sum\limits_{j=1}^{m}}
			\mathrm{e}^{-t\left(  1-\widehat{J}\left(  p^{j}\right)  \right)  }\left\{  p^{j}%
			\Omega\left(  p^{j}\left\vert x\right\vert _{p}\right)  -p^{\left(
				j-1\right)  }\Omega\left(  p^{\left(  j-1\right)  }\left\vert x\right\vert
			_{p}\right)  \right\}  ,
		\end{align*}
		by using that%
		\[
		\mathcal{F}_{\xi\rightarrow x}^{-1}\left\{  \Omega\left(  p^{k}\left\vert
		x\right\vert _{p}\right)  =p^{-k}\Omega\left(  p^{-k}\left\vert x\right\vert
		_{p}\right)  \right\}  .
		\]
		
	\end{proof}
	
	The following result is established by using the above technique.
	
	\begin{lemma}
		\label{Lemma_3A}Set%
		\[
		\boldsymbol{J}\left(  Z_{0}\left(  x,t\right)  \right)  :=\mathcal{F}%
		_{\xi\rightarrow x}^{-1}\left(  \left\{  \widehat{J}(\left\vert \xi\right\vert
		_{p})-1\right\}  \mathrm{e}^{-t\left(  1-\widehat{J}(\left\vert \xi\right\vert
			_{p})\right)  }\right)  \in\mathcal{D}^{\prime}\left(  \mathbb{Z}_{p}\right)
		,
		\]
		for $t>0$. Then%
		\[
		\boldsymbol{J}\left(  Z_{0}\left(  x,t\right)  \right)  =%
		{\displaystyle\sum\limits_{j=1}^{\infty}}
		\mathrm{e}^{-t\left(  1-\widehat{J}\left(  p^{j}\right)  \right)  }\left(  \widehat
		{J}\left(  p^{j}\right)  -1\right)  \left\{  p^{j}\Omega\left(  p^{j}%
		\left\vert x\right\vert _{p}\right)  -p^{\left(  j-1\right)  }\Omega\left(
		p^{\left(  j-1\right)  }\left\vert x\right\vert _{p}\right)  \right\}  ,
		\]
		for $x\in\mathbb{Z}_{p}\smallsetminus\left\{  0\right\}  $, and $t>0$.
	\end{lemma}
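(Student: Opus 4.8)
The plan is to transcribe the proof of Lemma~\ref{Lemma_2A}, carrying along the extra scalar factor $\widehat{J}\left(p^{j}\right)-1$ throughout. By Lemma~\ref{Lemma_1A}, $\widehat{J}(\xi)=\widehat{J}(\left\vert \xi\right\vert _{p})$ is a real-valued, radial, locally constant function with $\widehat{J}(\xi)=1$ whenever $\left\vert \xi\right\vert _{p}\leq1$, and with $\left\vert \widehat{J}(\left\vert \xi\right\vert _{p})\right\vert \leq1$ for all $\xi$; in particular $1-\widehat{J}\geq0$. Hence the function
\[
g(\xi):=\left(\widehat{J}(\left\vert \xi\right\vert _{p})-1\right)e^{-t\left(1-\widehat{J}(\left\vert \xi\right\vert _{p})\right)}
\]
is bounded and locally constant on $\mathbb{Q}_{p}$, so it is locally integrable and defines a (regular) distribution in $\mathcal{D}^{\prime}\left(\mathbb{Q}_{p}\right)$. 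Moreover, by Lemma~\ref{Lemma_1A}(iii), $g(\xi+\xi_{0})=g(\xi)$ for every $\xi_{0}\in\mathbb{Z}_{p}$, so $g$ has exponent of local constancy $\geq0$; by the support criterion recalled in Appendix~A this forces $\mathcal{F}_{\xi\rightarrow x}^{-1}(g)$ to be supported in $\mathbb{Z}_{p}$. This is precisely the assertion $\boldsymbol{J}\left(Z_{0}\left(x,t\right)\right)\in\mathcal{D}^{\prime}\left(\mathbb{Z}_{p}\right)$, and it is consistent with the identity $\boldsymbol{J}\left(Z_{0}\left(x,t\right)\right)=\mathcal{F}^{-1}\left(\left(\widehat{J}-1\right)\mathcal{F}\left(Z_{0}\right)\right)$ coming from the pseudo-differential description of $\boldsymbol{J}$.

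First I would record the spherical decomposition of $g$. On $\mathbb{Z}_{p}$ we have $g\equiv0$, while on the sphere $S_{j}=p^{-j}\mathbb{Z}_{p}\smallsetminus p^{-(j-1)}\mathbb{Z}_{p}$ (for $j\geq1$) one has $\left\vert \xi\right\vert _{p}=p^{j}$, so $g$ takes there the constant value $\left(\widehat{J}\left(p^{j}\right)-1\right)e^{-t\left(1-\widehat{J}\left(p^{j}\right)\right)}$. Since $1_{S_{j}}(\xi)=\Omega\left(p^{-j}\left\vert \xi\right\vert _{p}\right)-\Omega\left(p^{-(j-1)}\left\vert \xi\right\vert _{p}\right)$, this yields the pointwise identity
\[
g(\xi)=\sum_{j=1}^{\infty}e^{-t\left(1-\widehat{J}\left(p^{j}\right)\right)}\left(\widehat{J}\left(p^{j}\right)-1\right)\left\{\Omega\left(p^{-j}\left\vert \xi\right\vert _{p}\right)-\Omega\left(p^{-(j-1)}\left\vert \xi\right\vert _{p}\right)\right\}.
\]
Next I would check that the partial sums on the right converge to $g$ in $\mathcal{D}^{\prime}\left(\mathbb{Q}_{p}\right)$: given $\varphi\in\mathcal{D}\left(\mathbb{Q}_{p}\right)$ with $\mathrm{supp}\,\varphi\subset B_{L}$, the indicator functions $1_{S_{j}}$ with $j>L$ have support disjoint from that of $\varphi$, so the pairing of the $m$-th partial sum with $\varphi$ is independent of $m$ once $m\geq L$ and equals $\left(g,\varphi\right)$. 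This is the same stabilization argument used in the proof of Lemma~\ref{Lemma_2A}.

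Finally I would apply $\mathcal{F}_{\xi\rightarrow x}^{-1}$, a continuous linear isomorphism of $\mathcal{D}^{\prime}\left(\mathbb{Q}_{p}\right)$, to the convergent series term by term, using
\[
\mathcal{F}_{\xi\rightarrow x}^{-1}\left(\Omega\left(p^{k}\left\vert \xi\right\vert _{p}\right)\right)=p^{-k}\Omega\left(p^{-k}\left\vert x\right\vert _{p}\right),
\]
to get
\[
\mathcal{F}_{\xi\rightarrow x}^{-1}\left\{\Omega\left(p^{-j}\left\vert \xi\right\vert _{p}\right)-\Omega\left(p^{-(j-1)}\left\vert \xi\right\vert _{p}\right)\right\}=p^{j}\Omega\left(p^{j}\left\vert x\right\vert _{p}\right)-p^{j-1}\Omega\left(p^{j-1}\left\vert x\right\vert _{p}\right),
\]
which gives the claimed formula. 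As in the Remark following Lemma~\ref{Lemma_2A}, for $x\in\mathbb{Z}_{p}\smallsetminus\left\{0\right\}$ the series is actually a finite sum, since $p^{j}\Omega\left(p^{j}\left\vert x\right\vert _{p}\right)-p^{j-1}\Omega\left(p^{j-1}\left\vert x\right\vert _{p}\right)=0$ for $j\geq2+ord(x)$. The only mildly delicate points are (a) locating the support of $\boldsymbol{J}\left(Z_{0}\left(\cdot,t\right)\right)$ in $\mathbb{Z}_{p}$, which rests on the local constancy of $g$ across cosets of $\mathbb{Z}_{p}$, and (b) justifying the interchange of the infinite sum with $\mathcal{F}_{\xi\rightarrow x}^{-1}$; both are handled by the stabilization-of-partial-sums observation, exactly as in Lemma~\ref{Lemma_2A}, so no genuinely new obstacle arises.
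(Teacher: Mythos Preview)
Your proposal is correct and follows precisely the approach the paper intends: the paper does not spell out a proof of Lemma~\ref{Lemma_3A} but simply says it ``is established by using the above technique,'' meaning the spherical decomposition and term-by-term inverse Fourier transform of Lemma~\ref{Lemma_2A}, which is exactly what you carry out with the extra factor $\widehat{J}(p^{j})-1$.
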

	
	\begin{lemma}
		\label{Lemma_4A}%
		\[%
		{\displaystyle\int\limits_{\mathbb{Z}_{p}}}
		Z_{0}(x,t)dx=1\text{, for }t>0\text{.}%
		\]
		
	\end{lemma}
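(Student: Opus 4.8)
The plan is to read the integral as a distributional pairing and transport it to the Fourier side, where the integrand becomes constant on the unit ball. Since $Z_{0}\left(  \cdot,t\right)  \in\mathcal{D}^{\prime}\left(  \mathbb{Z}_{p}\right)  $ is supported in $\mathbb{Z}_{p}$ and $\Omega\left(  \left\vert \cdot\right\vert _{p}\right)  \in\mathcal{D}\left(  \mathbb{Q}_{p}\right)  $ is the characteristic function of $\mathbb{Z}_{p}$, the quantity $\int_{\mathbb{Z}_{p}}Z_{0}(x,t)dx$ is understood as the pairing $\left(  Z_{0}\left(  \cdot,t\right)  ,\Omega\left(  \left\vert \cdot\right\vert _{p}\right)  \right)  $ (equivalently, by Lemma \ref{Lemma_2A}, $Z_{0}\left(  \cdot,t\right)  $ agrees off the null set $\left\{  0\right\}  $ with a locally constant function, so the integral is a genuine Lebesgue integral). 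First I would use that $\Omega\left(  \left\vert \cdot\right\vert _{p}\right)  $ is its own Fourier transform, $\mathcal{F}\left[  \Omega\left(  \left\vert \cdot\right\vert _{p}\right)  \right]  =\Omega\left(  \left\vert \cdot\right\vert _{p}\right)  $ (this is the $j=0$ case of the formula $\int_{p^{j}\mathbb{Z}_{p}}\chi_{p}\left(  \xi x\right)  dx=p^{-j}\Omega\left(  p^{-j}\left\vert \xi\right\vert _{p}\right)  $ used in the proof of Lemma \ref{Lemma_1A}), together with the defining relation $\left(  \mathcal{F}\left[  T\right]  ,\varphi\right)  =\left(  T,\mathcal{F}\left[  \varphi\right]  \right)  $ from Appendix A, to get
\[
\int\limits_{\mathbb{Z}_{p}}Z_{0}(x,t)dx=\left(  Z_{0}\left(  \cdot,t\right)  ,\mathcal{F}\left[  \Omega\left(  \left\vert \cdot\right\vert _{p}\right)  \right]  \right)  =\left(  \mathcal{F}\left[  Z_{0}\left(  \cdot,t\right)  \right]  ,\Omega\left(  \left\vert \cdot\right\vert _{p}\right)  \right)  .
\]

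Next I would identify $\mathcal{F}\left[  Z_{0}\left(  \cdot,t\right)  \right]  $. By definition $Z_{0}\left(  \cdot,t\right)  =\mathcal{F}_{\xi\rightarrow x}^{-1}\left(  e^{-t\left(  1-\widehat{J}(\left\vert \xi\right\vert _{p})\right)  }\right)  $, so by the Fourier inversion formula (\ref{Eq_FFT}) and the radiality of $\widehat{J}$ (Lemma \ref{Lemma_1A}(ii)) one gets $\mathcal{F}\left[  Z_{0}\left(  \cdot,t\right)  \right]  (\xi)=e^{-t\left(  1-\widehat{J}(\left\vert \xi\right\vert _{p})\right)  }$, which by the Remark following Lemma \ref{Lemma_1A} is a bounded locally constant function; hence the last pairing is an ordinary integral over the unit ball:
\[
\int\limits_{\mathbb{Z}_{p}}Z_{0}(x,t)dx=\int\limits_{\mathbb{Z}_{p}}e^{-t\left(  1-\widehat{J}(\left\vert \xi\right\vert _{p})\right)  }d\xi.
\]
Now I would invoke Lemma \ref{Lemma_1A}(iii): for $\left\vert \xi\right\vert _{p}\leq1$ we have $\widehat{J}(\left\vert \xi\right\vert _{p})=\widehat{J}(0)=\left\Vert J(\left\vert x\right\vert _{p})\right\Vert _{1}=1$, so the integrand is identically $e^{-t\cdot0}=1$ on $\mathbb{Z}_{p}$, and $\int_{\mathbb{Z}_{p}}1\,d\xi=\mu_{\text{Haar}}\left(  \mathbb{Z}_{p}\right)  =1$, which is the claim.

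The only delicate point is the bookkeeping of the first paragraph: making the pairing interpretation precise and justifying the passage to the Fourier side; once $\widehat{J}\equiv1$ on $\mathbb{Z}_{p}$ is in hand the computation is immediate. An alternative route avoiding Fourier transforms altogether would integrate the explicit expansion of Lemma \ref{Lemma_2A} term by term: $\int_{\mathbb{Z}_{p}}\Omega\left(  \left\vert x\right\vert _{p}\right)  dx=1$, while each bracket satisfies $\int_{\mathbb{Z}_{p}}\left\{  p^{j}\Omega\left(  p^{j}\left\vert x\right\vert _{p}\right)  -p^{j-1}\Omega\left(  p^{j-1}\left\vert x\right\vert _{p}\right)  \right\}  dx=p^{j}p^{-j}-p^{j-1}p^{-\left(  j-1\right)  }=0$, the interchange of sum and integral being licensed by the $\mathcal{D}^{\prime}\left(  \mathbb{Q}_{p}\right)  $-convergence of the partial sums established in the proof of Lemma \ref{Lemma_2A} (pairing against the test function $\Omega\left(  \left\vert \cdot\right\vert _{p}\right)  $). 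I would present the Fourier argument as the main proof.
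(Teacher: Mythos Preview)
Your proposal is correct and follows essentially the same approach as the paper: interpret $\int_{\mathbb{Z}_{p}}Z_{0}(x,t)\,dx$ as the pairing $\left(Z_{0}(\cdot,t),\Omega(|\cdot|_{p})\right)$, pass to the Fourier side using $\mathcal{F}\Omega=\Omega$, and then use that $\widehat{J}(|\xi|_{p})=\widehat{J}(0)=1$ on $\mathbb{Z}_{p}$ (Lemma~\ref{Lemma_1A}(iii)) so the integrand is identically $1$. The paper's proof is exactly this chain of equalities, written more tersely.
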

	
	\begin{proof}
		By Lemma (\ref{Lemma_1A}), for $t>0$,
		\begin{align*}
			\left(  Z_{0}(x,t),\Omega\left(  \left\vert x\right\vert _{p}\right)  \right)
			&  =\left(  \mathcal{F}_{\xi\rightarrow x}^{-1}\left( \mathrm{e}^{-t\left(
				1-\widehat{J}(\left\vert \xi\right\vert _{p})\right)  }\right)  ,\Omega\left(
			\left\vert x\right\vert _{p}\right)  \right)  =\left(  \mathrm{e}^{-t\left(
				1-\widehat{J}(\left\vert \xi\right\vert _{p})\right)  },\Omega\left(
			\left\vert \xi\right\vert _{p}\right)  \right)  \\
			&  =%
			{\displaystyle\int\limits_{\mathbb{Z}_{p}}}
			\mathrm{e}^{-t\left(  1-\widehat{J}(\left\vert \xi\right\vert _{p})\right)  }%
			d\xi=\mathrm{e}^{-t\left(  1-\widehat{J}(0)\right)  }%
			{\displaystyle\int\limits_{\mathbb{Z}_{p}}}
			d\xi=1.
		\end{align*}
		
	\end{proof}
	
	\begin{lemma}
		\label{Lemma_5A}(i) For $x\in I+p^{l}\mathbb{Z}_{p}$,%
		\[
		Z_{0}\left(  x,t\right)  \ast\Omega\left(  p^{l}\left\vert x-I\right\vert
		_{p}\right)  =1-Z_{0}\left(  1,t\right)  \left(  1-p^{-1}\right)  ,
		\]
		for $t>0$.
		
		\noindent(ii) For $x\in K+p^{l}\mathbb{Z}_{p}$, $K\neq I$ as elements from
		$G_{l}$,%
		\[
		Z_{0}\left(  x,t\right)  \ast\Omega\left(  p^{l}\left\vert x-I\right\vert
		_{p}\right)  =p^{-l}Z_{0}\left(  \left\vert K-I\right\vert _{p},t\right)  ,
		\]
		for $t>0$.
		
		(iii) For $x\in\mathbb{Z}_{p}$, and $t>0$,
		\begin{align*}
			Z_{0}\left(  x,t\right)  \ast\Omega\left(  p^{l}\left\vert x-I\right\vert
			_{p}\right)   &  =1-Z_{0}\left(  1,t\right)  \left(  1-p^{-1}\right)
			\Omega\left(  p^{l}\left\vert x-I\right\vert _{p}\right)  +\\
			&
			{\displaystyle\sum\limits_{\substack{K\in G_{l}\\K\neq I}}}
			p^{-l}Z_{0}\left(  \left\vert K-I\right\vert _{p},t\right)  \Omega\left(
			p^{l}\left\vert x-K\right\vert _{p}\right)  .
		\end{align*}
		\noindent(iv) The space $\mathcal{D}_{l}(\mathbb{Z}_{p})$ is invariant under the operator
		$\varphi\left(  x\right)  \rightarrow Z_{0}(x,t)\ast\varphi\left(  x\right)  $.
		
	\end{lemma}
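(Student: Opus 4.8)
The plan is to compute $F(x):=Z_{0}(x,t)\ast\Omega(p^{l}|x-I|_{p})$ explicitly on each ball of the partition $\mathbb{Z}_{p}=\bigsqcup_{K\in G_{l}}(K+p^{l}\mathbb{Z}_{p})$ and then reassemble, so that (iii) and (iv) become immediate consequences. By the convolution formula of Appendix A, $F(x)=\bigl(Z_{0}(\cdot,t),\Omega(p^{l}|x-\cdot-I|_{p})\bigr)$, and since the $\Omega$ equals $1$ exactly when the variable lies in $x-I+p^{l}\mathbb{Z}_{p}$, this is the (distributional) integral $\int_{x-I+p^{l}\mathbb{Z}_{p}}Z_{0}(y,t)\,dy$.

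I would first treat the off-diagonal case (ii): for $x\in K+p^{l}\mathbb{Z}_{p}$ with $K\neq I$ in $G_{l}$ the domain of integration is $(K-I)+p^{l}\mathbb{Z}_{p}$, which, since $|K-I|_{p}>p^{-l}$, does not meet the origin, and by the ultrametric inequality every $y$ in it satisfies $|y|_{p}=|K-I|_{p}$; by the radiality of $Z_{0}(\cdot,t)$ (Lemma \ref{Lemma_2A}) the integrand is then the constant $Z_{0}(|K-I|_{p},t)$, and integrating over a ball of Haar measure $p^{-l}$ gives $F(x)=p^{-l}Z_{0}(|K-I|_{p},t)$. For the diagonal case (i), $x\in I+p^{l}\mathbb{Z}_{p}$ forces the domain to be $p^{l}\mathbb{Z}_{p}$, so $F(x)=\int_{p^{l}\mathbb{Z}_{p}}Z_{0}(y,t)\,dy$; subtracting from $\int_{\mathbb{Z}_{p}}Z_{0}(y,t)\,dy=1$ (Lemma \ref{Lemma_4A}) the off-diagonal contributions $\sum_{K\in G_{l}\smallsetminus\{0\}}p^{-l}Z_{0}(|K|_{p},t)$ just computed yields the asserted value $1-Z_{0}(1,t)(1-p^{-1})$, the contributing $K$ being the $p-1$ units. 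The delicate point, and the main obstacle, is that the remark following Lemma \ref{Lemma_2A} only guarantees $Z_{0}(\cdot,t)\in\mathcal{D}^{\prime}(\mathbb{Z}_{p})$ (its series at $x=0$ may diverge), so $\int_{p^{l}\mathbb{Z}_{p}}Z_{0}(y,t)\,dy$ must be read as the pairing $\bigl(e^{-t(1-\widehat{J}(|\xi|_{p}))},\,p^{-l}\Omega(p^{-l}|\xi|_{p})\bigr)=p^{-l}\int_{|\xi|_{p}\le p^{l}}e^{-t(1-\widehat{J}(|\xi|_{p}))}\,d\xi$, a genuine finite integral since the integrand is bounded by $1$; this also justifies the elementary computation of the telescoping sum in (i).

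For (iii), observe that $\widehat{F}(\xi)=e^{-t(1-\widehat{J}(|\xi|_{p}))}\chi_{p}(\xi I)p^{-l}\Omega(p^{-l}|\xi|_{p})$ is supported in $B_{l}$, so by the duality between compact support of the Fourier transform and local constancy recalled in Appendix A, $F$ is locally constant with parameter $\ge -l$, i.e.\ constant on each ball $K+p^{l}\mathbb{Z}_{p}$; together with $\mathrm{supp}\,F\subset\mathbb{Z}_{p}$ this forces $F(x)=\sum_{K\in G_{l}}F(K)\,\Omega(p^{l}|x-K|_{p})$, and inserting the values from (i) and (ii) gives the displayed identity. Part (iv) is then immediate: (iii) exhibits $Z_{0}(\cdot,t)\ast\Omega(p^{l}|x-I|_{p})$ as a $\mathbb{C}$-linear combination of the spanning set $\{\Omega(p^{l}|x-K|_{p})\}_{K\in G_{l}}$ of $\mathcal{D}_{l}(\mathbb{Z}_{p})$, and since $\varphi\mapsto Z_{0}(\cdot,t)\ast\varphi$ is linear, $\mathcal{D}_{l}(\mathbb{Z}_{p})$ is invariant under it.
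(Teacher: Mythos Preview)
Your strategy mirrors the paper's: reduce the convolution to the distributional integral $\int_{x-I+p^{l}\mathbb{Z}_{p}}Z_{0}(y,t)\,dy$, handle the off-diagonal case by radiality of $Z_0$ on balls avoiding $0$, and obtain the diagonal value by subtracting the off-diagonal pieces from $\int_{\mathbb{Z}_p}Z_0=1$. Your Fourier-side reading of the pairing is a clean alternative to the paper's approximation via the partial sums $Z_0^{(m)}\to Z_0$, and your derivation of (iii) from local constancy of $F$ (support of $\widehat F$ in $B_l$) is a bit neater than simply restating (i)--(ii) ballwise.

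There is, however, a genuine gap in your computation for (i), and the paper's own proof shares it. You assert that
\[
\sum_{K\in G_l\smallsetminus\{0\}}p^{-l}Z_0(|K|_p,t)=(1-p^{-1})Z_0(1,t),
\]
``the contributing $K$ being the $p-1$ units.'' This is only valid when $l=1$. For $l\ge 2$ the set $G_l\smallsetminus\{0\}$ has $p^l-1$ elements and $|K|_p$ takes all the values $1,p^{-1},\ldots,p^{-(l-1)}$; grouping by $|K|_p=p^{-j}$ (there are $(p-1)p^{\,l-1-j}$ such $K$) gives
\[
\sum_{K\in G_l\smallsetminus\{0\}}p^{-l}Z_0(|K|_p,t)=(1-p^{-1})\sum_{j=0}^{l-1}p^{-j}Z_0(p^{-j},t),
\]
which does not collapse to $(1-p^{-1})Z_0(1,t)$ in general. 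The paper commits the equivalent slip via the decomposition $\mathbb{Z}_p=p^l\mathbb{Z}_p\sqcup\mathbb{Z}_p^\times$, which is set-theoretically correct only for $l=1$. So the formula in (i), and consequently the diagonal coefficient in (iii), hold as written only for $l=1$; for general $l$ the diagonal value is $1-(1-p^{-1})\sum_{j=0}^{l-1}p^{-j}Z_0(p^{-j},t)$. Your arguments for (ii) and (iv), and the reassembly mechanism behind (iii), are unaffected and correct.
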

	
	\begin{proof}
		Set%
		\[
		Z_{0}^{\left(  m\right)  }\left(  x,t\right)  =\Omega\left(  \left\vert
		x\right\vert _{p}\right)  +%
		{\displaystyle\sum\limits_{j=1}^{m}}
		\mathrm{e}^{-t\left(  1-\widehat{J}\left(  p^{j}\right)  \right)  }\left\{  p^{j}%
		\Omega\left(  p^{j}\left\vert x\right\vert _{p}\right)  -p^{\left(
			j-1\right)  }\Omega\left(  p^{\left(  j-1\right)  }\left\vert x\right\vert
		_{p}\right)  \right\}  .
		\]
		Then, $Z_{0}^{\left(  m\right)  }\left(  x,t\right)  \rightarrow Z_{0}\left(
		x,t\right)  $ as $m\rightarrow\infty$ in $\mathcal{D}^{\prime}\left(
		\mathbb{Q}_{p}\right)  $. We now recall that the convolution of a distribution
		with a test function is a locally constant function (see, e.g. \cite[Theorem
		1.154]{Zuniga-Textbook}) given by%
		\begin{gather*}
			Z_{0}\left(  x,t\right)  \ast\Omega\left(  p^{l}\left\vert x-I\right\vert
			_{p}\right)  =\left(  Z_{0}\left(  y,t\right)  ,\Omega\left(  p^{l}\left\vert
			x-I-y\right\vert _{p}\right)  \right)  \\
			=\lim_{m\rightarrow\infty}\left(  Z_{0}^{\left(  m\right)  }\left(
			y,t\right)  ,\Omega\left(  p^{l}\left\vert x-I-y\right\vert _{p}\right)
			\right)  =\lim_{m\rightarrow\infty}\text{ }%
			{\displaystyle\int\limits_{x-I+p^{l}\mathbb{Z}p}}
			\text{ }Z_{0}^{\left(  m\right)  }\left(  y,t\right)  dy.
		\end{gather*}
		We first consider the case, $x\in I+p^{l}\mathbb{Z}_{p}$, i.e. $x-I+p^{l}%
		\mathbb{Z}_{p}=p^{l}\mathbb{Z}_{p}$,%
		\[
		Z_{0}\left(  x,t\right)  \ast\Omega\left(  p^{l}\left\vert x-I\right\vert
		_{p}\right)  =\lim_{m\rightarrow\infty}\text{ }%
		{\displaystyle\int\limits_{p^{l}\mathbb{Z}p}}
		\text{ }Z_{0}^{\left(  m\right)  }\left(  y,t\right)  dy=%
		{\displaystyle\int\limits_{p^{l}\mathbb{Z}p}}
		\text{ }Z_{0}\left(  y,t\right)  dy.
		\]
		Now, we use \ that $Z_{0}\left(  y,t\right)  =$ $Z_{0}\left(  \left\vert
		y\right\vert _{p},t\right)  $, and Lemma \ref{Lemma_4A},
		\[
		1=%
		{\displaystyle\int\limits_{\mathbb{Z}_{p}}}
		Z_{0}\left(  \left\vert y\right\vert _{p},t\right)  dy=%
		{\displaystyle\int\limits_{p^{l}\mathbb{Z}p}}
		\text{ }Z_{0}\left(  \left\vert y\right\vert _{p},t\right)  dy+%
		{\displaystyle\int\limits_{\mathbb{Z}p^{\times}}}
		\text{ }Z_{0}\left(  \left\vert y\right\vert _{p},t\right)  dy
		\]
		for $t>0$, therefore%
		\[
		Z_{0}\left(  x,t\right)  \ast\Omega\left(  p^{l}\left\vert x-I\right\vert
		_{p}\right)  =%
		{\displaystyle\int\limits_{p^{l}\mathbb{Z}p}}
		\text{ }Z_{0}\left(  \left\vert y\right\vert _{p},t\right)  dy=1-Z_{0}\left(
		1,t\right)  \left(  1-p^{-1}\right)  ,
		\]
		for $t>0$.
		
		We now \ consider the case, $x\in K+p^{l}\mathbb{Z}_{p}$, i.e., $K\neq I$ in
		$G_{l}=p^{l}\mathbb{Z}_{p}$, \ by using that $Z_{0}^{\left(  m\right)
		}\left(  y,t\right)  =Z_{0}^{\left(  m\right)  }\left(  \left\vert
		y\right\vert _{p},t\right)  $,%
		\[
		Z_{0}\left(  x,t\right)  \ast\Omega\left(  p^{l}\left\vert x-I\right\vert
		_{p}\right)  =\lim_{m\rightarrow\infty}\text{ }%
		{\displaystyle\int\limits_{p^{l}\mathbb{Z}p}}
		\text{ }Z_{0}^{\left(  m\right)  }\left(  y,t\right)  dy=p^{-l}Z_{0}\left(
		\left\vert K-I\right\vert _{p},t\right)  .
		\]
		
	\end{proof}
	
	\subsection{$p$-Adic heat equations and Markov processes in the unit ball}
	
	With $J$ as before, the mapping%
	\[%
	\begin{array}
		[c]{cccc}%
		\boldsymbol{J}: & L^{\rho}\left(  \mathbb{Z}_{p}\right)   & \rightarrow &
		L^{\rho}\left(  \mathbb{Z}_{p}\right)  \\
		&  &  & \\
		& \varphi\left(  x\right)   & \rightarrow & J\left(  \left\vert x\right\vert
		_{p}\right)  \ast\varphi\left(  x\right)  -\varphi\left(  x\right)
	\end{array}
	\]
	gives rise to a well-defined \ linear, bounded operator, for $\rho\in\left[
	0,\infty\right]  $. Indeed,%
	\[
	\left\Vert \boldsymbol{J}\varphi\left(  x\right)  \right\Vert _{\rho
	}=\left\Vert J\ast\varphi-\varphi\right\Vert _{\rho}\leq\left\Vert
	J\ast\varphi\right\Vert _{\rho}+\left\Vert \varphi\right\Vert _{\rho}%
	\leq\left\Vert J\right\Vert _{1}\left\Vert \varphi\right\Vert _{\rho
	}+\left\Vert \varphi\right\Vert _{\rho}=2\left\Vert \varphi\right\Vert _{\rho
	}.
	\]

	Furthermore, $\boldsymbol{J}$ is a pseudo-differential operator,%
	\[
	\boldsymbol{J}\varphi\left(  x\right)  =\mathcal{F}_{\xi\rightarrow x}%
	^{-1}\left(  \left(  \widehat{J}(\left\vert \xi\right\vert _{p})-1\right)
	\mathcal{F}_{x\rightarrow\xi}\varphi\right)  =%
	{\displaystyle\int\limits_{\mathbb{Q}_{p}}}
	\left(  \widehat{J}(\left\vert \xi\right\vert _{p})-1\right)  \widehat
	{\varphi}\left(  \xi\right)  \chi_{p}\left(  -\xi x\right)  dx,
	\]
	for $\varphi\in\mathcal{D}\left(  \mathbb{Z}_{p}\right)  $. Notice that
	$\boldsymbol{J}\varphi\in\mathcal{D}\left(  \mathbb{Z}_{p}\right)  $, indeed,
	$\widehat{J}(\left\vert \xi\right\vert _{p})-1$ is a locally constant
	function, then its product with $\widehat{\varphi}\left(  \xi\right)  $ is a
	Now, we use the fact that the space of the test function is invariant under the Fourier transform.
	
	\begin{proposition}
		\label{Prop_1A}The function
		\[
		u\left(  x,t\right)  =Z_{0}\left(  x,t\right)  \ast u_{0}\left(  x\right)
		\text{, }x\in\mathbb{Z}_{p},t\geq0\text{,}%
		\]
		is a solution of the Cauchy problem
		\begin{equation}
			\left\{
			\begin{array}
				[c]{l}%
				u\left(  \cdot,t\right)  \in\mathcal{D}(\mathbb{Z}_{p})\text{ for }%
				t\geq0\text{, }u\left(  x,\cdot\right)  \in C^{1}\left[  0,\infty\right) \\
				\\
				\frac{\partial}{\partial t}u\left(  x,t\right)  =\boldsymbol{J}u\left(
				x,t\right)  \text{, }x\in\mathbb{Z}_{p},t\geq0\\
				\\
				u\left(  x,0\right)  =u_{0}\left(  x\right)  \in\mathcal{D}(\mathbb{Z}_{p}).
			\end{array}
			\right.  \label{Cauchy_Problem_4}%
		\end{equation}
		
	\end{proposition}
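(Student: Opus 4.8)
The plan is to reduce everything to a finite-dimensional computation by using that $u_{0}$ lies in some $\mathcal{D}_{l}(\mathbb{Z}_{p})$, and then to check the three assertions of (\ref{Cauchy_Problem_4}) one at a time. First, since $u_{0}\in\mathcal{D}(\mathbb{Z}_{p})$, I would fix $l\geq0$ with $u_{0}\in\mathcal{D}_{l}(\mathbb{Z}_{p})$ and write $u_{0}(x)=\sum_{I\in G_{l}}c_{I}\Omega(p^{l}|x-I|_{p})$. By Lemma \ref{Lemma_5A}, part (iv) (or explicitly by part (iii)), $Z_{0}(x,t)\ast\Omega(p^{l}|x-I|_{p})\in\mathcal{D}_{l}(\mathbb{Z}_{p})$ for every $t>0$, hence $u(\cdot,t)=Z_{0}(x,t)\ast u_{0}\in\mathcal{D}_{l}(\mathbb{Z}_{p})\subset\mathcal{D}(\mathbb{Z}_{p})$ for $t>0$; and at $t=0$ one has $Z_{0}(x,0)=\delta(x)$, so $u(\cdot,0)=u_{0}\in\mathcal{D}(\mathbb{Z}_{p})$. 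This settles the initial condition and the first line of (\ref{Cauchy_Problem_4}).

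Next, for the regularity in $t$, I would read off from Lemma \ref{Lemma_5A}(iii) that the coordinates of $u(\cdot,t)$ in the basis $\{\Omega(p^{l}|x-K|_{p})\}_{K\in G_{l}}$ are affine combinations, with the fixed coefficients $c_{I}$, of the quantities $Z_{0}(1,t)$ and $Z_{0}(|K-I|_{p},t)$, $I\neq K$. By Lemma \ref{Lemma_2A}, since the relevant arguments lie in $\mathbb{Z}_{p}\smallsetminus\{0\}$, each such quantity is a finite $\mathbb{C}$-linear combination of the functions $t\mapsto e^{-t(1-\widehat{J}(p^{j}))}$, $1\leq j\leq l$. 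These are entire in $t$, so $u(x,\cdot)\in C^{1}[0,\infty)$ for each fixed $x$ (with a one-sided derivative at $t=0$), and moreover $u(\cdot,t)\to u_{0}$ in the finite-dimensional space $\mathcal{D}_{l}(\mathbb{Z}_{p})$ as $t\to0^{+}$.

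The remaining and main task is to verify $\frac{\partial}{\partial t}u(x,t)=\boldsymbol{J}u(x,t)$. I would argue on the Fourier side: $\widehat{u}(\xi,t):=e^{-t(1-\widehat{J}(|\xi|_{p}))}\widehat{u}_{0}(\xi)$ is, for each $t$, a test function supported in a fixed ball and constant on a fixed lattice of balls (here $\widehat{J}(|\xi|_{p})$ is locally constant by Lemma \ref{Lemma_1A}), so it lies in a fixed finite-dimensional subspace of $\mathcal{D}(\mathbb{Q}_{p})$; hence $\partial_{t}$ acts termwise and commutes with $\mathcal{F}_{\xi\to x}^{-1}$. Using $\partial_{t}\widehat{u}(\xi,t)=(\widehat{J}(|\xi|_{p})-1)\widehat{u}(\xi,t)$ together with the pseudodifferential representation $\boldsymbol{J}\varphi=\mathcal{F}_{\xi\to x}^{-1}((\widehat{J}(|\xi|_{p})-1)\widehat{\varphi})$ recalled just before the statement, this yields $\partial_{t}u(x,t)=\boldsymbol{J}u(x,t)$ for $t>0$; by the continuity established above and the continuity of $\boldsymbol{J}$ on $\mathcal{D}_{l}(\mathbb{Z}_{p})$, the identity extends to $t=0$. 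Equivalently, one may differentiate the distributional identity $\partial_{t}Z_{0}(x,t)=\boldsymbol{J}(Z_{0}(x,t))$ furnished by Lemma \ref{Lemma_3A} and convolve with $u_{0}$, using $\boldsymbol{J}(f\ast g)=(\boldsymbol{J}f)\ast g$. The only step needing care is the interchange of $\partial_{t}$ with $\mathcal{F}^{-1}$ (equivalently, with convolution against the distribution $Z_{0}$), together with the behavior at the endpoint $t=0$ where $Z_{0}(\cdot,0)=\delta$; both become routine once the reduction to $\mathcal{D}_{l}(\mathbb{Z}_{p})$ has turned every relevant object into a finite sum of entire functions of $t$, so I do not expect a genuine obstacle here.
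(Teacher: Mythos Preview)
Your proposal is correct and follows essentially the same route as the paper: reduce to $u_{0}\in\mathcal{D}_{l}(\mathbb{Z}_{p})$, invoke Lemma~\ref{Lemma_5A} to see that $u(\cdot,t)\in\mathcal{D}_{l}(\mathbb{Z}_{p})$, and verify the equation via the distributional identity $\partial_{t}Z_{0}=\boldsymbol{J}(Z_{0})$ from Lemmas~\ref{Lemma_2A}--\ref{Lemma_3A} convolved with $u_{0}$ (your ``equivalently'' clause is precisely the paper's argument). Your write-up is in fact more careful than the paper's about the $C^{1}$ regularity in $t$ and the behavior at $t=0$, which the paper leaves implicit.
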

	
	\begin{proof}
		Since $u_{0}\left(  x\right)  \in\mathcal{D}(\mathbb{Z}_{p})=\cup
		_{l}\mathcal{D}_{l}(\mathbb{Z}_{p})$, we may assume that $u_{0}\left(
		x\right)  \in\mathcal{D}_{l}(\mathbb{Z}_{p})$ for some positive integer $l$.
		on the other hand, $\frac{\partial}{\partial t}$, $\boldsymbol{J}$ are linear
		operators, and $\left\{  \Omega\left(  p^{l}\left\vert x-I\right\vert
		_{p}\right)  \right\}  _{I\in G_{l}}$ is a basis for $\mathcal{D}%
		_{l}(\mathbb{Z}_{p})$, we may assume without loss of generality that
		$u_{0}\left(  x\right)  =\Omega\left(  p^{l}\left\vert x-I\right\vert
		_{p}\right)  $.
		
		We now show that $u\left(  x,t\right)  =Z_{0}\left(  x,t\right)  \ast
		\Omega\left(  p^{l}\left\vert x-I\right\vert _{p}\right)  $, $x\in
		\mathbb{Z}_{p},t>0$, is a solution of the differential equation in
		(\ref{Cauchy_Problem_4}). Indeed, by Lemmas \ref{Lemma_3A}, \ref{Lemma_5A},%
		\[
		\frac{\partial}{\partial t}u\left(  x,t\right)  =\frac{\partial}{\partial
			t}\left\{  Z_{0}\left(  x,t\right)  \ast\Omega\left(  p^{l}\left\vert
		x-I\right\vert _{p}\right)  \right\}  =\left(  \frac{\partial}{\partial
			t}Z_{0}\left(  x,t\right)  \right)  \ast\Omega\left(  p^{l}\left\vert
		x-I\right\vert _{p}\right)  ,
		\]

		in $\mathcal{D}^{\prime}\left(  \mathbb{Z}_{p}\right)  $. Now, by Lemmas
		\ref{Lemma_2A} and \ref{Lemma_3A},%
		\[
		\frac{\partial}{\partial t}Z_{0}\left(  x,t\right)  =\boldsymbol{J}\left(
		Z_{0}\left(  x,t\right)  \right)  ,
		\]
		in $\mathcal{D}^{\prime}\left(  \mathbb{Z}_{p}\right)  $, and by convoluting
		both sides with $\Omega\left(  p^{l}\left\vert x-I\right\vert _{p}\right)  $,
		we get
		\[
		\frac{\partial}{\partial t}u\left(  x,t\right)  =\boldsymbol{J}\left(
		Z_{0}\left(  x,t\right)  \right)  \ast\Omega\left(  p^{l}\left\vert
		x-I\right\vert _{p}\right)  .
		\]
		On the other hand,
		\begin{multline*}
			\boldsymbol{J}\left(  Z_{0}\left(  x,t\right)  \ast\Omega\left(
			p^{l}\left\vert x-I\right\vert _{p}\right)  \right)  =\\
			\mathcal{F}_{\xi\rightarrow x}^{-1}\left(  \left\{  \widehat{J}(\left\vert
			\xi\right\vert _{p})-1\right\}  e^{-t\left(  1-\widehat{J}(\left\vert
				\xi\right\vert _{p})\right)  }p^{-l}\chi_{p}\left(  \xi I\right)
			\Omega\left(  p^{-l}\left\vert x\right\vert _{p}\right)  \right)  \\
			=\mathcal{F}_{\xi\rightarrow x}^{-1}\left(  \left\{  \widehat{J}(\left\vert
			\xi\right\vert _{p})-1\right\}  e^{-t\left(  1-\widehat{J}(\left\vert
				\xi\right\vert _{p})\right)  }\right)  \ast\Omega\left(  p^{l}\left\vert
			x-I\right\vert _{p}\right)  \\
			=\boldsymbol{J}\left(  Z_{0}\left(  x,t\right)  \right)  \ast\Omega\left(
			p^{l}\left\vert x-I\right\vert _{p}\right)  ,
		\end{multline*}
		in $\mathcal{D}^{\prime}\left(  \mathbb{Z}_{p}\right)  $, i.e., for $t>0$,%
		\[
		\frac{\partial}{\partial t}u\left(  x,t\right)  =\boldsymbol{J}u\left(
		x,t\right)  \text{ in }\mathcal{D}^{\prime}\left(  \mathbb{Z}_{p}\right)  .
		\]
		The initial condition in (\ref{Cauchy_Problem_4}) is satisfied because
		$Z_{0}\left(  x,0\right)  =\delta\left(  x\right)  $. Finally, $Z_{0}\left(
		x,t\right)  \ast u_{0}\left(  x\right)  $ is a test function for every
		$t\geq0$, cf. Lemma \ref{Lemma_5A}-(iii).
	\end{proof}
	
	We now review a result from (\cite[Theorem 3.1]{Zuniga-networks}). We denote
	by $\mathcal{C}(\mathbb{Z}_{p})$ the $\mathbb{R}$-vector space of continuous
	functions on the unit ball. We recall that $\mathcal{D}\left(  \mathbb{Z}%
	_{p}\right)  $ is dense in $\mathcal{C}(\mathbb{Z}_{p})$. We fix a time
	horizon $T\in\left(  0,\infty\right]  $, and consider the following initial
	value problem:%
	\begin{equation}
		\left\{
		\begin{array}
			[c]{l}%
			u\left(  \cdot,t\right)  \in\mathcal{C}(\mathbb{Z}_{p})\text{ for }%
			t\geq0\text{, }u\left(  x,\cdot\right)  \in\mathcal{C}^{1}\left(  \left[
			0,T\right]  \right)  \text{ for }x\in\mathbb{Z}_{p}\\
			\\
			\frac{\partial}{\partial t}u\left(  x,t\right)  =J\left(  \left\vert
			x\right\vert _{p}\right)  \ast u\left(  x,t\right)  -u\left(  x,t\right)
			\text{, }x\in\mathbb{Z}_{p},t\in\left[  0,T\right]  \\
			\\
			u\left(  x,0\right)  =u_{0}\left(  x\right)  \in\mathcal{C}(\mathbb{Z}_{p}).
		\end{array}
		\right.  \label{Cauchy_Problem_5}%
	\end{equation}

	The operator $\boldsymbol{J}\varphi\left(  x\right)  =J\left(  \left\vert
	x\right\vert _{p}\right)  \ast\varphi\left(  x\right)  -\varphi\left(
	x\right)  $, for  $\varphi\left(  x\right)  \in\mathcal{C}(\mathbb{Z}_{p})$,
	satisfies the maximum principle, i.e., \ if $\varphi\left(  x_{0}\right)
	=\sup_{x\in\mathbb{Z}_{p}}\varphi\left(  x\right)  $, then%
	\[
	\boldsymbol{J}\varphi\left(  x_{0}\right)  ={\displaystyle\int\limits_{\mathbb{Z}_{p}}}
	J\left(  \left\vert x_{0}-y\right\vert _{p}\right)  \left\{  \varphi\left(
	y\right)  -\varphi\left(  x_{0}\right)  \right\}  dy\leq0\text{.}%
	\]

	The condition $J\left(  \left\vert x\right\vert _{p}\right)  \leq1$,
	$x\in\mathbb{Z}_{p}$, was used in (\cite[Theorem 3.1]{Zuniga-networks}) to
	assure that operator $\boldsymbol{J}$ satisfies the maximum principle.
	Consequently, here we do not need this condition. By (\cite[Theorem
	3.1]{Zuniga-networks}), there exists a probability measure $p_{t}\left(
	x,\cdot\right)  $, $t\in\left[  0,T\right]  $, with $T=T(u_{0})$,
	$x\in\mathbb{Z}_{p}$, on the Borel $\sigma$-algebra $\mathcal{B}$ of
	$\mathbb{Z}_{p}$, such that the Cauchy problem (\ref{Cauchy_Problem_5}) has a
	unique solution of the form%
	\[
	u(x,t)=\int\limits_{\mathbb{Z}_{p}}u_{0}(y)p_{t}\left(  x,dy\right)  .
	\]
	In addition, $p_{t}\left(  x,\cdot\right)  $ is the transition function of a
	Markov process $\mathfrak{X}$ whose paths are right continuous and have no
	discontinuities other than jumps. We now determine the measure $p_{t}\left(
	x,\cdot\right)  $.
	
	By comparing the solutions of Cauchy problems (\ref{Cauchy_Problem_4}) and
	(\ref{Cauchy_Problem_5}), we conclude that
	\[
	\int\limits_{\mathbb{Z}_{p}}1_{B}(y)p_{t}\left(  x,dy\right)  =\int
	\limits_{\mathbb{Z}_{p}}1_{B}(y)Z_{0}\left(  x-y,t\right)  dy\text{, }%
	\]
	for $t\in\left[  0,T\right]  $ and any open compact subset $B$ of
	$\mathbb{Z}_{p}$. Then $p_{t}\left(  x,dy\right)  $ exists for any
	$t\in\left[  0,\infty\right]  $, and since the open compact subset $B$ of
	$\mathbb{Z}_{p}$ generate the Borel $\sigma$-algebra $\mathcal{B}$ of
	$\mathbb{Z}_{p}$, then $p_{t}\left(  x,dy\right)  =Z_{0}\left(  x-y,t\right)
	dy$ for $t\geq0$. \ So we have the following result:
	
	\begin{theorem}
		\label{Theorem_1}The Cauchy problem%
		\begin{equation}
			\left\{
			\begin{array}
				[c]{l}%
				u\left(  \cdot,t\right)  \in\mathcal{C}(\mathbb{Z}_{p})\text{ for }%
				t\geq0\text{, }u\left(  x,\cdot\right)  \in C^{1}\left[  0,\infty\right)  \\
				\\
				\frac{\partial}{\partial t}u\left(  x,t\right)  =J\left(  \left\vert
				x\right\vert _{p}\right)  \ast u\left(  x,t\right)  -u\left(  x,t\right)
				\text{, }x\in\mathbb{Z}_{p},t\geq0\\
				\\
				u\left(  x,0\right)  =u_{0}\left(  x\right)  \in\mathcal{C}(\mathbb{Z}_{p}).
			\end{array}
			\right.  \label{Eq_Cauchy_problem_2}%
		\end{equation}
		has a unique solution of the form $u\left(  x,t\right)  =Z_{0}\left(
		x,t\right)  \ast u_{0}\left(  x\right)  $. Furthermore,
		\begin{equation}
			p(t,x,B)=%
			\begin{cases}
				\int_{B}Z_{0}(x-y,t)dy=1_{B}\left(  x\right)  \ast Z_{0}(x,t) & \text{for
				}t>0,\quad x\in\mathbb{Z}_{p},\quad B\in\mathcal{B}\\
				1_{B}(x) & \text{for }t=0,
			\end{cases}
			\label{Probability}%
		\end{equation}
		is the transition function of a Markov process $\mathfrak{X}$ whose paths are
		right continuous and have no discontinuities other than jumps. Furthermore,
		\[%
		{\displaystyle\int\limits_{\mathbb{Z}_{p}}}
		Z_{0}(x,t)dx=1\text{, for }t>0.
		\]
		
	\end{theorem}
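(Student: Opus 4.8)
The plan is to combine Proposition \ref{Prop_1A}, which produces the explicit solution $Z_{0}(x,t)\ast u_{0}(x)$ for test-function initial data, with the probabilistic existence-and-uniqueness result \cite[Theorem 3.1]{Zuniga-networks}, and then to identify the two descriptions of the solution by a uniqueness argument.

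First I would check that the hypotheses of \cite[Theorem 3.1]{Zuniga-networks} hold here. The only delicate point is the maximum principle for $\boldsymbol{J}\varphi=J\ast\varphi-\varphi$ on $\mathcal{C}(\mathbb{Z}_{p})$: if $\varphi$ attains its supremum at $x_{0}$, then $\boldsymbol{J}\varphi(x_{0})=\int_{\mathbb{Z}_{p}}J(\left\vert x_{0}-y\right\vert_{p})\{\varphi(y)-\varphi(x_{0})\}\,dy\leq 0$ because $J\geq 0$ and $\varphi(y)-\varphi(x_{0})\leq 0$; as already noted before the statement, this does not require the bound $J(\left\vert x\right\vert_{p})\leq 1$ used in \cite{Zuniga-networks}, since that bound was only needed there to obtain the maximum principle. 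Invoking \cite[Theorem 3.1]{Zuniga-networks} then yields a probability measure $p_{t}(x,\cdot)$ on the Borel $\sigma$-algebra $\mathcal{B}$ of $\mathbb{Z}_{p}$, which is the transition function of a Markov process with right-continuous, jump-only paths, and such that (\ref{Cauchy_Problem_5}) has the unique solution $u(x,t)=\int_{\mathbb{Z}_{p}}u_{0}(y)p_{t}(x,dy)$.

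Next I would carry out the identification. For $u_{0}\in\mathcal{D}(\mathbb{Z}_{p})$, Proposition \ref{Prop_1A} shows that $Z_{0}(x,t)\ast u_{0}(x)$ solves (\ref{Cauchy_Problem_4}); since $\mathcal{D}(\mathbb{Z}_{p})\subset\mathcal{C}(\mathbb{Z}_{p})$, it is in particular a solution of (\ref{Cauchy_Problem_5}) with the same initial datum, so by uniqueness $Z_{0}(x,t)\ast u_{0}(x)=\int_{\mathbb{Z}_{p}}u_{0}(y)p_{t}(x,dy)$. Taking $u_{0}=1_{B}$ for an arbitrary open compact $B\subset\mathbb{Z}_{p}$ (these are test functions) gives $p_{t}(x,B)=\int_{B}Z_{0}(x-y,t)\,dy$; since open compact subsets generate $\mathcal{B}$, the measures $p_{t}(x,\cdot)$ and $Z_{0}(x-\,\cdot\,,t)\,dy$ coincide, which is exactly (\ref{Probability}) for $t>0$, while the $t=0$ case follows from $Z_{0}(x,0)=\delta(x)$. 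For general $u_{0}\in\mathcal{C}(\mathbb{Z}_{p})$, the density of $\mathcal{D}(\mathbb{Z}_{p})$ in $\mathcal{C}(\mathbb{Z}_{p})$ together with the boundedness of integration against the probability measure $p_{t}(x,\cdot)$ lets me pass to the uniform limit and conclude that $u(x,t)=Z_{0}(x,t)\ast u_{0}(x)$ is the unique solution in the stated class. The normalization $\int_{\mathbb{Z}_{p}}Z_{0}(x,t)\,dx=1$ for $t>0$ is precisely Lemma \ref{Lemma_4A}.

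The main obstacle I anticipate is bookkeeping about function classes and time regularity rather than any deep difficulty: one must verify that the function supplied by Proposition \ref{Prop_1A} genuinely lies in the class $\mathcal{C}(\mathbb{Z}_{p})\times C^{1}[0,\infty)$ appearing in (\ref{Eq_Cauchy_problem_2})—this follows from Lemma \ref{Lemma_5A}(iii), since convolution with $Z_{0}$ keeps test functions as test functions and the $t$-dependence is through the explicit exponentials in (\ref{Z_0(x,t)})—and that the passage from $\mathcal{D}(\mathbb{Z}_{p})$ to $\mathcal{C}(\mathbb{Z}_{p})$ is compatible with the equation in the limit, which is routine given uniform convergence of the approximating initial data.
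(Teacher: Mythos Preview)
Your proposal is correct and follows essentially the same route as the paper: verify the maximum principle for $\boldsymbol{J}$, invoke \cite[Theorem 3.1]{Zuniga-networks} to obtain the transition measure $p_{t}(x,\cdot)$ and uniqueness, compare with the explicit solution from Proposition~\ref{Prop_1A} on test functions $1_{B}$ to identify $p_{t}(x,dy)=Z_{0}(x-y,t)\,dy$, and extend by the fact that open compact sets generate $\mathcal{B}$. The only additions in your write-up are the density-and-continuity bookkeeping for passing from $\mathcal{D}(\mathbb{Z}_{p})$ to $\mathcal{C}(\mathbb{Z}_{p})$ and the explicit citation of Lemma~\ref{Lemma_4A} for the normalization, both of which the paper leaves implicit.
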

	
	\begin{remark}
		The value $p(t,x,B)$ gives the transition probability, per unit of time, that
		a particle/walker starting at the point $x\in\mathbb{Z}_{p}$ will be found in
		the set $B$ at the time $t$.
	\end{remark}

	\section{Appendix C: Continuous-time Markov chains}
	
	We take $\mathcal{K}_{r}$, $r\in\mathbb{J}$ as in (\ref{partition_K}), \ where
	$\mathbb{J}$ is a countable set, possibly infinite. The value%
	\[
	p_{r,v}(t):=%
	{\displaystyle\int\limits_{\mathcal{K}_{v}}}
	p(t,x,\mathcal{K}_{r})\frac{dx}{\mu_{\text{Haar}}\left(  \mathcal{K}%
		_{v}\right)  }=%
	{\displaystyle\int\limits_{\mathcal{K}_{v}}}
	Z_{0}\left(  x,t\right)  \ast1_{\mathcal{K}_{r}}\left(  x\right)  \frac
	{dx}{\mu_{\text{Haar}}\left(  \mathcal{K}_{v}\right)  },
	\]
	see (\ref{Probability}), gives the transition probability, per unit of time,
	that a particle/walker starting at some position in $\mathcal{K}_{v}$ will be
	found in the set $\mathcal{K}_{r}$ at the time $t$. Notice that
	\[
	p_{r,v}(0)=\left\{
	\begin{array}
		[c]{ccc}%
		1 & \text{if} & r=v\\
		&  & \\
		0 & \text{if} & r\neq v.
	\end{array}
	\right.
	\]

	\begin{lemma}
		\label{Lemma_2}With the above notation, for any $v\in\mathbb{J}$ , for $t>0$,
		\[%
		{\displaystyle\sum\limits_{r\in\mathbb{J}}}
		p_{r,v}(t)=1.
		\]
		
	\end{lemma}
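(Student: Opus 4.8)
The plan is to use the defining formula for $p_{r,v}(t)$ together with the partition $\mathbb{Z}_{p}\smallsetminus\mathcal{N}=\bigsqcup_{r\in\mathbb{J}}\mathcal{K}_{r}$ and the conservation property $\int_{\mathbb{Z}_{p}}Z_{0}(x,t)\,dx=1$ for $t>0$ from Theorem \ref{Theorem_1}. Concretely, fix $v\in\mathbb{J}$ and $t>0$. Writing $u_{v}(x,t):=Z_{0}(x,t)\ast 1_{\mathcal{K}_{v}}(x)$, which by Proposition \ref{Prop_1A} (and its extension to $\mathcal{C}(\mathbb{Z}_{p})$ via Theorem \ref{Theorem_1}) is the solution of the heat equation with initial datum $1_{\mathcal{K}_{v}}$, the key identity is
\[
\sum_{r\in\mathbb{J}}p_{r,v}(t)=\frac{1}{\mu_{\text{Haar}}(\mathcal{K}_{v})}\sum_{r\in\mathbb{J}}\int_{\mathcal{K}_{v}}p(t,x,\mathcal{K}_{r})\,dx .
\]
The first step is to interchange the (countable) sum over $r$ with the integral over $\mathcal{K}_{v}$. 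Since each summand is nonnegative — $p(t,x,\mathcal{K}_{r})=\int_{\mathcal{K}_{r}}Z_{0}(x-y,t)\,dy\geq 0$ because $Z_{0}(\cdot,t)$ is the transition density of a Markov process — this is justified by the monotone convergence theorem (or Tonelli), so no delicate estimates are needed.

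The second step is to evaluate the inner sum pointwise. For fixed $x\in\mathcal{K}_{v}$, using countable additivity of the measure $B\mapsto\int_{B}Z_{0}(x-y,t)\,dy$ and the fact that $\mathcal{N}$ has measure zero,
\[
\sum_{r\in\mathbb{J}}p(t,x,\mathcal{K}_{r})=\int_{\bigsqcup_{r}\mathcal{K}_{r}}Z_{0}(x-y,t)\,dy=\int_{\mathbb{Z}_{p}\smallsetminus\mathcal{N}}Z_{0}(x-y,t)\,dy=\int_{\mathbb{Z}_{p}}Z_{0}(x-y,t)\,dy .
\]
By translation invariance of the Haar measure this last integral equals $\int_{\mathbb{Z}_{p}}Z_{0}(z,t)\,dz=1$ for $t>0$, by Theorem \ref{Theorem_1}. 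Substituting back gives
\[
\sum_{r\in\mathbb{J}}p_{r,v}(t)=\frac{1}{\mu_{\text{Haar}}(\mathcal{K}_{v})}\int_{\mathcal{K}_{v}}1\,dx=\frac{\mu_{\text{Haar}}(\mathcal{K}_{v})}{\mu_{\text{Haar}}(\mathcal{K}_{v})}=1 .
\]

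The main technical point to be careful about is the justification of $\int_{\mathbb{Z}_{p}}Z_{0}(x-y,t)\,dy=1$ when $Z_{0}(\cdot,t)$ is only a distribution a priori: one must invoke Theorem \ref{Theorem_1}, where it is shown that for $t>0$ the measure $B\mapsto p(t,x,B)=1_{B}\ast Z_{0}(\cdot,t)(x)$ is a genuine probability measure on the Borel $\sigma$-algebra $\mathcal{B}$ of $\mathbb{Z}_{p}$ — this is precisely what makes both the pointwise sum over the partition and the normalization $\int Z_{0}=1$ legitimate. Once that input is granted, everything else is a routine application of countable additivity and Tonelli's theorem; I do not expect any genuine obstacle beyond bookkeeping with the null set $\mathcal{N}$.
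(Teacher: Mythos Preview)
Your proposal is correct and follows essentially the same route as the paper: interchange the sum over $r$ with the integral over $\mathcal{K}_{v}$, then use the partition together with $\int_{\mathbb{Z}_{p}}Z_{0}(\cdot,t)=1$ from Theorem~\ref{Theorem_1}. The only cosmetic difference is that the paper justifies the interchange via dominated convergence along a finite exhaustion $\mathbb{J}_{n}\uparrow\mathbb{J}$ with the explicit bound $Z_{0}(\cdot,t)\ast\sum_{r\in\mathbb{J}_{n}}1_{\mathcal{K}_{r}}\leq\Omega(|x|_{p})$, whereas you invoke Tonelli directly from nonnegativity of $p(t,x,\cdot)$; both rely on the same positivity input from Theorem~\ref{Theorem_1}.
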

	
	\begin{proof}
		We fix \ a family $\left\{  \mathbb{J}_{n}\right\}  _{n\in\mathbb{N}}$ of
		subsets of $\mathbb{J}$ such that each $\mathbb{J}_{n}$ is a finite set,
		$\mathbb{J}_{n}\subset\mathbb{J}_{n+1}$ for every $n\in\mathbb{N}$, and
		$\cup_{n\in\mathbb{N}}\mathbb{J}_{n}=\mathbb{J}$. Notice that
		\begin{align*}
			Z_{0}\left(  x,t\right)  \ast%
			{\displaystyle\sum\limits_{r\in\mathbb{J}_{n}}}
			1_{\mathcal{K}_{r}}\left(  x\right)   &  =%
			{\displaystyle\int\limits_{\mathbb{Z}_{p}}}
			Z_{0}\left(  y,t\right)  \left(
			{\displaystyle\sum\limits_{r\in\mathbb{J}_{n}}}
			1_{\mathcal{K}_{r}}\left(  x-y\right)  \right)  dy\leq\Omega\left(  \left\vert
			x\right\vert _{p}\right)
			{\displaystyle\int\limits_{\mathbb{Z}_{p}}}
			Z_{0}\left(  y,t\right)  dy\\
			&  \leq \mathrm{e}^{-t\left(  1-\widehat{J}(0)\right)  }\Omega\left(  \left\vert
			x\right\vert _{p}\right)  =\Omega\left(  \left\vert x\right\vert _{p}\right)
			\text{, }%
		\end{align*}
		\ since $\widehat{J}(0)=1$, for any $t>0$, cf. Lemma \ref{Lemma_4A}. By the
		dominated convergence theorem,%

\begin{multline*}
\sum\limits_{r\in \mathbb{J}}p_{r,v}(t)=\frac{1}{\mu _{\text{Haar}}\left( 
\mathcal{K}_{v}\right) }\lim_{n\rightarrow \infty }\sum\limits_{r\in 
\mathbb{J}_{n}}\text{ }\int\limits_{\mathcal{K}_{v}}Z_{0}\left( x,t\right)
\ast 1_{\mathcal{K}_{r}}\left( x\right) dx \\
=\frac{1}{\mu _{\text{Haar}}\left( \mathcal{K}_{v}\right) }%
\lim_{n\rightarrow \infty }\int\limits_{\mathcal{K}_{v}}Z_{0}\left(
x,t\right) \ast \sum\limits_{r\in \mathbb{J}_{n}}1_{\mathcal{K}_{r}}\left(
x\right) dx \\
=\frac{1}{\mu _{\text{Haar}}\left( \mathcal{K}_{v}\right) }\int\limits_{%
\mathcal{K}_{v}}Z_{0}\left( x,t\right) \ast \lim_{n\rightarrow \infty
}\sum\limits_{r\in \mathbb{J}_{n}}1_{\mathcal{K}_{r}}\left( x\right) dx=%
\frac{1}{\mu _{\text{Haar}}\left( \mathcal{K}_{v}\right) }\int\limits_{%
\mathcal{K}_{v}}Z_{0}\left( x,t\right) \ast \Omega \left( \left\vert
x\right\vert _{p}\right) dx \\
=\frac{1}{\mu _{\text{Haar}}\left( \mathcal{K}_{v}\right) }\int\limits_{%
\mathcal{K}_{v}}\left\{ \int\limits_{\mathbb{Z}_{p}}Z_{0}\left(
x-y,t\right) dy\right\} dx=\frac{1}{\mu _{\text{Haar}}\left( \mathcal{K}%
_{v}\right) }\int\limits_{\mathcal{K}_{v}}\left\{ \int\limits_{\mathbb{Z}%
_{p}}Z_{0}\left( u,t\right) du\right\} dx \\
=\frac{\mathrm{e}^{-t\left( 1-\widehat{J}(0)\right) }\int_{\mathcal{K}_{v}}dx}{\mu _{%
\text{Haar}}\left( \mathcal{K}_{v}\right) }=1.
\end{multline*}
	\end{proof}
	
	\begin{theorem}
		\label{Theorem_2} $\left[  p_{r,v}(t)\right]  _{r,v\in\mathbb{J}}$ is the
		transition matrix of a CTMC. If $\widehat{J}(\left\vert u\right\vert _{p})<1$,
		for $u\in\mathbb{Q}_{p}\smallsetminus\mathbb{Z}_{p}$, then
		\[
		\lim_{t\rightarrow\infty}\left\vert p_{r,v}(t)-\mu_{\text{Haar}}%
		(\mathcal{K}_{r})\right\vert =0\text{,}%
		\]
		for $t>0$. Consequently, the CTMC admits a stationary distribution $\left[
		p_{r}^{\text{sta}}\right]  _{r\in\mathbb{J}}$, where%
		\[
		p_{r}^{\text{sta}}=\lim_{t\rightarrow\infty}p_{r,v}(t)=\mu_{\text{Haar}%
		}(\mathcal{K}_{r}).
		\]
		
	\end{theorem}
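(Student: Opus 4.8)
The plan is to separate the two assertions. That $\left[ p_{r,v}(t)\right] _{r,v\in\mathbb{J}}$ is the transition matrix of a CTMC is essentially bookkeeping built on Theorem \ref{Theorem_1} and Lemma \ref{Lemma_2}, while the substantive part is the long-time limit, which I would extract by a Fourier computation. For the first assertion: positivity $p_{r,v}(t)\geq 0$ holds because $p(t,x,B)=\int_{B}Z_{0}(x-y,t)\,dy$ is a probability measure (Theorem \ref{Theorem_1}), so $Z_{0}(\cdot,t)\geq 0$ a.e.\ and, with Lemma \ref{Lemma_4A}, $Z_{0}(\cdot,t)\in L^{1}(\mathbb{Z}_{p})$ with $\int_{\mathbb{Z}_{p}}Z_{0}(x,t)\,dx=1$; the column sums equal $1$ by Lemma \ref{Lemma_2}; and $p_{r,v}(0)=\delta_{r,v}$ is immediate. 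Right-continuity at $t=0$ and, crucially, the Chapman--Kolmogorov relations $p_{c,a}(t+s)=\sum_{b}p_{c,b}(t)\,p_{b,a}(s)$ follow from the semigroup property $Z_{0}(\cdot,t+s)=Z_{0}(\cdot,s)\ast Z_{0}(\cdot,t)$ together with Lemma \ref{Lemma_5A}, which shows that $x\mapsto Z_{0}(x,t)\ast 1_{\mathcal{K}_{r}}(x)$ is constant on each cell $\mathcal{K}_{w}$; equivalently, the process $\mathfrak{X}$ of Theorem \ref{Theorem_1} is lumpable with respect to the partition (\ref{partition_K}); cf.\ \cite{Zuniga-networks}.

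The heart of the matter is a Plancherel identity. Recalling $Z_{0}(x,t)=\mathcal{F}_{\xi\rightarrow x}^{-1}\bigl(e^{-t(1-\widehat{J}(\left\vert\xi\right\vert_{p}))}\bigr)$ and using $1_{\mathcal{K}_{r}},1_{\mathcal{K}_{v}}\in L^{2}(\mathbb{Z}_{p})$, $Z_{0}(\cdot,t)\ast 1_{\mathcal{K}_{r}}\in L^{2}(\mathbb{Z}_{p})$, one writes
\[
\mu_{\text{Haar}}(\mathcal{K}_{v})\,p_{r,v}(t)=\bigl\langle 1_{\mathcal{K}_{v}},\,Z_{0}(\cdot,t)\ast 1_{\mathcal{K}_{r}}\bigr\rangle=\int_{\mathbb{Q}_{p}}\widehat{1_{\mathcal{K}_{v}}}(\xi)\,\overline{\widehat{1_{\mathcal{K}_{r}}}(\xi)}\,e^{-t(1-\widehat{J}(\left\vert\xi\right\vert_{p}))}\,d\xi .
\]
Split this integral over $\mathbb{Z}_{p}$ and over $\mathbb{Q}_{p}\smallsetminus\mathbb{Z}_{p}$. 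By Lemma \ref{Lemma_1A}(iii) one has $\widehat{J}(\left\vert\xi\right\vert_{p})=1$ on $\mathbb{Z}_{p}$, so there the exponential is identically $1$; using $\int_{\mathbb{Z}_{p}}\chi_{p}(\xi z)\,d\xi=\Omega(\left\vert z\right\vert_{p})$ and Fubini,
\[
\int_{\mathbb{Z}_{p}}\widehat{1_{\mathcal{K}_{v}}}(\xi)\,\overline{\widehat{1_{\mathcal{K}_{r}}}(\xi)}\,d\xi=\int_{\mathcal{K}_{v}}\int_{\mathcal{K}_{r}}\Bigl(\int_{\mathbb{Z}_{p}}\chi_{p}\bigl(\xi(y-x)\bigr)\,d\xi\Bigr)dx\,dy=\mu_{\text{Haar}}(\mathcal{K}_{v})\,\mu_{\text{Haar}}(\mathcal{K}_{r}),
\]
since $y-x\in\mathbb{Z}_{p}$ whenever $x\in\mathcal{K}_{r}$, $y\in\mathcal{K}_{v}$. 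On $\mathbb{Q}_{p}\smallsetminus\mathbb{Z}_{p}$ the hypothesis $\widehat{J}(\left\vert\xi\right\vert_{p})<1$ gives $1-\widehat{J}(\left\vert\xi\right\vert_{p})>0$, hence $e^{-t(1-\widehat{J}(\left\vert\xi\right\vert_{p}))}\to 0$ pointwise as $t\to\infty$; as the integrand is bounded in modulus by $\left\vert\widehat{1_{\mathcal{K}_{v}}}\right\vert\,\left\vert\widehat{1_{\mathcal{K}_{r}}}\right\vert\in L^{1}(\mathbb{Q}_{p})$ (Cauchy--Schwarz, both factors in $L^{2}$), dominated convergence yields $\int_{\mathbb{Q}_{p}\smallsetminus\mathbb{Z}_{p}}(\cdots)\to 0$. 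Therefore $\left\vert p_{r,v}(t)-\mu_{\text{Haar}}(\mathcal{K}_{r})\right\vert=\mu_{\text{Haar}}(\mathcal{K}_{v})^{-1}\bigl\vert\int_{\mathbb{Q}_{p}\smallsetminus\mathbb{Z}_{p}}(\cdots)\bigr\vert\to 0$ as $t\to\infty$.

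Finally, set $p_{r}^{\text{sta}}=\mu_{\text{Haar}}(\mathcal{K}_{r})$; this is a probability vector because $\sum_{r}p_{r}^{\text{sta}}=\mu_{\text{Haar}}\bigl(\bigsqcup_{r}\mathcal{K}_{r}\bigr)=\mu_{\text{Haar}}(\mathbb{Z}_{p}\smallsetminus\mathcal{N})=1$, $\mathcal{N}$ being null. To see it is invariant, fix $t$ and let $s\to\infty$ in $p_{c,a}(t+s)=\sum_{b}p_{c,b}(t)\,p_{b,a}(s)$: the left side tends to $p_{c}^{\text{sta}}$ by the previous step, while Fatou's lemma (the summands are nonnegative and $p_{b,a}(s)\to p_{b}^{\text{sta}}$) gives $p_{c}^{\text{sta}}\geq\sum_{b}p_{c,b}(t)\,p_{b}^{\text{sta}}$; summing over $c$, using Tonelli and $\sum_{c}p_{c,b}(t)=1$ (Lemma \ref{Lemma_2}), one gets $1\geq 1$, forcing equality for every $c$, i.e.\ $p_{r}^{\text{sta}}=\sum_{v}p_{r,v}(t)\,p_{v}^{\text{sta}}$. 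I expect the Plancherel step — in particular pinning down that the $t$-independent part of the integral equals exactly $\mu_{\text{Haar}}(\mathcal{K}_{v})\mu_{\text{Haar}}(\mathcal{K}_{r})$, which is where $\widehat{J}\equiv 1$ on $\mathbb{Z}_{p}$ (Lemma \ref{Lemma_1A}(iii)) is indispensable — to be the main technical point, with the lumpability needed to justify Chapman--Kolmogorov over an infinite state space being the other delicate ingredient.
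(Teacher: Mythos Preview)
Your proof is correct and follows the same Fourier-analytic route as the paper: both express $\mu_{\text{Haar}}(\mathcal{K}_v)\,p_{r,v}(t)$ as $\int_{\mathbb{Q}_p}\widehat{1_{\mathcal{K}_v}}(\xi)\,\overline{\widehat{1_{\mathcal{K}_r}}(\xi)}\,e^{-t(1-\widehat{J}(|\xi|_p))}\,d\xi$, use Lemma~\ref{Lemma_1A}(iii) to see the $\mathbb{Z}_p$-piece equals $\mu_{\text{Haar}}(\mathcal{K}_v)\,\mu_{\text{Haar}}(\mathcal{K}_r)$, and kill the $\mathbb{Q}_p\smallsetminus\mathbb{Z}_p$-piece by dominated convergence under the hypothesis $\widehat{J}<1$ there. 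The only differences are cosmetic---the paper packages the same computation by setting $F=1_{\mathcal{K}_v}\{(Z_0-\Omega)\ast 1_{\mathcal{K}_r}\}$ and evaluating $\widehat{F}(0,t)$, and dominates via the observation that $\widehat{1}_{\mathcal{K}_v}(-u)\,\widehat{1}_{\mathcal{K}_r}(u)$ is a test function rather than your Cauchy--Schwarz bound---while your explicit treatment of Chapman--Kolmogorov via lumpability and of invariance via Fatou goes beyond what the paper's proof actually supplies.
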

	
	\begin{proof}
		(i) By using that $\mathbb{Z}_{p}$ is an additive topological group, for a
		fixed $x\in\mathbb{Z}_{p}$, the mapping $y\rightarrow z=x-y$ is a topological
		and algebraic isomorphism preserving the Haar measure, then
		\begin{gather}
			\frac{1}{\mu_{\text{Haar}}\left(  \mathcal{K}_{v}\right)  }%
			{\displaystyle\int\limits_{\mathcal{K}_{v}}}
			\Omega\left(  \left\vert x\right\vert _{p}\right)  \ast1_{\mathcal{K}_{r}%
			}\left(  x\right)  dx=\frac{1}{\mu_{\text{Haar}}\left(  \mathcal{K}%
				_{v}\right)  }%
			{\displaystyle\int\limits_{\mathcal{K}_{v}}}
			{\displaystyle\int\limits_{\mathbb{Z}_{p}}}
			1_{\mathcal{K}_{r}}\left(  x-y\right)  dy\nonumber\\
			=\frac{1}{\mu_{\text{Haar}}\left(  \mathcal{K}_{v}\right)  }%
			{\displaystyle\int\limits_{\mathcal{K}_{v}}}
			{\displaystyle\int\limits_{\mathbb{Z}_{p}}}
			1_{\mathcal{K}_{r}}\left(  z\right)  dz=\frac{\mu_{\text{Haar}}\left(
				\mathcal{K}_{v}\right)  \mu_{\text{Haar}}\left(  \mathcal{K}_{r}\right)  }%
			{\mu_{\text{Haar}}\left(  \mathcal{K}_{v}\right)  }=\mu_{\text{Haar}%
			}(\mathcal{K}_{r}).\label{Identity}%
		\end{gather}
		Now, for $t>0$, we set%
		\[
		F(x;t):=1_{\mathcal{K}_{v}}\left(  x\right)  \left\{  \left\{  Z_{0}\left(
		x,t\right)  -\Omega\left(  \left\vert x\right\vert _{p}\right)  \right\}
		\ast1_{\mathcal{K}_{r}}\left(  x\right)  \right\}  .
		\]
		Since the convolution of a distribution ($Z_{0}\left(  x,t\right)
		-\Omega\left(  \left\vert x\right\vert _{p}\right)  $) and a test function
		($1_{\mathcal{K}_{r}}\left(  x\right)  $) is a locally function; and that the
		product of this function times the test function $1_{\mathcal{K}_{v}}\left(
		x\right)  $ is a test function, then $F(x)$ is a test function in $x$, for
		$t>0$. Therefore,%
		\[%
		{\displaystyle\int\limits_{\mathbb{Q}_{p}}}
		F(x,t)dx=\widehat{F}(0,t).
		\]
		On the other hand, by using (\ref{Identity}),
		\begin{align*}
			\left\vert p_{r,v}(t)-\mu_{\text{Haar}}(\mathcal{K}_{r})\right\vert  &
			=\frac{1}{\mu_{\text{Haar}}\left(  \mathcal{K}_{v}\right)  }\left\vert \text{
			}%
			{\displaystyle\int\limits_{\mathcal{K}_{v}}}
			\left\{  Z_{0}\left(  x,t\right)  -\Omega\left(  \left\vert x\right\vert
			_{p}\right)  \right\}  \ast1_{\mathcal{K}_{r}}\left(  x\right)  dx\right\vert
			\\
			&  =\frac{1}{\mu_{\text{Haar}}\left(  \mathcal{K}_{v}\right)  }\left\vert
			\text{ }%
			{\displaystyle\int\limits_{\mathbb{Q}_{p}}}
			F(x,t)dx\right\vert =\left\vert \widehat{F}(0,t)\right\vert .
		\end{align*}
		Then, we have to show that $\lim_{t\rightarrow\infty}\widehat{F}(0,t)=0.$
		
		By using $1=\widehat{J}(0)$ and $\widehat{J}(\left\vert \xi\right\vert
		_{p})=\widehat{J}(0)$ for $\left\vert \xi\right\vert _{p}\leq1$,%
		\begin{align*}
			\widehat{F}(\xi,t) &  =\widehat{1}_{\mathcal{K}_{v}}\left(  \xi\right)
			\ast\left\{  \mathrm{e}^{-t\left(  1-\widehat{J}(\left\vert \xi\right\vert
				_{p})\right)  }-\Omega\left(  \left\vert \xi\right\vert _{p}\right)  \right\}
			\widehat{1}_{\mathcal{K}_{r}}\left(  \xi\right)  \\
			&  =\widehat{1}_{\mathcal{K}_{v}}\left(  \xi\right)  \ast\left\{
			\begin{array}
				[c]{lll}%
				0 & \text{if } & \left\vert \xi\right\vert _{p}\leq1\\
				&  & \\
				\widehat{1}_{\mathcal{K}_{r}}\left(  \xi\right) \mathrm{e} ^{-t\left(  1-\widehat
					{J}(\left\vert \xi\right\vert _{p})\right)  } & \text{if } & \left\vert
				\xi\right\vert _{p}\geq p
			\end{array}
			\right.  \\
			&  =%
			{\displaystyle\int\limits_{\mathbb{Q}_{p}\smallsetminus\mathbb{Z}_{p}}}
			\widehat{1}_{\mathcal{K}_{v}}\left(  \xi-u\right)  \mathrm{e}^{-t\left(  1-\widehat
				{J}(\left\vert u\right\vert _{p})\right)  }\widehat{1}_{\mathcal{K}_{r}%
			}\left(  u\right)  du.
		\end{align*}

		Now, by the hypothesis $1-\widehat{J}(\left\vert u\right\vert _{p})>0$ for
		$u\in\mathbb{Q}_{p}\smallsetminus\mathbb{Z}_{p}$, $\mathrm{e}^{-t\left(  1-\widehat
			{J}(\left\vert u\right\vert _{p})\right)  }\leq1$ for $t>0$, and the fact that
		$\widehat{1}_{\mathcal{K}_{v}}\left(  -u\right)  \widehat{1}_{\mathcal{K}_{r}%
		}\left(  u\right)  $ is a test function, we can apply the dominated
		convergence theorem,
		\begin{align*}
			\lim_{t\rightarrow\infty}\widehat{F}(0,t)  & =\lim_{t\rightarrow\infty}%
			{\displaystyle\int\limits_{\mathbb{Q}_{p}\smallsetminus\mathbb{Z}_{p}}}
			\widehat{1}_{\mathcal{K}_{v}}\left(  -u\right)  \mathrm{e}^{-t\left(  1-\widehat
				{J}(\left\vert u\right\vert _{p})\right)  }\widehat{1}_{\mathcal{K}_{r}%
			}\left(  u\right)  du\\
			& =%
			{\displaystyle\int\limits_{\mathbb{Q}_{p}\smallsetminus\mathbb{Z}_{p}}}
			\widehat{1}_{\mathcal{K}_{v}}\left(  -u\right)  \left\{  \lim_{t\rightarrow
				\infty}\mathrm{e}^{-t\left(  1-\widehat{J}(\left\vert u\right\vert _{p})\right)
			}\right\}  \widehat{1}_{\mathcal{K}_{r}}\left(  u\right)  du=0.
		\end{align*}
		
	\end{proof}

	\section{Appendix D: $p$-adic Schr\"{o}dinger equations on the unit ball and
		quantum Markov chains}
	By performing a Wick rotation $t\rightarrow it$ in (\ref{Eq_Cauchy_problem_2}%
	), and taking $\Psi\left(  x,t\right)  =u\left(  x,it\right)  $, we obtain the
	Schr\"{o}dinger equations in the unit ball,%
	
	\begin{equation}
		\left\{
		\begin{array}
			[c]{l}%
			\mathrm{i}\frac{\partial}{\partial t}\Psi\left(  x,t\right)  =-J\left(  \left\vert
			x\right\vert _{p}\right)  \ast\Psi\left(  x,t\right)  +\Psi\left(  x,t\right)
			\text{, }x\in\mathbb{Z}_{p},t\geq0\\
			\\
			\Psi\left(  x,0\right)  =\psi_{0}\left(  x\right)  \in\mathcal{D}%
			(\mathbb{Z}_{p}).
		\end{array}
		\right.  \label{Schrodinger-Equation}%
	\end{equation}
	By using Fourier transforms, we obtain that the solution of
	(\ref{Schrodinger-Equation}) is given by%
	\begin{equation}
		\Psi\left(  x,t\right)  =\mathcal{F}_{\xi\rightarrow x}^{-1}(\mathrm{e}^{-\mathrm{i}\left(
			1-\widehat{J}\left(  \left\vert \xi\right\vert _{p}\right)  \right)  t}%
		)\ast\psi_{0}\left(  x\right)  \text{ in }\mathcal{D}^{\prime}(\mathbb{Q}%
		_{p})\text{,}\label{Useful-Formula}%
	\end{equation}
	where $\mathcal{F}_{\xi\rightarrow x}$ denotes the Fourier transform in
	$\mathcal{D}^{\prime}(\mathbb{Q}_{p})$. Notice that $\Psi\left(  x,t\right)  $
	is a locally constant function in $x$, for any $t\geq0$.
	
	Since
	\[
	\mathrm{e}^{-i\left(  1-\widehat{J}\left(  \left\vert \xi+\xi_{0}\right\vert
		_{p}\right)  \right)  t}=\mathrm{e}^{-i\left(  1-\widehat{J}\left(  \left\vert
		\xi\right\vert _{p}\right)  \right)  t}%
	\]
	for any $\xi_{0}\in\mathbb{Z}_{p}$, $\xi\in\mathbb{Q}_{p}$, $\mathcal{F}%
	_{\xi\rightarrow x}^{-1}(\mathrm{e}^{-\mathrm{i}\left(  1-\widehat{J}\left(  \left\vert
		\xi\right\vert _{p}\right)  \right)  t})$ is a distribution with support in
	$\mathbb{Z}_{p}$, see, e.g., \cite[Section VII.3]{V-V-Z}, \cite[Theorem
	4.9.3]{Alberio et al}, \cite[Theorem 1.161]{Zuniga-Textbook}. The explicit
	calculation of $\mathcal{F}_{\xi\rightarrow x}^{-1}(\mathrm{e}^{-\mathrm{i}\left(  1-\widehat
		{J}\left(  \left\vert \xi\right\vert _{p}\right)  \right)  t})$ can be carried
	out as in the case of the heat kernel, and\ one obtains that%
	\[
	\mathcal{F}_{\xi\rightarrow x}^{-1}(\mathrm{e} ^{-\mathrm{i}\left(  1-\widehat{J}\left(
		\left\vert \xi\right\vert _{p}\right)  \right)  t}))=Z_{0}(x,\mathrm{i}t)
	\]
	for $x\in\mathbb{Z}_{p}\smallsetminus\left\{  0\right\}  $, $t>0$.
	
	The formula (\ref{Useful-Formula}) is useful; however, we need to show that
	$\left\Vert \Psi\left(  \cdot,t\right)  \right\Vert _{2}=1$, for any
	$\left\Vert \psi_{0}\right\Vert _{2}=1$, and any $t\geq0$. To achieve this
	result, we need a different approach.
	
	\begin{lemma}
		\label{Lemma_3}(i) For $k\in\left\{  1,\ldots,p-1\right\}  $, $r\leq0$, and
		$b\in\mathbb{Q}_{p}/\mathbb{Z}_{p}$, $bp^{-r}\in\mathbb{Z}_{p}$,
		\[
		\boldsymbol{J}\Psi_{rbk}\left(  x\right)  =(\widehat{J}\left(  p^{1-r}\right)
		-1)\Psi_{rbk}\left(  x\right)  .
		\]
		\noindent(ii)%
		\[
		\boldsymbol{J}\Omega\left(  \left\vert x\right\vert _{p}\right)  =0\text{.}%
		\]
		
	\end{lemma}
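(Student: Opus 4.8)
The plan is to use the pseudo-differential representation $\boldsymbol{J}\varphi=\mathcal{F}_{\xi\to x}^{-1}\big((\widehat{J}(\left\vert\xi\right\vert _{p})-1)\,\widehat{\varphi}\big)$, valid pointwise for $\varphi\in\mathcal{D}(\mathbb{Z}_{p})$, together with the fact that the Fourier transform of $\Psi_{rbk}$ is concentrated on a single $p$-adic sphere, on which the radial symbol $\widehat{J}(\left\vert\xi\right\vert _{p})-1$ reduces to a constant. Note first that each $\Psi_{rbk}$ is locally constant with compact support (an additive character times the characteristic function of a ball contained in $\mathbb{Z}_{p}$), hence belongs to $\mathcal{D}(\mathbb{Z}_{p})$, so the above representation applies.

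First I would compute $\widehat{\Psi}_{rbk}$. Substituting $u=p^{r}x-b$ (so that $dx=\left\vert p^{-r}\right\vert _{p}du=p^{r}du$, since $r\le 0$) and using $\int_{\mathbb{Z}_{p}}\chi_{p}(\eta u)\,du=\Omega(\left\vert\eta\right\vert _{p})$, one obtains
\[
\widehat{\Psi}_{rbk}(\xi)=p^{r/2}\,\chi_{p}\big(p^{-r}b\,\xi\big)\,\Omega\big(\left\vert p^{-r}\xi+p^{-1}k\right\vert _{p}\big).
\]
Since $k\in\{1,\dots,p-1\}$ gives $\left\vert p^{-1}k\right\vert _{p}=p$, the ultrametric inequality forces $\left\vert p^{-r}\xi\right\vert _{p}=p$ on the support of the last factor, that is, $\left\vert\xi\right\vert _{p}=p^{1-r}$; more precisely this support is the ball $-p^{r-1}k+p^{-r}\mathbb{Z}_{p}$, which has radius $p^{-r}<p^{1-r}$ and therefore sits entirely inside the single sphere $\left\vert\xi\right\vert _{p}=p^{1-r}$.

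On that sphere $\widehat{J}(\left\vert\xi\right\vert _{p})-1$ is the constant $\widehat{J}(p^{1-r})-1$ ($\widehat{J}$ being radial by Lemma~\ref{Lemma_1A}), so this constant may be pulled out of the pseudo-differential representation:
\[
\boldsymbol{J}\Psi_{rbk}=\mathcal{F}_{\xi\to x}^{-1}\big((\widehat{J}(\left\vert\xi\right\vert _{p})-1)\,\widehat{\Psi}_{rbk}\big)=(\widehat{J}(p^{1-r})-1)\,\Psi_{rbk},
\]
which is (i). For (ii) I would argue directly from $\boldsymbol{J}\Omega(\left\vert x\right\vert _{p})=J\ast\Omega(\left\vert x\right\vert _{p})-\Omega(\left\vert x\right\vert _{p})$: if $x\in\mathbb{Z}_{p}$ then $\left\vert x-y\right\vert _{p}\le1$ for every $y\in\mathbb{Z}_{p}$, so $\big(J\ast\Omega(\left\vert\cdot\right\vert _{p})\big)(x)=\int_{\mathbb{Z}_{p}}J(\left\vert y\right\vert _{p})\,dy=1=\Omega(\left\vert x\right\vert _{p})$, while for $x\notin\mathbb{Z}_{p}$ both terms vanish; equivalently, $\widehat{\Omega(\left\vert x\right\vert _{p})}=\Omega(\left\vert\xi\right\vert _{p})$ is supported in the unit ball, where $\widehat{J}(\left\vert\xi\right\vert _{p})=\widehat{J}(0)=1$ by Lemma~\ref{Lemma_1A}(iii), so the symbol vanishes there.

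The only delicate point is the bookkeeping in the change of variables: one must check that the normalizations and $p$-adic norms combine so that the support of $\widehat{\Psi}_{rbk}$ indeed collapses into a ball contained in a \emph{single} sphere $\left\vert\xi\right\vert _{p}=p^{1-r}$ (rather than meeting several spheres), since it is precisely this localization that makes the radial symbol act as a scalar and produces the eigenvalue.
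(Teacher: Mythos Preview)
Your proof is correct and follows essentially the same route as the paper's: compute $\widehat{\Psi}_{rbk}$, observe its support lies in a single sphere $\left\vert\xi\right\vert_{p}=p^{1-r}$ where the radial symbol is constant, and pull the constant out; for (ii) you give exactly the paper's direct convolution argument (plus the equivalent Fourier-side version). One small slip: the support ball should be written $-p^{r-1}k+p^{r}\mathbb{Z}_{p}$ rather than $-p^{r-1}k+p^{-r}\mathbb{Z}_{p}$, though your stated radius $p^{-r}$ and the ensuing localization argument are correct.
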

	
	\begin{proof}
		(i) We recall that $\boldsymbol{J}$ is a pseudo-differential operator,%
		\[
		\boldsymbol{J}\Psi_{rbk}\left(  x\right)  =\mathcal{F}_{\xi\rightarrow x}%
		^{-1}\left(  \left(  \widehat{J}\left(  \left\vert \xi\right\vert _{p}\right)
		-1\right)  \mathcal{F}_{x\rightarrow\xi}\Psi_{rbk}\left(  x\right)  \right)  .
		\]
		Now%
		\[
		\widehat{\Psi}_{rbk}\left(  \xi\right)  =p^{\frac{r}{2}}\chi_{p}\left(
		p^{-r}b\xi\right)  \Omega\left(  \left\vert p^{-r}\xi+p^{-1}k\right\vert
		_{p}\right)  ,
		\]
		where $\Omega\left(  \left\vert p^{-r}\xi+p^{-1}k\right\vert _{p}\right)  =1$
		if and only if%
		\[
		\xi\in-p^{r-1}k+p^{r}\mathbb{Z}_{p}=p^{r-1}k+p^{r}\mathbb{Z}_{p}.
		\]
		Then, by the ultrametric property of $\left\vert \cdot\right\vert _{p}$,
		\[
		\left\vert \xi\right\vert _{p}=\left\vert p^{r-1}k+p^{r}m\right\vert
		_{p}=\left\vert p^{r-1}k\right\vert _{p}=p^{1-r},
		\]
		for any $m\in\mathbb{Z}_{p}$. Then,
		\[
		\mathcal{F}_{x\rightarrow\xi}\left\{  \boldsymbol{J}\left(  \Psi_{rbk}\left(
		x\right)  \right)  \right\}  =(\widehat{J}\left(  p^{1-r}\right)
		-1)\widehat{\Psi}_{rbk}\left(  \xi\right)  .
		\]

		(ii) The second part follows from
		\begin{gather*}
\boldsymbol{J}\Omega \left( \left\vert x\right\vert _{p}\right) =J\left(
\left\vert x\right\vert _{p}\right) \ast \Omega \left( \left\vert
x\right\vert _{p}\right) -\Omega \left( \left\vert x\right\vert _{p}\right) 
\\
=\Omega \left( \left\vert x\right\vert _{p}\right) \int\limits_{\mathbb{Z}%
_{p}}J\left( \left\vert x-y\right\vert _{p}\right) \Omega \left( \left\vert
y\right\vert _{p}\right) dy-\Omega \left( \left\vert x\right\vert
_{p}\right)  \\
=\Omega \left( \left\vert x\right\vert _{p}\right) \int\limits_{\mathbb{Z}%
_{p}}J\left( \left\vert y\right\vert _{p}\right) dy-\Omega \left( \left\vert
x\right\vert _{p}\right)  \\
=\Omega \left( \left\vert x\right\vert _{p}\right) \int\limits_{\mathbb{Z}%
_{p}}J\left( \left\vert y\right\vert _{p}\right) dy-\Omega \left( \left\vert
x\right\vert _{p}\right) =\Omega \left( \left\vert x\right\vert _{p}\right)
\left( \left\Vert J\right\Vert _{L^{1}\left( \mathbb{Z}_{p}\right)
}-1\right) .
\end{gather*}

	\end{proof}
	
	\begin{theorem}
		\label{Theorem_3}The initial value problem%
		\begin{equation}
			\left\{
			\begin{array}
				[c]{l}%
				\Psi(x,t)\in L^{2}\left(  \mathbb{Z}_{p}\right)  \text{ for }t\geq0\text{
					fixed}\\
				\\
				\Psi(x,t)\in C^{1}\left(  0,\infty\right)  \text{ for }x\in\mathbb{Z}%
				_{p}\text{ fixed}\\
				\\
				\mathrm{i}\frac{\partial\Psi(x,t)}{\partial t}=-\boldsymbol{J}\Psi(x,t)\\
				\\
				\left\Vert \Psi(\cdot,t)\right\Vert _{2}=1\text{ for any }t\geq0\\
				\\
				\Psi(x,0)=\psi_{0}(x)\in L^{2}\left(  \mathbb{Z}_{p}\right)  \text{, with
				}\left\Vert \psi_{0}\right\Vert _{2}=1,
			\end{array}
			\right.  \label{Cauchy_Problem_3}%
		\end{equation}
		has a unique solution of the form%
		\begin{equation}
			\Psi(x,t)=A_{0}\Omega\left(  \left\vert x\right\vert
			_{p}\right)  +%
			{\displaystyle\sum\limits_{k\in\left\{  1,\ldots,p-1\right\}  }}
			\text{ \ }%
			{\displaystyle\sum\limits_{\substack{b\in\mathbb{Q}_{p}/\mathbb{Z}%
						_{p}\\bp^{-r}\in\mathbb{Z}_{p}}}}
			\text{ \ }%
			{\displaystyle\sum\limits_{r\leq0}}
			\text{ }A_{_{rbk}}\mathrm{e}^{-\mathrm{i}\left(  J\left(  p^{1-r}\right)  -1\right)  t}%
			\Psi_{rbk}\left(  x\right)  {,}\label{Expansion_Psi}%
		\end{equation}
		where the $A_{0}$, $A_{kr}\in\mathbb{C}$.
	\end{theorem}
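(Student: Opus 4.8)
The plan is to diagonalize $\boldsymbol{J}$ on $L^{2}(\mathbb{Z}_{p})$ using the orthonormal basis $\{\Omega(|x|_{p})\}\cup\{\Psi_{rbk}(x)\}$ and to build the solution as a superposition of eigenmodes carrying unimodular time-dependent phases. First I would record that $\boldsymbol{J}$ restricts to a bounded operator on $L^{2}(\mathbb{Z}_{p})$: the estimate $\|\boldsymbol{J}\varphi\|_{2}\le 2\|\varphi\|_{2}$ follows from $\|J\|_{1}=1$, and $L^{2}(\mathbb{Z}_{p})$ is $\boldsymbol{J}$-invariant since $\operatorname{supp}(J\ast\varphi)\subset\mathbb{Z}_{p}$. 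Moreover $\boldsymbol{J}$ is self-adjoint, because on the Fourier side it is multiplication by the real-valued, bounded symbol $\widehat{J}(|\xi|_{p})-1$ (Lemma \ref{Lemma_1A}, together with $|\widehat{J}(|\xi|_{p})|\le 1$). Combined with Lemma \ref{Lemma_3}, which exhibits the basis elements as eigenvectors of $\boldsymbol{J}$ with real eigenvalues ($0$ for $\Omega(|x|_{p})$ and $\widehat{J}(p^{1-r})-1$ for $\Psi_{rbk}$), this gives a complete spectral decomposition of $\boldsymbol{J}$ on $L^{2}(\mathbb{Z}_{p})$.

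Next, given $\psi_{0}\in L^{2}(\mathbb{Z}_{p})$ with $\|\psi_{0}\|_{2}=1$, I would expand it in the basis, $\psi_{0}=A_{0}\Omega(|x|_{p})+\sum_{r,b,k}A_{rbk}\Psi_{rbk}(x)$ with $A_{0}=\langle\psi_{0},\Omega(|x|_{p})\rangle$ and $A_{rbk}=\langle\psi_{0},\Psi_{rbk}\rangle$, so that Parseval gives $|A_{0}|^{2}+\sum|A_{rbk}|^{2}=1$. Define $\Psi(x,t)$ by the series (\ref{Expansion_Psi}). Since each phase factor is unimodular, for every $t$ the series converges in $L^{2}(\mathbb{Z}_{p})$ and $\|\Psi(\cdot,t)\|_{2}^{2}=|A_{0}|^{2}+\sum|A_{rbk}|^{2}=1$, which is the normalization claimed for all $t\ge 0$. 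To see that $\Psi$ solves the equation, note that $t\mapsto\Psi(\cdot,t)$ is a $C^{1}$ curve in $L^{2}(\mathbb{Z}_{p})$ whose derivative is obtained by differentiating the series term by term; this is legitimate because the eigenvalues satisfy $|\widehat{J}(p^{1-r})-1|\le 2$ uniformly, so the differentiated series is again $\ell^{2}$-summable. Passing the bounded operator $\boldsymbol{J}$ through the sum then yields $i\partial_{t}\Psi=-\boldsymbol{J}\Psi$; equivalently $\Psi(\cdot,t)=e^{it\boldsymbol{J}}\psi_{0}$, whose Fourier multiplier $e^{-it(1-\widehat{J}(|\xi|_{p}))}$ matches the formula $\Psi=Z_{0}(x,it)\ast\psi_{0}$ of (\ref{Useful-Formula}). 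The pointwise regularity in $x$ required in (\ref{Cauchy_Problem_3}) is inherited from the local constancy of each $\Psi_{rbk}(x)$: for $\psi_{0}\in\mathcal{D}(\mathbb{Z}_{p})$ the expansion is a finite sum and the statement is immediate, and the general $L^{2}$ case follows by the density of $\mathcal{D}(\mathbb{Z}_{p})$ together with the isometry $\|e^{it\boldsymbol{J}}\|=1$.

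For uniqueness, suppose $\Psi_{1},\Psi_{2}$ are two solutions with the same initial datum and set $\Phi=\Psi_{1}-\Psi_{2}$. Using that $\boldsymbol{J}$ is self-adjoint, $\frac{d}{dt}\|\Phi(\cdot,t)\|_{2}^{2}=2\,\mathrm{Re}\langle\partial_{t}\Phi,\Phi\rangle=2\,\mathrm{Re}\big(i\langle\boldsymbol{J}\Phi,\Phi\rangle\big)=0$, since $\langle\boldsymbol{J}\Phi,\Phi\rangle\in\mathbb{R}$; as $\Phi(\cdot,0)=0$ this forces $\Phi\equiv 0$. The genuinely delicate point — and the one I expect to be the main obstacle — is reconciling the two notions of solution in play: "$C^{1}$ in $t$ for each fixed $x$ together with $\Psi(\cdot,t)\in L^{2}$" on the one hand, and "$C^{1}$ as an $L^{2}$-valued curve" on the other, so that both the energy identity and the eigenfunction expansion actually apply to the objects in the statement. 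I would handle this by first establishing everything for $\psi_{0}\in\mathcal{D}(\mathbb{Z}_{p})$, where the expansion is a finite sum and the two notions coincide, and then passing to the $L^{2}$ closure via the unitarity of $e^{it\boldsymbol{J}}$; this step, rather than any single computation, is where the care is needed.
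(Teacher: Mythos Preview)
Your proposal is correct and follows essentially the same approach as the paper: expand in the orthonormal basis $\{\Omega(|x|_{p})\}\cup\{\Psi_{rbk}\}$ of $L^{2}(\mathbb{Z}_{p})$, use Lemma~\ref{Lemma_3} for the eigenvalues, and verify norm preservation by Parseval. Your treatment is actually more complete than the paper's two-line sketch, which invokes separation of variables and Parseval but does not explicitly address uniqueness, self-adjointness of $\boldsymbol{J}$, or the regularity issues you flag.
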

	
	\begin{proof}
		The expansion (\ref{Expansion_Psi}) follows from the fact that $\left\{
		\Omega\left(  \left\vert x\right\vert _{p}\right)  \right\}  \cup\left\{
		\Psi_{rbk}\left(  x\right)  \right\}  $ is an orthonormal basis of
		$L^{2}\left(  \mathbb{Z}_{p}\right)  $, by using the separation of variables
		technique. Now,%
		\[
		\left\Vert \Psi(\cdot,t)\right\Vert _{2}^{2}=\left\vert A_{0}\right\vert ^{2}+%
		{\displaystyle\sum\limits_{k\in\left\{  1,\ldots,p-1\right\}  }}
		\text{ \ }%
		{\displaystyle\sum\limits_{\substack{b\in\mathbb{Q}_{p}/\mathbb{Z}%
					_{p}\\bp^{-r}\in\mathbb{Z}_{p}}}}
		\text{ \ }%
		{\displaystyle\sum\limits_{r\leq0}}
		\text{ }\left\vert A_{_{rbk}}\right\vert ^{2}=\left\Vert \psi_{0}\right\Vert
		_{2}^{2}=1.
		\]
		
	\end{proof}
	\subsection{CTQMCs}
	
	We now use Theorem \ref{Theorem_0A} to construct a family of CTQMCs. We use
	all the notation introduced in Subsections \ref{Section-CTQMC-I},
	\ref{Section-CTQMC-II},
	
	\begin{proposition}
		\label{Poposition_4}For $l\geq1$, set $\mathbb{J}=G_{l}$ and
		\[
		\pi_{r,v}\left(  t\right)  =\left\vert \left\langle p^{\frac{l}{2}}%
		\Omega\left(  p^{l}\left\vert x-r\right\vert _{p}\right)  ,Z_{0}(x,\mathrm{i}t)\ast
		p^{\frac{l}{2}}\Omega\left(  p^{l}\left\vert x-v\right\vert _{p}\right)
		\right\rangle \right\vert ^{2},
		\]
		for $r,v\in G_{l}$. Then, $\left[  \pi_{r,v}\left(  t\right)  \right]
		_{r,v\in G_{l}}$ is the transition matrix of a CTQMC.
	\end{proposition}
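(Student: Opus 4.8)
The plan is to recognize the family $\{\pi_{r,v}(t)\}_{r,v\in G_l}$ as a particular instance of the construction in Subsection~\ref{Section-CTQMC-I} and then quote Theorem~\ref{Theorem_0A}. First I would set
\[
e_v(x):=p^{\frac{l}{2}}\Omega\left(p^{l}\left\vert x-v\right\vert_p\right),\qquad v\in G_l,
\]
and take the partition (\ref{partition_K}) to be $\mathbb{Z}_p=\bigsqcup_{v\in G_l}\mathcal{K}_v$ with $\mathcal{K}_v=v+p^{l}\mathbb{Z}_p$ (so $\mathcal{N}=\varnothing$ and $\mathbb{J}=G_l$ is finite). Since the balls $v+p^{l}\mathbb{Z}_p$ are pairwise disjoint and each has Haar measure $p^{-l}$, the $e_v$ are orthonormal in $L^2(\mathbb{Z}_p)$ with normalization constant $c_v=p^{l/2}$; hence $\mathcal{H}_{\mathbb{J}}=\mathit{Span}\{e_v;v\in G_l\}$ coincides with $\mathcal{D}_l(\mathbb{Z}_p)$, a finite-dimensional, hence closed, subspace of $L^2(\mathbb{Z}_p)$. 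This is hypothesis (\ref{H1}).

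Next I would verify hypothesis (\ref{H2}). For the Schrödinger equation (\ref{Schrodinger-Equation}) the Hamiltonian is $\boldsymbol{H}=-\boldsymbol{J}$, which is bounded and self-adjoint on $L^2(\mathbb{Z}_p)$, so $e^{-i\boldsymbol{H}t}=e^{it\boldsymbol{J}}$ is given on all of $L^2(\mathbb{Z}_p)$ by the norm-convergent series $\sum_{n\geq0}(it)^{n}\boldsymbol{J}^{n}/n!$. Because $\boldsymbol{J}$ maps the finite-dimensional space $\mathcal{D}_l(\mathbb{Z}_p)$ into itself (cf.\ Lemma~\ref{Lemma_5A}(iv), Lemma~\ref{Lemma_6A}), this subspace is invariant under every power $\boldsymbol{J}^{n}$ and therefore under $e^{it\boldsymbol{J}}$, whose restriction to $\mathcal{D}_l(\mathbb{Z}_p)$ equals the matrix exponential $e^{it\boldsymbol{J}^{(l)}}$. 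Thus $e^{-i\boldsymbol{H}t}\mathcal{H}_{\mathbb{J}}\subset\mathcal{H}_{\mathbb{J}}$, which is (\ref{H2}). Moreover $e^{-i\boldsymbol{H}t}e_v=Z_0(x,it)\ast p^{\frac{l}{2}}\Omega(p^{l}\left\vert x-v\right\vert_p)$, so the $\pi_{r,v}(t)$ of the statement is exactly $\left\vert\left\langle e_r,e^{-i\boldsymbol{H}t}e_v\right\rangle\right\vert^{2}$, i.e.\ the transition probability (\ref{TranProb}).

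With (\ref{H1}) and (\ref{H2}) established, Proposition~\ref{Proposition_1} gives $\sum_{r\in G_l}\pi_{r,v}(t)=1$ for all $t\geq0$, $v\in G_l$, and Theorem~\ref{Theorem_0A} then yields that $[\pi_{r,v}(t)]_{r,v\in G_l}$ is the transition matrix of a CTQMC with state space $G_l$, which is the claim. The one point I expect to require care, and would write out in detail, is the identification $e^{-i\boldsymbol{H}t}e_v=Z_0(x,it)\ast e_v$ on $\mathcal{D}_l(\mathbb{Z}_p)$: one must reconcile the distributional solution formula (\ref{Useful-Formula}) with the $L^2$-unitary evolution coming from the spectral expansion (\ref{Expansion_Psi}) of Theorem~\ref{Theorem_3}. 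This holds because both act as the Fourier multiplier $e^{-i(1-\widehat{J}(\left\vert\xi\right\vert_p))t}$, which is constant on every coset $\xi+\mathbb{Z}_p$; restricting to the finitely many frequencies occurring in $\widehat{e_v}$, it collapses to the single finite unitary matrix $e^{it\boldsymbol{J}^{(l)}}$, and this same diagonalization is what later makes the oscillatory-integral formula for $\pi_{r,v}(t)$ available.
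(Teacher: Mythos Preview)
Your proposal is correct and follows essentially the same route as the paper: identify $e_v=p^{l/2}\Omega(p^l|x-v|_p)$, check that $\mathcal{H}_{\mathbb{J}}=\mathcal{D}_l(\mathbb{Z}_p)$ satisfies (\ref{H1}) and (\ref{H2}), and invoke Theorem~\ref{Theorem_0A}. The only minor difference is in how (\ref{H2}) is verified: the paper appeals directly to Lemma~\ref{Lemma_5A}(iii) (extended to complex time) to see that $Z_0(\cdot,it)\ast e_v\in\mathcal{D}_l(\mathbb{Z}_p)$, whereas you argue via the norm-convergent power series for $e^{it\boldsymbol{J}}$ together with the $\boldsymbol{J}$-invariance of $\mathcal{D}_l(\mathbb{Z}_p)$ from Lemma~\ref{Lemma_6A}; your route is arguably cleaner since it does not require extending Lemma~\ref{Lemma_5A} beyond real $t$.
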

	
	\begin{proof}
		We use all the notation introduced in Subsection \ref{Section-CTQMC-I}. Set
        \[
		e_{v}=p^{\frac{l}{2}}\Omega\left(  p^{l}\left\vert x-v\right\vert
		_{p}\right),  \] for $v\in G_{l}$. Then $\left\{  p^{\frac{l}{2}}\Omega\left(
		p^{l}\left\vert x-v\right\vert _{p}\right)  \right\}  _{v\in G_{l}}$ is an
		orthonormal basis of the finite-dimensional Hilbert space $\mathcal{H}_{G_{l}%
		}=\mathcal{D}_{l}(\mathbb{Z}_{p})$ Furthermore,%
		\[
		\mathrm{e}^{\mathrm{i}t\boldsymbol{J}}p^{\frac{l}{2}}\Omega\left(  p^{l}\left\vert
		x-v\right\vert _{p}\right)  =Z_{0}(x,\mathrm{i}t)\ast p^{\frac{l}{2}}\Omega\left(
		p^{l}\left\vert x-v\right\vert _{p}\right)  \in\mathcal{D}_{l}(\mathbb{Z}%
		_{p}),
		\]
		cf. Lemma \ref{Lemma_5A}-(iii). This fact implies that $\mathrm{e}^{\mathrm{i}t\boldsymbol{J}%
		}\mathcal{D}_{l}(\mathbb{Z}_{p})\subset\mathcal{D}_{l}(\mathbb{Z}_{p})$. Now,
		the announced result follows from Theorem \ref{Theorem_0A}.
	\end{proof}
	
	\begin{remark}
		Using the wavefunctions $\Psi(x,t)$ and applying the techniques introduced in
		Subsection \ref{Section-CTQMC-II}, we can construct a very large class of
		CTQMCs. However, the explicit computation of the $p$-adic wavefunctions, even
		for simple models, it is very involved, see \cite{Zuniga-Mayes}. Then, for
		practical purposes, we need good discretizations (in the sense of numerical
		analysis) of the wavefunctions. Our next step is to discuss the connection
		between this work and \cite{Zuniga-QM-2}.
	\end{remark}
	
	For a positive integer $l$, we set
	\[
	\alpha_{r,v}\left(  t\right)  :=\left\langle p^{\frac{l}{2}}\Omega\left(
	p^{l}\left\vert x-r\right\vert _{p}\right)  ,Z_{0}(x,\mathrm{i}t)\ast p^{\frac{l}{2}%
	}\Omega\left(  p^{l}\left\vert x-v\right\vert _{p}\right)  \right\rangle ,
	\]
	for $r,v\in G_{l}$, and $t\geq0$. The $\alpha_{r,v}\left(  t\right)  $ are the
	(complex) transition amplitudes. Furthermore, $\pi_{r,v}\left(  t\right)
	=\left\vert \alpha_{r,v}\left(  t\right)  \right\vert ^{2}$.
	
	\begin{proposition}
		\label{Poposition_5}With the above notation,%
		\begin{equation}
			\alpha_{r,v}\left(  t\right)  =p^{-l}\left\{  \mathrm{e}^{\mathrm{i}t}%
			{\displaystyle\int\limits_{\left\vert \xi\right\vert _{p}\leq p^{l}}}
			\chi_{p}\left(  \left(  r-v\right)  \xi\right)  e^{-it\widehat{J}(\left\vert
				\xi\right\vert _{p})}d\xi\right\}  ,\label{Transition_amplitudes}%
		\end{equation}
		for $r,v\in G_{l}$, and $t\geq0$.
	\end{proposition}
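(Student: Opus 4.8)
The plan is to transfer everything to the Fourier side and apply the Plancherel identity. First I would compute the Fourier transform of the elementary building block $\phi_v(x):=p^{l/2}\Omega\left(p^{l}\left\vert x-v\right\vert_{p}\right)$. A direct integration, using translation invariance of the Haar measure together with the elementary identity $\int_{p^{l}\mathbb{Z}_{p}}\chi_{p}(\xi y)\,dy=p^{-l}\Omega\left(p^{-l}\left\vert \xi\right\vert_{p}\right)$, yields
\[
\widehat{\phi_{v}}(\xi)=p^{-l/2}\chi_{p}(\xi v)\,\Omega\left(p^{-l}\left\vert \xi\right\vert_{p}\right),
\]
so that $\widehat{\phi_{v}}$ is supported on the ball $\left\vert \xi\right\vert_{p}\leq p^{l}$ and has constant modulus $p^{-l/2}$ there. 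Note $\phi_{v}\in\mathcal{D}_{l}(\mathbb{Z}_{p})$ with $\left\Vert \phi_{v}\right\Vert_{2}=1$.

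Next I would use that $Z_{0}(\cdot,it)$ is a distribution supported in $\mathbb{Z}_{p}$ with $\mathcal{F}\left[Z_{0}(\cdot,it)\right](\xi)=e^{-i\left(1-\widehat{J}(\left\vert \xi\right\vert_{p})\right)t}$ (the identity $\mathcal{F}^{-1}_{\xi\to x}(e^{-i(1-\widehat{J}(\left\vert \xi\right\vert_{p}))t})=Z_{0}(x,it)$ recorded just before Lemma \ref{Lemma_3}), together with the convolution theorem from Appendix A: the convolution of a compactly supported distribution with a test function is again a test function, and $\mathcal{F}\left[Z_{0}(\cdot,it)\ast\phi_{v}\right]=\mathcal{F}\left[Z_{0}(\cdot,it)\right]\cdot\widehat{\phi_{v}}$. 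Hence
\[
\mathcal{F}\left[Z_{0}(\cdot,it)\ast\phi_{v}\right](\xi)=e^{-i\left(1-\widehat{J}(\left\vert \xi\right\vert_{p})\right)t}\,p^{-l/2}\chi_{p}(\xi v)\,\Omega\left(p^{-l}\left\vert \xi\right\vert_{p}\right),
\]
which lies in $\mathcal{D}(\mathbb{Q}_{p})\subset L^{2}(\mathbb{Q}_{p})$; in particular $Z_{0}(\cdot,it)\ast\phi_{v}\in\mathcal{D}_{l}(\mathbb{Z}_{p})$ by Lemma \ref{Lemma_5A}(iii).

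Since both $\phi_{r}$ and $Z_{0}(\cdot,it)\ast\phi_{v}$ are genuine $L^{2}$ functions, I would apply the Parseval identity $\left\langle f,g\right\rangle=\left\langle \widehat{f},\widehat{g}\right\rangle$, obtained by polarizing the unitarity $\left\Vert f\right\Vert_{2}=\left\Vert \mathcal{F}f\right\Vert_{2}$ recalled in Appendix A. Substituting the two Fourier transforms into $\alpha_{r,v}(t)=\left\langle \phi_{r},Z_{0}(\cdot,it)\ast\phi_{v}\right\rangle$, and using $\overline{\chi_{p}(\xi v)}=\chi_{p}(-\xi v)$, the fact that $\widehat{J}$ is real-valued (Lemma \ref{Lemma_1A}), so $\overline{e^{-i(1-\widehat{J})t}}=e^{i(1-\widehat{J})t}=e^{it}e^{-it\widehat{J}}$, and the idempotency $\Omega\left(p^{-l}\left\vert \xi\right\vert_{p}\right)^{2}=\Omega\left(p^{-l}\left\vert \xi\right\vert_{p}\right)=1_{\{\left\vert \xi\right\vert_{p}\leq p^{l}\}}$, the prefactors $p^{-l/2}\cdot p^{-l/2}=p^{-l}$ combine and the integral collapses to
\[
\alpha_{r,v}(t)=p^{-l}\left\{e^{it}\int_{\left\vert \xi\right\vert_{p}\leq p^{l}}\chi_{p}\left(\left(r-v\right)\xi\right)e^{-it\widehat{J}(\left\vert \xi\right\vert_{p})}d\xi\right\},
\]
which is the asserted formula.

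The steps are essentially routine; the only points requiring care are bookkeeping ones: keeping track of which argument of the sesquilinear pairing carries the complex conjugate (so that the character comes out as $\chi_{p}((r-v)\xi)$ rather than $\chi_{p}((v-r)\xi)$), and making sure the Plancherel identity is legitimately applicable despite $Z_{0}(\cdot,it)$ being a priori only a distribution. The latter is exactly what Lemma \ref{Lemma_5A}(iii) resolves, since it places $Z_{0}(\cdot,it)\ast\phi_{v}$ inside the finite-dimensional space $\mathcal{D}_{l}(\mathbb{Z}_{p})$, so all objects in sight are honest test functions and the convolution theorem on $\mathcal{D}^{\prime}(\mathbb{Q}_{p})$ applies without any limiting argument.
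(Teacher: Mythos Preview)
Your proof is correct and follows essentially the same route as the paper's: transfer the inner product to the Fourier side via Plancherel, use the convolution theorem to identify $\mathcal{F}[Z_{0}(\cdot,it)\ast\phi_{v}]$ as $e^{it(\widehat{J}-1)}\widehat{\phi_{v}}$, and simplify. The only cosmetic difference is that the paper cites Theorem~\ref{Theorem_3} rather than Lemma~\ref{Lemma_5A}(iii) to justify that $Z_{0}(\cdot,it)\ast\phi_{v}\in L^{2}$, and your write-up is more explicit about the conjugation bookkeeping; the underlying argument is identical.
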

	
	\begin{proof}
		By Theorem \ref{Theorem_3}, in Appendix D, 
		\[
		Z_{0}(x,\mathrm{i}t)\ast p^{\frac{l}{2}}\Omega\left(  p^{l}\left\vert x-v\right\vert
		_{p}\right)  \in L^{2}\left(  \mathbb{Z}_{p}\right)  \subset L^{2}\left(
		\mathbb{Q}_{p}\right).
		\]
		Using the fact that the Fourier transform preserves the inner product in  $L^{2}\left(
		\mathbb{Q}_{p}\right)  $, \ we have
		
		\begin{gather*}
			\left\langle p^{\frac{l}{2}}\Omega\left(  p^{l}\left\vert x-r\right\vert
			_{p}\right)  ,Z_{0}(x,\mathrm{i}t)\ast p^{\frac{l}{2}}\Omega\left(  p^{l}\left\vert
			x-v\right\vert _{p}\right)  \right\rangle \\
			=p^{-l}\left\langle \Omega\left(  p^{-l}\left\vert \xi\right\vert _{p}\right)
			\chi_{p}\left(  r\xi\right)  ,\mathrm{e}^{\mathrm{i}t\left(  \widehat{J}(\left\vert
				\xi\right\vert _{p})-1\right)  }\Omega\left(  p^{-l}\left\vert \xi\right\vert
			_{p}\right)  \chi_{p}\left(  v\xi\right)  \right\rangle \\
			=p^{-l}%
			{\displaystyle\int\limits_{\mathbb{Q}_{p}}}
			\Omega\left(  p^{-l}\left\vert \xi\right\vert _{p}\right)  \chi_{p}\left(
			r\xi\right)  \mathrm{e}^{-\mathrm{i}t\left(  \widehat{J}(\left\vert \xi\right\vert
				_{p})-1\right)  }\Omega\left(  p^{-l}\left\vert \xi\right\vert _{p}\right)
			\chi_{p}\left(  -v\xi\right)  d\xi\\
			=p^{-l}%
			{\displaystyle\int\limits_{\left\vert \xi\right\vert _{p}\leq p^{l}}}
			\chi_{p}\left(  \left(  r-v\right)  \xi\right)  \mathrm{e}^{-\mathrm{i}t\left(  \widehat
				{J}(\left\vert \xi\right\vert _{p})-1\right)  }d\xi.
		\end{gather*}

		We now summarize our results in the following theorem:
	\end{proof}
	
	\begin{theorem}
		\label{Theorem_5}With the notation introduced in Propositions
		\ref{Poposition_4}, \ref{Poposition_5}.%
		\[
		\left[  \pi_{r,v}\left(  t\right)  \right]  _{r,v\in G_{l}}=\left[
		p^{-2l}\left\vert \text{ }
		{\displaystyle\int\limits_{\left\vert \xi\right\vert _{p}\leq p^{l}}}
		\chi_{p}\left(  \left(  r-v\right)  \xi\right)  \mathrm{e}^{-\mathrm{i}t\widehat{J}(\left\vert
			\xi\right\vert _{p})}d\xi\right\vert ^{2}\right]  _{r,v\in G_{l}}%
		\]
		is the transition matrix of a CTQMC. 
	\end{theorem}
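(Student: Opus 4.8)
The plan is to deduce Theorem \ref{Theorem_5} directly by assembling Propositions \ref{Poposition_4} and \ref{Poposition_5}; no new analytic input is needed, so the argument should be short.

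First I would invoke Proposition \ref{Poposition_4}: taking $\mathbb{J}=G_l$ and $e_v = p^{l/2}\Omega(p^l|x-v|_p)$, the set $\{e_v\}_{v\in G_l}$ is an orthonormal basis of the finite-dimensional Hilbert space $\mathcal{H}_{G_l}=\mathcal{D}_l(\mathbb{Z}_p)$, hypothesis (\ref{H1}) holds trivially, and hypothesis (\ref{H2}) holds because $e^{it\boldsymbol{J}}$ preserves $\mathcal{D}_l(\mathbb{Z}_p)$ by Lemma \ref{Lemma_5A}-(iii). Applying Theorem \ref{Theorem_0A} (with $\boldsymbol{H}=-\boldsymbol{J}$) then gives that $[\pi_{r,v}(t)]_{r,v\in G_l}$ is the transition matrix of a CTQMC on $G_l$; in particular $\sum_{r\in G_l}\pi_{r,v}(t)=1$ for all $t\geq0$ and $\pi_{r,v}(t)\in[0,1]$ by the Cauchy--Schwarz estimate preceding Proposition \ref{Proposition_1}.

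Second, I would insert the explicit transition amplitude from Proposition \ref{Poposition_5},
\[
\alpha_{r,v}(t)=p^{-l}\left\{e^{it}\int_{|\xi|_p\leq p^l}\chi_p((r-v)\xi)\,e^{-it\widehat{J}(|\xi|_p)}\,d\xi\right\},
\]
into the identity $\pi_{r,v}(t)=|\alpha_{r,v}(t)|^2$ stated just before that proposition. Since $|e^{it}|=1$, the phase factor drops out upon taking the modulus, and $|p^{-l}|^2=p^{-2l}$, so
\[
\pi_{r,v}(t)=p^{-2l}\left|\int_{|\xi|_p\leq p^l}\chi_p((r-v)\xi)\,e^{-it\widehat{J}(|\xi|_p)}\,d\xi\right|^2,
\]
which is the entrywise formula claimed; passing from $r-v$ to $v-r$ is harmless because $\overline{\chi_p(z)}=\chi_p(-z)$ and the change of variables $\xi\mapsto-\xi$ preserves both $\{|\xi|_p\leq p^l\}$ and $\widehat{J}(|\xi|_p)$, so the modulus is unchanged.

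There is essentially no obstacle here: the substantive work — invariance of $\mathcal{D}_l(\mathbb{Z}_p)$ under the Schr\"{o}dinger propagator, and the Fourier-side computation of $\alpha_{r,v}(t)$ via Lemma \ref{Lemma_1A} and unitarity of the Fourier transform — has already been carried out in the quoted lemmas and propositions. The only point to watch is the bookkeeping of the phase $e^{it}$ and of the sign convention in $\chi_p((r-v)\xi)$ when matching the two displays, which the final remark above settles. Hence I expect the proof to be a two-line deduction citing Propositions \ref{Poposition_4} and \ref{Poposition_5}.
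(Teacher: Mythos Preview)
Your proposal is correct and matches the paper's approach: the paper states Theorem \ref{Theorem_5} immediately after Proposition \ref{Poposition_5} with the words ``We now summarize our results in the following theorem,'' giving no separate proof, so it is indeed just the combination of Propositions \ref{Poposition_4} and \ref{Poposition_5} that you describe. Your handling of the phase $e^{it}$ and the $(r-v)$ versus $(v-r)$ sign is a nice touch that makes the bookkeeping explicit.
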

	
	In particular, the $\lim_{t\rightarrow\infty}\pi_{r,v}\left(  t\right)  $ does
	not exist, and the announced CTQMCs do not admit a stationary distribution,
	nor do they have absorbing states.

	\section{Appendix E: Discretization of $p$-adic heat and Schr\"{o}dinger
		equations}
	
	\subsection{Some technical results}
	\begin{lemma}
		\label{Lemma_7}We use $J$ and $G_{l}=\mathbb{Z}_{p}/p^{l}\mathbb{Z}_{p}$ \ as
		before. Then
		\begin{align*}
			\boldsymbol{J}\Omega\left(  p^{l}\left\vert x-K\right\vert _{p}\right)   &
			=J\left(  x\right)  \ast\Omega\left(  p^{l}\left\vert x-K\right\vert
			_{p}\right)  =p^{-l}%
			{\displaystyle\sum\limits_{I\in G_{l}\smallsetminus\left\{  0\right\}  }}
			J\left(  \left\vert I-K\right\vert _{p}\right)  \Omega\left(  p^{l}\left\vert
			x-I\right\vert _{p}\right)  +\\
			&  \left(  \text{ }%
			{\displaystyle\int\limits_{p^{l}\mathbb{Z}_{p}}}
			J\left(  \left\vert y\right\vert _{p}\right)  dy\right)  \Omega\left(
			p^{l}\left\vert x-K\right\vert _{p}\right)  ,
		\end{align*}
		for $K\in G_{l}$.
	\end{lemma}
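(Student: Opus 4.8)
The plan is to prove the displayed identity by evaluating the convolution $J(|x|_p)\ast\Omega(p^l|x-K|_p)$ directly from the definition and then regrouping the answer according to the decomposition $\mathbb{Z}_p=\bigsqcup_{I\in G_l}(I+p^l\mathbb{Z}_p)$; this is the same kind of computation carried out in Lemma \ref{Lemma_5A}, only simpler, since here the distribution $Z_0(x,t)$ is replaced by the $L^1$-function $J(|x|_p)$.

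First I would note that, since $J(|x|_p)$ vanishes off $\mathbb{Z}_p$, for $x\in\mathbb{Z}_p$ one has
\[
\bigl(J(|\cdot|_p)\ast\Omega(p^l|\cdot-K|_p)\bigr)(x)=\int_{\mathbb{Z}_p}J(|y|_p)\,\Omega\bigl(p^l|x-y-K|_p\bigr)\,dy=\int_{x-K+p^l\mathbb{Z}_p}J(|y|_p)\,dy,
\]
using that $\Omega(p^l|x-y-K|_p)=1$ precisely when $y\in x-K+p^l\mathbb{Z}_p$, and that this ball is contained in $\mathbb{Z}_p$ because $x\in\mathbb{Z}_p$ and $K\in G_l\subset\mathbb{Z}_p$. (That $J$ may fail to be defined at the origin is immaterial, $\{0\}$ being a null set.)

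Next I would fix the unique $I\in G_l$ with $x\in I+p^l\mathbb{Z}_p$, so that $x-K+p^l\mathbb{Z}_p=(I-K)+p^l\mathbb{Z}_p$, and split into cases. If $I=K$, the domain is $p^l\mathbb{Z}_p$ and the integral is $\int_{p^l\mathbb{Z}_p}J(|y|_p)\,dy$. If $I\neq K$ in $G_l$, then $I-K\notin p^l\mathbb{Z}_p$, so $|I-K|_p>p^{-l}$ and the ball $(I-K)+p^l\mathbb{Z}_p$ avoids the origin; by the isosceles form of the ultrametric inequality $|y|_p=|I-K|_p$ for every $y$ in that ball, whence the integral equals $J(|I-K|_p)\,\mu_{\text{Haar}}(p^l\mathbb{Z}_p)=p^{-l}J(|I-K|_p)$.

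Finally I would record the value obtained on each coset by means of the characteristic function $\Omega(p^l|x-I|_p)$, which yields
\[
J(|x|_p)\ast\Omega(p^l|x-K|_p)=p^{-l}\!\!\sum_{\substack{I\in G_l\\I\neq K}}\!\!J(|I-K|_p)\,\Omega(p^l|x-I|_p)+\Bigl(\int_{p^l\mathbb{Z}_p}J(|y|_p)\,dy\Bigr)\Omega(p^l|x-K|_p),
\]
the asserted formula (for $K=0$ the index set $G_l\setminus\{K\}$ is $G_l\setminus\{0\}$). I do not expect a genuine obstacle: the only two non-bookkeeping points are that $|\cdot|_p$ is constant on a ball of $\mathbb{Q}_p$ not containing the origin and that $\mu_{\text{Haar}}(p^l\mathbb{Z}_p)=p^{-l}$, both immediate; the rest is organizing the partition of $\mathbb{Z}_p$ into balls of radius $p^{-l}$.
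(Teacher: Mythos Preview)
Your proof is correct and a little more direct than the paper's.  The paper first expands $J(|x|_p)$ along the partition $\mathbb{Z}_p=\bigsqcup_{I\in G_l}(I+p^l\mathbb{Z}_p)$, writing
\[
J(|x|_p)=\sum_{I\in G_l\setminus\{0\}}J(|I|_p)\,\Omega(p^l|x-I|_p)+J(|x|_p)\,\Omega(p^l|x|_p),
\]
then convolves term by term with $\Omega(p^l|x-K|_p)$ via the identity $\Omega(p^l|x-I|_p)\ast\Omega(p^l|x-K|_p)=p^{-l}\Omega(p^l|x-(I+K)|_p)$, treats the remaining piece $\{J(|x|_p)\Omega(p^l|x|_p)\}\ast\Omega(p^l|x-K|_p)$ by a case split, and finishes with the group reindexing $I\mapsto I+K$.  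You instead evaluate the convolution directly as $\int_{(x-K)+p^l\mathbb{Z}_p}J(|y|_p)\,dy$ and split according to the coset of $x$; this avoids both the auxiliary convolution identity and the explicit reindexing, at the cost of making the group structure of $G_l$ less visible.  Either way the key analytic input is the same: $|\cdot|_p$ is constant on a ball not containing the origin.

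One remark: your index set $G_l\setminus\{K\}$ is exactly what the paper's reindexing $I\mapsto I+K$ produces; the $G_l\setminus\{0\}$ printed in the statement is a slip (it is correct only for $K=0$), and you flagged this correctly.
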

	
	\begin{remark}
		Notice that%
		\[%
		{\displaystyle\int\limits_{\mathbb{Z}_{p}}}
		J\left(  \left\vert y\right\vert _{p}\right)  dy=%
		{\displaystyle\sum\limits_{I\in G_{l}}}
		\text{ \ }%
		{\displaystyle\int\limits_{I+p^{l}\mathbb{Z}_{p}}}
		J\left(  \left\vert y\right\vert _{p}\right)  dy=%
		{\displaystyle\sum\limits_{\substack{I\in G_{l}\\I\neq0}}}
		p^{-l}J\left(  \left\vert I\right\vert _{p}\right)  +%
		{\displaystyle\int\limits_{p^{l}\mathbb{Z}_{p}}}
		J\left(  \left\vert y\right\vert _{p}\right)  dy,
		\]
		i.e.,
		\[%
		{\displaystyle\int\limits_{p^{l}\mathbb{Z}_{p}}}
		J\left(  \left\vert y\right\vert _{p}\right)  dy=%
		{\displaystyle\int\limits_{\mathbb{Z}_{p}}}
		J\left(  \left\vert y\right\vert _{p}\right)  dy-%
		{\displaystyle\sum\limits_{\substack{I\in G_{l}\\I\neq0}}}
		p^{-l}J\left(  \left\vert I\right\vert _{p}\right)  .
		\]
		
	\end{remark}
	
	\begin{proof}
		By using that
		\[
		\Omega\left(  \left\vert x\right\vert _{p}\right)  =%
		{\displaystyle\sum\limits_{I\in G_{l}}}
		\Omega\left(  p^{l}\left\vert x-I\right\vert _{p}\right)  ,
		\]
		and $J\left(  \left\vert x\right\vert _{p}\right)  \Omega\left(
		p^{l}\left\vert x-I\right\vert _{p}\right)  =J\left(  \left\vert I\right\vert
		_{p}\right)  $, for $I\in G_{l}\smallsetminus\left\{  0\right\}  $, we have%
		\[
		J\left(  \left\vert x\right\vert _{p}\right)  =\Omega\left(  \left\vert
		x\right\vert _{p}\right)  J\left(  \left\vert x\right\vert _{p}\right)  =%
		{\displaystyle\sum\limits_{I\in G_{l}\smallsetminus\left\{  0\right\}  }}
		J\left(  \left\vert I\right\vert _{p}\right)  \Omega\left(  p^{l}\left\vert
		x-I\right\vert _{p}\right)  +J\left(  \left\vert x\right\vert _{p}\right)
		\Omega\left(  p^{l}\left\vert x\right\vert _{p}\right)  .
		\]
		From this formula, by using the identity,%
		\[
		\Omega\left(  p^{l}\left\vert x-I\right\vert _{p}\right)  \ast\Omega\left(
		p^{l}\left\vert x-K\right\vert _{p}\right)  =p^{-l}\Omega\left(
		p^{l}\left\vert x-\left(  I+K\right)  \right\vert _{p}\right)  ,
		\]
		we obtain%
		\begin{align*}
			J\left(  \left\vert x\right\vert _{p}\right)  \ast\Omega\left(  p^{l}%
			\left\vert x-K\right\vert _{p}\right)   &  =p^{-l}%
			{\displaystyle\sum\limits_{I\in G_{l}\smallsetminus\left\{  0\right\}  }}
			J\left(  \left\vert I\right\vert _{p}\right)  \Omega\left(  p^{l}\left\vert
			x-\left(  I+K\right)  \right\vert _{p}\right)  +\\
			&  \left\{  J\left(  \left\vert x\right\vert _{p}\right)  \Omega\left(
			p^{l}\left\vert x\right\vert _{p}\right)  \right\}  \ast\Omega\left(
			p^{l}\left\vert x-K\right\vert _{p}\right)  .
		\end{align*}

		Set%
		\begin{gather*}
			L\left(  x\right)  :=\left\{  J\left(  \left\vert x\right\vert _{p}\right)
			\Omega\left(  p^{l}\left\vert x\right\vert _{p}\right)  \right\}  \ast
			\Omega\left(  p^{l}\left\vert x-K\right\vert _{p}\right) \\
			=%
			{\displaystyle\int\limits_{\mathbb{Q}_{p}}}
			J\left(  \left\vert y\right\vert _{p}\right)  \Omega\left(  p^{l}\left\vert
			y\right\vert _{p}\right)  \Omega\left(  p^{l}\left\vert x-K-y\right\vert
			_{p}\right)  dy\\
			=%
			{\displaystyle\int\limits_{x-K+p^{l}\mathbb{Z}_{p}}}
			J\left(  \left\vert y\right\vert _{p}\right)  \Omega\left(  p^{l}\left\vert
			y\right\vert _{p}\right)  dy.
		\end{gather*}
		The calculation of the last integral involves two cases.
		
		\textbf{Case 1}: $x\in K+p^{l}\mathbb{Z}_{p}$
		
		In this case \ $x-K\in p^{l}\mathbb{Z}_{p}$, and
		\[
		L\left(  x\right)  =%
		{\displaystyle\int\limits_{p^{l}\mathbb{Z}_{p}}}
		J\left(  \left\vert y\right\vert _{p}\right)  \Omega\left(  p^{l}\left\vert
		y\right\vert _{p}\right)  dy=\left(  \text{ }%
		{\displaystyle\int\limits_{p^{l}\mathbb{Z}_{p}}}
		J\left(  \left\vert y\right\vert _{p}\right)  dy\right)  \Omega\left(
		p^{l}\left\vert x-K\right\vert _{p}\right)  .
		\]

		\textbf{Case 2}: $x\notin K+p^{l}\mathbb{Z}_{p}$
		
		In this case, $\left(  x-K+p^{l}\mathbb{Z}_{p}\right)  \cap p^{l}%
		\mathbb{Z}_{p}=\varnothing$, and
		\[
		L\left(  x\right)  =%
		{\displaystyle\int\limits_{x-K+p^{l}\mathbb{Z}_{p}}}
		J\left(  \left\vert y\right\vert _{p}\right)  \Omega\left(  p^{l}\left\vert
		y\right\vert _{p}\right)  dy=%
		{\displaystyle\int\limits_{\left(  x-K+p^{l}\mathbb{Z}_{p}\right)  \cap
				p^{l}\mathbb{Z}_{p}}}
		J\left(  \left\vert y\right\vert _{p}\right)  dy=0.
		\]

		In conclusion,
		\begin{align*}
			J\left(  \left\vert x\right\vert _{p}\right)  \ast\Omega\left(  p^{l}%
			\left\vert x-K\right\vert _{p}\right)   &  =p^{-l}%
			{\displaystyle\sum\limits_{I\in G_{l}\smallsetminus\left\{  0\right\}  }}
			J\left(  \left\vert I\right\vert _{p}\right)  \Omega\left(  p^{l}\left\vert
			x-\left(  I+K\right)  \right\vert _{p}\right)  +\\
			&  \left(
			{\displaystyle\int\limits_{p^{l}\mathbb{Z}_{p}}}
			J\left(  \left\vert y\right\vert _{p}\right)  dy\right)  \Omega\left(
			p^{l}\left\vert x-K\right\vert _{p}\right)  ,
		\end{align*}
		and since $G_{l}$ is an additive group,%
		\begin{align*}
			J\left(  x\right)  \ast\Omega\left(  p^{l}\left\vert x-K\right\vert
			_{p}\right)   &  =p^{-l}%
			{\displaystyle\sum\limits_{I\in G_{l}\smallsetminus\left\{  0\right\}  }}
			J\left(  \left\vert I-K\right\vert _{p}\right)  \Omega\left(  p^{l}\left\vert
			x-I\right\vert _{p}\right)  +\\
			&  \left(
			{\displaystyle\int\limits_{p^{l}\mathbb{Z}_{p}}}
			J\left(  \left\vert y\right\vert _{p}\right)  dy\right)  \Omega\left(
			p^{l}\left\vert x-K\right\vert _{p}\right)  .
		\end{align*}
		
	\end{proof}
	
	\begin{lemma}
		\label{Lemma_6A}The mapping $\boldsymbol{J}:\mathcal{D}_{l}\left(
		\mathbb{Z}_{p}\right)  \rightarrow\mathcal{D}_{l}\left(  \mathbb{Z}%
		_{p}\right)  $ is a well-defined operator. By identifying $\varphi
		\in\mathcal{D}_{l}\left(  \mathbb{Z}_{p}\right)  $ with column vector $\left[
		\varphi\left(  I\right)  \right]  _{I\in G_{l}}$, the restriction of the
		operator $\boldsymbol{J}$\ to $\mathcal{D}_{l}\left(  \mathbb{Z}_{p}\right)
		$\ is given by the matrix $\boldsymbol{J}^{\left(  l\right)  }=\left[
		J_{I,K}^{(l)}\right]  _{I,K\in G_{l}}$, where
		\begin{equation}
			J_{I,K}^{(l)}=\left\{
			\begin{array}
				[c]{lll}%
				p^{-l}J\left(  \left\vert I-K\right\vert _{p}\right)   & \text{if} & I\neq K\\
				&  & \\
				\left( \text{   }
				{\displaystyle\int\limits_{p^{l}\mathbb{Z}_{p}}}
				J\left(  \left\vert y\right\vert _{p}\right)  dy\right)  -1 & \text{if} & I=K.
			\end{array}
			\right.  \label{Matrix_J_l}%
		\end{equation}

	\end{lemma}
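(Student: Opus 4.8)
The plan is to deduce the statement directly from Lemma \ref{Lemma_7}, which already expresses $J\ast\Omega\left(p^{l}\left\vert x-K\right\vert_{p}\right)$ as an explicit finite linear combination of the spanning functions $\left\{\Omega\left(p^{l}\left\vert x-I\right\vert_{p}\right)\right\}_{I\in G_{l}}$ of $\mathcal{D}_{l}(\mathbb{Z}_{p})$. First I would record the elementary identification: every $\varphi\in\mathcal{D}_{l}(\mathbb{Z}_{p})$ is constant on each ball $I+p^{l}\mathbb{Z}_{p}$, so $\varphi(I)$ is well defined and $\varphi=\sum_{I\in G_{l}}\varphi(I)\,\Omega\left(p^{l}\left\vert x-I\right\vert_{p}\right)$; under $\varphi\leftrightarrow\left[\varphi(I)\right]_{I\in G_{l}}$ the function $\Omega\left(p^{l}\left\vert x-K\right\vert_{p}\right)$ becomes the $K$-th standard basis vector of $\mathbb{C}^{p^{l}}$, hence for a linear operator preserving $\mathcal{D}_{l}(\mathbb{Z}_{p})$ the $K$-th column of its matrix is the coefficient vector of the image of $\Omega\left(p^{l}\left\vert x-K\right\vert_{p}\right)$.

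Next I would check invariance of $\mathcal{D}_{l}(\mathbb{Z}_{p})$ under $\boldsymbol{J}$. Since $\boldsymbol{J}$ is already defined on $\mathcal{D}(\mathbb{Z}_{p})\supset\mathcal{D}_{l}(\mathbb{Z}_{p})$ and is linear, it suffices to evaluate it on the spanning functions; by Lemma \ref{Lemma_7} the convolution $J\ast\Omega\left(p^{l}\left\vert x-K\right\vert_{p}\right)$ lies in $\mathcal{D}_{l}(\mathbb{Z}_{p})$, and subtracting $\Omega\left(p^{l}\left\vert x-K\right\vert_{p}\right)\in\mathcal{D}_{l}(\mathbb{Z}_{p})$ gives $\boldsymbol{J}\Omega\left(p^{l}\left\vert x-K\right\vert_{p}\right)\in\mathcal{D}_{l}(\mathbb{Z}_{p})$. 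Thus $\boldsymbol{J}\colon\mathcal{D}_{l}(\mathbb{Z}_{p})\to\mathcal{D}_{l}(\mathbb{Z}_{p})$ is a well-defined operator, automatically linear and bounded since $\mathcal{D}_{l}(\mathbb{Z}_{p})$ is finite-dimensional.

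Finally I would extract the matrix. Subtracting $\Omega\left(p^{l}\left\vert x-K\right\vert_{p}\right)$ from the formula of Lemma \ref{Lemma_7} and relabelling the summation index ($I\mapsto I+K$, using that $G_{l}$ is a group) yields
\[
\boldsymbol{J}\Omega\left(p^{l}\left\vert x-K\right\vert_{p}\right)=p^{-l}\sum_{\substack{I\in G_{l}\\I\neq K}}J\left(\left\vert I-K\right\vert_{p}\right)\Omega\left(p^{l}\left\vert x-I\right\vert_{p}\right)+\left(\int_{p^{l}\mathbb{Z}_{p}}J\left(\left\vert y\right\vert_{p}\right)dy-1\right)\Omega\left(p^{l}\left\vert x-K\right\vert_{p}\right),
\]
so the $(I,K)$-entry of $\boldsymbol{J}^{(l)}$ is $p^{-l}J\left(\left\vert I-K\right\vert_{p}\right)$ for $I\neq K$ and $\int_{p^{l}\mathbb{Z}_{p}}J\left(\left\vert y\right\vert_{p}\right)dy-1$ for $I=K$, which is precisely (\ref{Matrix_J_l}). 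I would close with three quick remarks: the entries are finite because $J\in L^{1}(\mathbb{Z}_{p})$ (as $\left\Vert J\right\Vert_{1}=1$); by the Remark following Lemma \ref{Lemma_7} the diagonal entry can be rewritten as $-p^{-l}\sum_{\substack{I\in G_{l}\\I\neq0}}J\left(\left\vert I\right\vert_{p}\right)$, the form appearing in the numerical simulations; and $\boldsymbol{J}^{(l)}$ is symmetric because $\left\vert I-K\right\vert_{p}=\left\vert K-I\right\vert_{p}$.

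There is no real obstacle here; the single point requiring attention is the bookkeeping in the relabelling step, where one must confirm that the off-diagonal coefficient is exactly $p^{-l}J\left(\left\vert I-K\right\vert_{p}\right)$ and, crucially, that the index $I=K$ is absent from the sum so that $J$ is never evaluated at $0$. This last point is important because $J$ may be singular at the origin — as happens for the Bessel potentials $J_{\alpha}$ with $\alpha\in(0,1]$ — and the diagonal term $\int_{p^{l}\mathbb{Z}_{p}}J\left(\left\vert y\right\vert_{p}\right)dy-1$ absorbs that contribution without difficulty.
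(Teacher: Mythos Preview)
Your proposal is correct and follows exactly the route the paper intends: the paper places Lemma~\ref{Lemma_7} immediately before Lemma~\ref{Lemma_6A} and gives no separate proof of the latter, treating it as a direct consequence of the convolution formula just established. Your write-up makes this deduction explicit and even catches the correct relabelling $I\mapsto I+K$ (so that the off-diagonal sum runs over $I\neq K$ rather than $I\neq 0$, as the final displayed formula of Lemma~\ref{Lemma_7} somewhat misleadingly suggests), and your closing remarks on symmetry and on the singularity of $J$ at the origin are apt.
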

	
	\begin{remark}
		\label{Nota_approximation}We now consider $\boldsymbol{J}:L^{1}\left(
		\mathbb{Z}_{p}\right)  \rightarrow L^{1}\left(  \mathbb{Z}_{p}\right)  $, then
		$\boldsymbol{J}^{\left(  l\right)  }=\boldsymbol{J}\mid_{\mathcal{D}%
			_{l}\left(  \mathbb{Z}_{p}\right)  }$. For each $l\in\mathbb{N}$, there exists
		a linear bounded operator
		\[
		\boldsymbol{P}_{l}:L^{1}\left(  \mathbb{Z}_{p}\right)  \rightarrow
		\mathcal{D}_{l}\left(  \mathbb{Z}_{p}\right)  ,
		\]
		satisfying $\boldsymbol{P}_{l}=\boldsymbol{P}_{l}^{2}$. For $f\in L^{1}\left(
		\mathbb{Z}_{p}\right)  $, $\varphi_{l}=\boldsymbol{P}_{l}f$ is a good
		approximation of $f$ in $\mathcal{D}_{l}\left(  \mathbb{Z}_{p}\right)  $,
		i.e.,
		\[
		\left\Vert \varphi_{l}-f\right\Vert _{1}\rightarrow0\text{ as }l\rightarrow
		\infty.
		\]
		Then $\boldsymbol{\boldsymbol{\boldsymbol{P}_{l}}J=J}^{\left(  l\right)  }$,
		and
		\[
		\left\Vert \boldsymbol{\boldsymbol{P}_{l}J}f-\boldsymbol{J}f\right\Vert
		_{1}=\left\Vert \boldsymbol{J}\varphi_{l}-\boldsymbol{J}f\right\Vert _{1}%
		\leq\left\Vert \boldsymbol{J}\right\Vert \left\Vert \varphi_{l}-f\right\Vert
		_{1}\rightarrow0\text{ as }l\rightarrow\infty.
		\]
		Which means, that $\boldsymbol{J}^{\left(  l\right)  }\rightarrow
		\boldsymbol{J}$, as $l\rightarrow\infty$, in the operator norm. See
		\cite{Zuniga-QM-2} for an in-depth discussion.
	\end{remark}
	
	\begin{remark}
		Let $(\mathcal{G},V,E)$ be a simple graph; then, it does not contain any loops
		or multiple edges between the same pair of vertices, and all the edges are
		undirected. We take $V=G_{l}^{0}\subset G_{l}$ for some $l\in\mathbb{N}$
		fixed. This means that we identify each vertex of $\mathcal{G}$ with a
		$p$-adic integer from $G_{l}$. Let $A=\left[  A_{I,K}\right]  _{I,K\in
			G_{l}^{0}}$ be the adjacency matrix of $\mathcal{G}$. Notice that $A_{I,I}=0$
		since $\mathcal{G}$ has not loops, and $A_{I,K}=A_{K,I}$, for $I,K\in
		G_{l}^{0}$. For convenience's sake, we assume that $A_{I,K}=0$ if $I\notin
		G_{l}^{0}$ or $K\notin G_{l}^{0}$. Now, we set
		\[
		J_{I,K}^{(l)}=p^{-l}A_{I,K}\text{ if }I\neq K\text{, with }I,K\in G_{l}^{0},
		\]
		and
		\[
		J_{I,K}^{(l)}=0\text{ if }I\notin G_{l}^{0}\text{ or }K\notin G_{l}^{0}.
		\]
		We assume that $1=\int_{\mathbb{Z}_{p}}J\left(  \left\vert y\right\vert
		_{p}\right)  dy$, then%
		\begin{equation}
			1=%
			{\displaystyle\int\limits_{p^{l}\mathbb{Z}_{p}}}
			J\left(  \left\vert y\right\vert _{p}\right)  dy+%
			{\displaystyle\sum\limits_{\substack{I\in G_{l}\\I\neq0}}}
			\text{ \ }%
			{\displaystyle\int\limits_{I+p^{l}\mathbb{Z}_{p}}}
			J\left(  \left\vert y\right\vert _{p}\right)  dy=%
			{\displaystyle\int\limits_{p^{l}\mathbb{Z}_{p}}}
			J\left(  \left\vert y\right\vert _{p}\right)  dy+p^{-l}%
			{\displaystyle\sum\limits_{\substack{I\in G_{l}\\I\neq0}}}
			\text{ \ }J\left(  \left\vert I\right\vert _{p}\right)  .\label{Eq_10}%
		\end{equation}
		Now, for any $K\in G_{l}$, the mapping%
		\[%
		\begin{array}
			[c]{ccc}%
			G_{l} & \rightarrow & G_{l}\\
			I & \rightarrow & I-K
		\end{array}
		\]
		is a group isomorphism. Then%
		\begin{equation}
			p^{-l}%
			{\displaystyle\sum\limits_{\substack{I\in G_{l}\\I\neq0}}}
			\text{ \ }J\left(  \left\vert I\right\vert _{p}\right)  =p^{-l}%
			{\displaystyle\sum\limits_{\substack{I\in G_{l}\\I\neq K}}}
			\text{ \ }J\left(  \left\vert I-K\right\vert _{p}\right)  .\label{Eq_11}%
		\end{equation}

		Now from (\ref{Eq_10})-(\ref{Eq_11}), taking $p^{-l}J\left(  \left\vert
		I-K\right\vert _{p}\right)  =p^{-l}A_{I,K}$, and using that
		\[
		val(K)=\sum_{\substack{I\in G_{l}^{0}\\I\neq K}}A_{I,K},
		\]
		where $val(K)$\ denotes the valence of the vertex $K$, we have
		\begin{align*}%
			{\displaystyle\int\limits_{p^{l}\mathbb{Z}_{p}}}
			J\left(  \left\vert y\right\vert _{p}\right)  dy &  =1-p^{-l}%
			{\displaystyle\sum\limits_{\substack{I\in G_{l}\\I\neq K}}}
			\text{ \ }J\left(  \left\vert I-K\right\vert _{p}\right)  =1-p^{-l}%
			{\displaystyle\sum\limits_{\substack{I\in G_{l}^{0}\\I\neq K}}}
			\text{ }A_{I,K}\\
			&  =1-p^{-l}val(K).
		\end{align*}
		In conclusion,%
		\begin{equation}
			J_{I,K}^{(l)}=\left\{
			\begin{array}
				[c]{lll}%
				p^{-l}A_{I,K} & \text{if} & I\neq K\\
				&  & \\
				-p^{-l}val(K) & \text{if} & I=K,
			\end{array}
			\right.  \label{Matrix_CTQW}%
		\end{equation}
		for $I,K\in G_{l}^{0}$. A multiple of this matrix is used in the construction
		of the classical CTQWs on graphs, see, e.g. \cite{Childs et al},
		\cite{Venegas-Andraca}.
	\end{remark}
	
	\subsection{Discretizations of differential equations}
	
	We now discuss the discretization of (\ref{Eq_Cauchy_problem_2}). Since
	$u\left(  \cdot,t\right)  \in\mathcal{C}(\mathbb{Z}_{p})$ for $t\geq0$, and
	$\mathcal{D}\left(  \mathbb{Z}_{p}\right)  $ is dense in $\mathcal{C}%
	(\mathbb{Z}_{p})$, then for $l$ sufficiently large, the function%
	\[
	u^{\left(  l\right)  }\left(  x,t\right)  =%
	{\displaystyle\sum\limits_{I\in G_{l}}}
	u_{I}^{\left(  l\right)  }\left(  t\right)  \Omega\left(  p^{l}\left\vert
	x-I\right\vert _{2}\right)  \text{, with }u_{I}^{\left(  l\right)  }\left(
	t\right)  \in\mathcal{C}^{1}\left[  0,\infty\right)
	\]
	is a good approximation in the norm $\left\Vert \cdot\right\Vert _{\infty}$
	for each fixed $t\geq0$. By identifying the function $u^{\left(  l\right)
	}\left(  x,t\right)  $ with the column vector $\left[  u_{I}^{\left(
		l\right)  }\right]  $, we have the following discretization of
	(\ref{Eq_Cauchy_problem_2}):%
	\begin{equation}
		\left\{
		\begin{array}
			[c]{l}%
			\left[  u_{I}^{\left(  l\right)  }\left(  t\right)  \right]  \in
			\mathbb{R}^{p^{l}}\text{, }t\geq0;\text{ }u_{I}^{\left(  l\right)  }\left(
			t\right)  \in C^{1}(\mathbb{R}_{+})\text{, }I\in G_{l}\\
			\\
			\frac{\partial}{\partial t}\left[  u_{I}^{\left(  l\right)  }\left(  t\right)
			\right]  =\boldsymbol{J}^{\left(  l\right)  }\left[  u_{I}^{\left(  l\right)
			}\left(  t\right)  \right]  \text{, }I\in G_{l},t\geq0\\
			\\
			\left[  u_{I}^{\left(  l\right)  }\left(  0\right)  \right]  =\left[
			u_{I}^{\left(  \text{init}\right)  }\right]  \in\mathbb{R}^{p^{l}}.
		\end{array}
		\right.  \label{Discrte-Heat-Eq}%
	\end{equation}
	In similar way, the discretization of (\ref{Cauchy_Problem_3}) is given by%
	\begin{equation}
		\left\{
		\begin{array}
			[c]{l}%
			\left[  \Psi_{I}^{\left(  l\right)  }\left(  t\right)  \right]  \in
			\mathbb{C}^{p^{l}}\text{, }t\geq0;\text{ }\Psi_{I}^{\left(  l\right)  }\left(
			t\right)  \in C^{1}(\mathbb{R}_{+})\text{, }I\in G_{l}\\
			\\
			\mathrm{i}\frac{\partial}{\partial t}\left[  \Psi_{I}^{\left(  l\right)  }\left(
			t\right)  \right]  =-\boldsymbol{J}^{\left(  l\right)  }\left[  \Psi
			_{I}^{\left(  l\right)  }\left(  t\right)  \right]  \text{, }I\in G_{l}%
			,t\geq0\\
			\\
			\left[  \Psi_{I}^{\left(  l\right)  }\left(  0\right)  \right]  =\left[
			\Psi_{I}^{\left(  \text{init}\right)  }\right]  \in\mathbb{C}^{p^{l}},
		\end{array}
		\right.  \label{Discrete-Schr-Eq}%
	\end{equation}
	see \cite{Zuniga-QM-2}, and the references therein, for a further discussion.
	
	The construction of CTQWs on graphs introduced in \cite{Farhi-Gutman}, see,
	also \cite{Childs et al}, \cite{Venegas-Andraca} is based on the equations
	(\ref{Discrte-Heat-Eq})-(\ref{Discrete-Schr-Eq}), when $\boldsymbol{J}%
	^{\left(  l\right)  }$ is as in (\ref{Matrix_CTQW}).
	
	\subsection{Quantum networks}
	
	From now on, $\mathcal{H}_{l}$ denotes the Hilbert space $\mathbb{C}^{p^{l}}$,
	with inner product $\left\langle \cdot,\cdot\right\rangle _{l}$, and canonical
	basis as $\left\{  e_{I}\right\}  _{I\in G_{l}}$. Since $\boldsymbol{J}%
	^{\left(  l\right)  }$ is a Hermitian matrix so $\exp(it\boldsymbol{J}%
	^{\left(  l\right)  })$ is unitary matrix. We define the transition
	probability $\pi_{I,J}\left(  t\right)  $ from $J$ to $I$ as
	\begin{equation}
		\pi_{I,J}\left(  t\right)  =\left\vert \left\langle e_{I}\right\vert
		\mathrm{e}^{\mathrm{i}t\boldsymbol{J}^{\left(  l\right)  }}\left\vert e_{J}\right\rangle
		_{l}\right\vert ^{2}\text{, for }J,I\in G_{l}. \label{Prob_Transition_I}%
	\end{equation}
	Then%
	\[%
	{\displaystyle\sum\limits_{I\in G_{l}}}
	\pi_{I,J}\left(  t\right)  =%
	{\displaystyle\sum\limits_{I\in G_{l}}}
	\left\vert \left\langle e_{I}\right\vert \mathrm{e}^{\mathrm{i}t\boldsymbol{J}^{\left(
			l\right)  }}\left\vert e_{J}\right\rangle \right\vert ^{2}=1,
	\]
	see \cite{Zuniga-QM-2}, for a further details. We call to the continuous-time
	Markov chain on $G_{l}$ determined by the transition probabilities $\left[
	\pi_{I,J}\left(  t\right)  \right]  _{I,J\in G_{l}}$, the quantum network
	associated with the discrete $p$-adic Schr\"{o}dinger equation
	(\ref{Cauchy_Problem_3}). When the matrix $\boldsymbol{J}^{\left(  l\right)
	}$ has the form (\ref{Matrix_CTQW}), we recover the CTQWs introduced in
	\cite{Farhi-Gutman}, see, also \cite{Childs et al}, \cite{Venegas-Andraca}.
	\subsection{Additional remarks}
	We now show that the CTQMC defined in (\ref{Prob_Transition_I}) agrees with
	the one given in Theorem \ref{Theorem_5}. First, we notice that $\left(
	\mathcal{D}_{l}\left(  \mathbb{Z}_{p}\right)  ,\left\langle \cdot,\cdot\right\rangle \right)  $ is isometric to $\left(  \mathcal{H}%
	_{l},\left\langle \cdot,\cdot\right\rangle _{l}\right)  $. The isometry is
	defined by%
	\[
	p^{\frac{l}{2}}\Omega\left(  p^{l}\left\vert x-r\right\vert _{p}\right)
	\rightarrow e_{r}\text{, for }r\in G_{l}.
	\]
	Using the results give in Remark \ref{Nota_approximation}, and the fact that
	$L^{2}(\mathbb{Z}_{p})\subset L^{1}(\mathbb{Z}_{p})$, \ there is a projection%
	\[
	\boldsymbol{P}_{l}:L^{2}\left(  \mathbb{Z}_{p}\right)  \rightarrow
	\mathcal{D}_{l}\left(  \mathbb{Z}_{p}\right)  ,
	\]
	satisfying $\boldsymbol{P}_{l}=\boldsymbol{P}_{l}^{2}$. For $f\in L^{2}\left(
	\mathbb{Z}_{p}\right)  $, $\varphi_{l}=\boldsymbol{P}_{l}f$ is a good
	approximation of $f$ in $\mathcal{D}_{l}\left(  \mathbb{Z}_{p}\right)  $. The
	CTQMCs constructed in Theorem \ref{Theorem_5} use the semigroup%
	\[
	\boldsymbol{P}_{l}\mathrm{e}^{\mathrm{i}t\boldsymbol{J}}:\mathcal{D}_{l}\left(  \mathbb{Z}%
	_{p}\right)  \rightarrow\mathcal{D}_{l}\left(  \mathbb{Z}_{p}\right)  ,
	\]
	while the ones introduced in this section use the semigroup%
	\[
	\mathrm{e}^{\mathrm{i}t\boldsymbol{J}^{\left(  l\right)  }}:\mathcal{H}_{l}\rightarrow
	\mathcal{H}_{l}.
	\]
	After identifying $\mathcal{H}_{l}$ and $\mathcal{D}_{l}\left(  \mathbb{Z}%
	_{p}\right)  $, and using the fact that $\boldsymbol{J}^{\left(  l\right)
	}=\boldsymbol{P}_{l}\boldsymbol{J}$ $:\mathcal{D}_{l}\left(  \mathbb{Z}%
	_{p}\right)  \rightarrow\mathcal{D}_{l}\left(  \mathbb{Z}_{p}\right)  $, we
	have%
	\[
	\boldsymbol{P}_{l}\mathrm{e}^{\mathrm{i}t\boldsymbol{J}}=\mathrm{e}^{\mathrm{i}t\boldsymbol{P}_{l}\boldsymbol{J}%
	}=\mathrm{e}^{\mathrm{i}t\boldsymbol{J}^{\left(  l\right)  }}.
	\]
	This last equality follows from $\boldsymbol{P}_{l}\boldsymbol{J}^{k}=\left(
	\boldsymbol{P}_{l}\boldsymbol{J}\right)  ^{k}$, $k\in\mathbb{N}$, which in
	turn is a consequence of $\boldsymbol{P}_{l}=\boldsymbol{P}_{l}^{2}$, by using
	that%
	\[
	\mathrm{e}^{\mathrm{i}t\boldsymbol{P}_{l}\boldsymbol{J}}=%
	{\displaystyle\sum\limits_{k=0}^{\infty}}
	\frac{\left(  it\right)  ^{k}}{k!}\left(  \boldsymbol{P}_{l}\boldsymbol{J}%
	\right)  ^{k}=%
	{\displaystyle\sum\limits_{k=0}^{\infty}}
	\frac{\left(  it\right)  ^{k}}{k!}\boldsymbol{P}_{l}\boldsymbol{J}%
	^{k}=\boldsymbol{P}_{l}%
	{\displaystyle\sum\limits_{k=0}^{\infty}}
	\frac{\left(  it\right)  ^{k}}{k!}\boldsymbol{J}^{k}=\boldsymbol{P}%
	_{l}\mathrm{e}^{\mathrm{i}t\boldsymbol{J}}.
	\]
	In the last calculation, we used the fact that all the operators involved are bounded.
	
	\begin{remark}
		\label{Nota_final}The above reasoning also shows that in the case%
		\[
		\boldsymbol{P}_{l}:L^{1}\left(  \mathbb{Z}_{p}\right)  \rightarrow
		\mathcal{D}_{l}\left(  \mathbb{Z}_{p}\right)  ,
		\]
		with $\boldsymbol{P}_{l}=\boldsymbol{P}_{l}^{2}$, we have%
		\[
		\boldsymbol{P}_{l}\mathrm{e}^{t\boldsymbol{J}}=\mathrm{e}^{t\boldsymbol{P}_{l}\boldsymbol{J}%
		}=\mathrm{e}^{t\boldsymbol{J}^{\left(  l\right)  }}:\mathcal{D}_{l}\left(
		\mathbb{Z}_{p}\right)  \rightarrow\mathcal{D}_{l}\left(  \mathbb{Z}%
		_{p}\right)  .
		\]
		This implies that in the case $\mathbb{J}=G_{l}$, and $\mathcal{K}_{r}%
		=B_{-l}(r)=r+p^{l}\mathbb{Z}_{p}$,%
		\begin{align*}
			p_{r,v}(t)  & =p^{l}%
			{\displaystyle\int\limits_{v+p^{l}\mathbb{Z}_{p}}}
			Z_{0}\left(  x,t\right)  \ast\Omega\left(  p^{l}\left\vert x-r\right\vert
			_{p}\right)  dx=p^{l}%
			{\displaystyle\int\limits_{v+p^{l}\mathbb{Z}_{p}}}
			\mathrm{e}^{t\boldsymbol{J}^{\left(  l\right)  }}\Omega\left(  p^{l}\left\vert
			x-r\right\vert _{p}\right)  dx\\
			& =%
			{\displaystyle\int\limits_{\mathbb{Z}_{p}}}
			p^{\frac{l}{2}}\Omega\left(  p^{l}\left\vert x-v\right\vert _{p}\right)
			\mathrm{e}^{t\boldsymbol{J}^{\left(  l\right)  }}p^{\frac{l}{2}}\Omega\left(
			p^{l}\left\vert x-r\right\vert _{p}\right)  dx,
		\end{align*}
		for $r,v\in G_{l}$. 
	\end{remark}

\section{Discussion}

By $p$-adic QM, we mean quantum mechanics, in the sense of Dirac-von
Neumann, with quantum states from a complex Hilbert space of the form $%
L^{2}(B)$, where $B$ is an open compact subset\ of $\mathbb{Q}_{p}^{N}$. In
this framework, the time is a real variable, but the positions are vectors
from $\mathbb{Q}_{p}^{N}$. The field of $p$-adic number $\mathbb{Q}_{p}$ is
a paramount prototype of a \textquotedblleft discrete
space:\textquotedblright\ any continuous trajectory from $\mathbb{R}$ in to $%
\mathbb{Q}_{p}$ is constant. The motion in $\mathbb{Q}_{p}$ consists of a
sequence of jumps. This assertion is also valid in any subset $B\subset 
\mathbb{Q}_{p}^{N}$; consequently, there are world lines in $B$, and so $p$%
-adic QM\ is not compatible with the theory of relativity.

$p$-Adic QM is a model of the Dirac-von Neumann formalism on a discrete
space. Here, it is relevant to mention that there are several different
types of $p$-adic QM. In some models, time is a $p$-adic variable; this type
of model is outside the scope of the Dirac-von Neumann formalism. $p$-Adic
QM with $p$-adic time was introduced in the 1980s by Vladimirov and Volovich \cite{V-V-QM3}; the literature on it is extensive. Since the 1980s, a central problem
has been to determine whether $p$-adic QM has a physical meaning. 

In \cite{Zuniga-Mayes}, the author and Mayes studied a $p$-adic version of the infinite
potential well. This model describes the confinement of a particle in a $p$-adic ball. They rigorously solved the Cauchy problem for the Schr\"{o}dinger equation and determined the stationary solutions. By dividing a $p$%
-adic ball into a finite number of sub-balls and using the wavefunctions of the infinite potential well, they constructed a continuous-time quantum walk
(CTQW) on a fully connected graph, where each vertex corresponds to a
sub-ball in the partition of the original ball. This approach shows that one
can associate a CTQW with any $p$-adic Schr\"{o}dinger equation. But this
approach requires explicit formulas for solving the Cauchy problem for the
Schr\"{o}dinger equation, which is a very involved task and probably
impossible for general equations, and, on top of that, the CTQWS produced do
not include the ones used in quantum computing, \cite{Farhi-Gutman}-\cite{Childs et al}. Here, we recall that CTQWS are a particular case of  continuous-time quantum Markov
chains (CTQMCs). In the literature, CTQWs are usually constructed using adjacency matrices, whereas CTQMCs do not require them.

In \cite{Zuniga-QM-2}, the author showed that a large class of $2$-adic Schr\"{o}dinger
equations is the scaling limit of CTQMCs. As a practical result, we construct new types CTQWs on graphs using two symmetric
matrices. This construction includes, as a particular case, the CTQWs
constructed using adjacency matrices. This paper did not include simulations
of the CTQMCs, which are essential for understanding their dynamics.

The goal of this paper is to provide a draft of the theory of CTMCs
associated with the $p$-adic Schr\"{o}dinger equations, which contains  as a
particular case of the CTQWs constructed using adjacency matrices. Here, we
study the CTMCs associated with certain $p$-adic heat equations, and the
CTQMCs related to the Schr\"{o}dinger equations obtained by Wick rotation.
We solve the initial-value problems associated with the mentioned equations
and provide numerical simulations for the corresponding CTQMCs. Behind the
results presented in this paper is the conjecture, already formulated in \cite{Zuniga-QM-2},
that general $p$-adic heat equations and the corresponding Schr\"{o}dinger
equations are scaling limits of CTMCs, respectively CTQMCs.

The connection between $p$-adic Schr\"{o}dinger equations and CTQMCs
demonstrates that $p$-adic QM has physical meaning. The $p$-adic Schr\"{o}dinger operators are non-local, so from the beginning, "spooky actions at a
distance" are allowed. Then, interpreting the experimental confirmation of
the violation of Bell's inequalities as "the universe is non-locally real,"
which means that the need to choose between realism and non-locality implies
that $p$-adic QM admits realism. The implications of this assertion are
explored by the author in \cite{Zuniga-ultimo}; see also \cite{Zuniga-AP}, \cite{Zuniga-PhA} for the discussion of other $p$-adic models in QM.

	Finally, it is worth noting that the developed framework remains valid if we replace the field of $p$-adic numbers with a more general ultrametric field, while keeping the state space as a Hilbert space of complex-valued functions. If the wavefunctions are not complex-valued, we cannot use the Dirac-von Neumann formalism. In particular, the study of CTQWs with $p$-adic transition probabilities is out of the scope of this work.

\end{document}